\begin{document}

\begin{frontmatter}[classification=text]

\title{Shrinkage under Random Projections, \\
  and Cubic Formula Lower Bounds\\ for $\ACz$ \titlefootnote{An extended
abstract of this article 
appeared in the \href{https://doi.org/10.4230/LIPIcs.ITCS.2021.89}%
{Proceeding of the 12th Innovations in Theoretical
Computer Science Conference (ITCS'21)}.}}

\author[filmus]{Yuval Filmus\thanks{Taub Fellow --- supported by
			the Taub Foundations. The research was funded by ISF grant 1337/16.}}
\author[meir]{Or Meir\thanks{Partially supported by the Israel Science Foundation (grant No. 1445/16).}}
\author[tal]{Avishay Tal}

\begin{abstract}
H{\aa}stad showed that any De~Morgan formula (composed of AND, OR
and NOT gates) shrinks by a factor of~$\tilde{O}(p^{2})$ under a random
restriction that leaves each variable alive independently with probability~$p$
{[}SICOMP, 1998{]}. Using this result, he gave an $\widetilde{\Omega}(n^{3})$
formula size lower bound for the Andreev function, which, up to lower
order improvements, remains the state-of-the-art lower bound for any
explicit function.

In this %
paper,  %
we extend the shrinkage result of H{\aa}stad to hold
under a far wider family of random restrictions and their generalization
--- random projections. Based on our shrinkage results, we obtain
an $\widetilde{\Omega}(n^{3})$ formula size lower bound for an explicit
function computable in~$\ACz$. This improves upon the best known
formula size lower bounds for~$\ACz$, that were only quadratic prior
to our work. In addition, we prove that the KRW conjecture {[}Karchmer
et al., Computational Complexity 5(3/4), 1995{]} holds for inner functions
for which the unweighted quantum adversary bound is tight. In particular,
this holds for inner functions with a tight Khrapchenko bound.

Our random projections are tailor-made to the function's structure
so that the function maintains structure even under projection ---
using such projections is necessary, as standard random restrictions
simplify $\ACz$ circuits. In contrast, we show that any De~Morgan
formula shrinks by a quadratic factor under our random projections,
allowing us to prove the cubic lower bound.

Our proof techniques build on %
H{\aa}stad's proof   %
for the simpler
case of balanced formulas. This allows for a significantly simpler
proof at the cost of slightly worse parameters. As such, when specialized
to the case of $p$-random restrictions, our proof can be used as
an exposition of H{\aa}stad's result. 
\end{abstract}

%

\end{frontmatter}

\section{Introduction}

\subsection{Background}

Is there an efficiently solvable computational task
that does not admit an efficient parallel algorithm?  %
A formalization of this question asks,  %
is $\P\not\subseteq\NC^{1}$? The answer
is still unknown. The question can be rephrased as follows (ignoring issues of uniformity): 
is there a
Boolean  %
function in~$\P$ that does not have a (De~Morgan) formula of
polynomial size?   %
(We define De~Morgan formulas formally in \expref{Section}{sec:prel}; informally, these are formulas composed of AND, OR and NOT gates.)

The history of formula lower bounds for functions in~$\P$ goes back
to the 1960s, with the seminal result of Subbotovskaya~\cite{S61}
that introduced the technique of random restrictions. Subbotovskaya
showed that the Parity function on $n$ variables requires formulas
of size at least $\Omega(n^{1.5})$. Khrapchenko~\cite{K72}, using
a different proof technique, showed that in fact the Parity function
on $n$ variables requires formulas of size $\Theta(n^{2})$. Later,
Andreev~\cite{A87} came up with a new explicit function (now known
as the Andreev function) for which he was able to obtain an $\Omega(n^{2.5})$
size lower bound. This lower bound was subsequently improved
in  %
\cite{IN93,PZ93,H98,T14}
to $n^{3-o(1)}$.

The line of work initiated by Subbotovskaya and Andreev relies on
the \emph{shrinkage} of formulas under $p$-random restrictions. A
$p$-random restriction is a randomly chosen partial assignment to
the inputs of a function. Set a parameter $p\in(0,1)$. We fix each
variable independently with probability $1-p$ to a uniformly random
bit, and we keep the variable alive with probability $p$. Under such
a restriction, formulas shrink (in expectation) by a factor more significant
than $p$. Subbotovskaya showed that formulas shrink to
at most $p^{1.5}$ times their original size, whereas subsequent
work~\cite{PZ93,IN93}  %
improved the bound to $p^{1.55}$ and
then to  %
$p^{1.63}$.
Finally, H{\aa}stad~\cite{H98} showed that the shrinkage
exponent of formulas is $2$, or in other words, that 
formulas shrink by a factor of $p^{2-o(1)}$ under $p$-random restrictions.
Tal~\cite{T14} improved the shrinkage factor to $O(p^{2})$ ---
obtaining a tight result, as exhibited by the Parity function.

In a nutshell, shrinkage results are useful %
for  %
proving lower bounds
as long as the explicit function being analyzed maintains structure
under such restrictions and does not trivialize. For example, the
Parity function does not become constant as long as at least one variable
remains alive. Thus any formula $F$ that computes Parity must be
of at least quadratic size, or else the formula $F$ under restriction,
keeping each variable alive with probability $100/n$, would likely
become a constant function, whereas Parity would not. Andreev's idea
is similar, though he manages to construct a function such that under
a random restriction keeping only $\Theta(\log n)$ of the variables,
the formula size should be at least $\tilde{\Omega}(n)$ (in expectation).
This ultimately gives the nearly cubic lower bound.

\subsubsection{The KRW Conjecture}  %

Despite much effort, proving $\P\not\subseteq\NC^{1}$, and even just
breaking the \textsf{cubic barrier} in formula lower bounds, have
remained a challenge for more than two decades. An approach to solve
the $\P$ versus $\NC^{1}$ problem was suggested by Karchmer, Raz
and Wigderson \cite{KRW95}. They conjectured that when composing
two Boolean functions, $f$ and $g$, the formula size of the resulting
function, $f\d g$, is (roughly) the product of the formula sizes
of $f$ and $g$.\footnote{More precisely, the original KRW conjecture \cite{KRW95} concerns
depth complexity rather than formula complexity. The variant of the
conjecture for formula complexity, which is discussed above, was posed
in \cite{GMWW17}. } We will refer to this conjecture as the \textsf{``KRW conjecture''}.
Under the KRW conjecture (and even under weaker variants of it), \cite{KRW95}
constructed a function in $\P$ with no polynomial-size formulas.
It remains a major open challenge to settle the KRW conjecture.

A few special cases of the KRW conjecture are known to be true. The
conjecture holds when either $f$ or $g$ is the AND or the OR function.
H{\aa}stad's result~\cite{H98} and its improvement~\cite{T14}
show that the conjecture holds when the inner function~$g$ is the
Parity function and the outer function~$f$ is any function. This
gives an alternative explanation to the $n^{3-o(1)}$ lower bound
for the Andreev function. Indeed, the Andreev function is at least
as hard as the composition of a maximally hard function $f$ on $\log n$
bits and $g=\parity_{n/\log n}$, where the formula size of $f$ is
$\tilde{\Omega}(n)$ and the formula size of $\parity_{n/\log n}$
is $\Theta(n^{2}/\log^{2}n)$. Since the KRW conjecture holds for
this special case, the formula size of the Andreev function is at
least $\tilde{\Omega}(n^{3})$. In other words, the state-of-the-art
formula size lower bounds for explicit functions follow from a special
case of the KRW conjecture --- the case in which $g$ is the Parity
function. Moreover, this special case follows from the shrinkage offormulas under $p$-random restrictions.

\subsubsection{Bottom-up versus top-down techniques}

Whereas random restrictions are a ``bottom-up'' proof technique
\cite{HJP95}, a different line of work suggested a ``top-down"
approach using the language of communication complexity. The connection
between formula size and communication complexity was introduced in
the seminal %
paper by %
Karchmer and Wigderson \cite{KW90}. They defined
for any Boolean function $f$ a two-party communication problem $\KW_{f}$:
Alice gets an input $x$ such that $f(x)=1$, and Bob gets an input
$y$ such that $f(y)=0$. Their goal is to identify a coordinate $i$
on which $x_{i}\neq y_{i}$, while minimizing their communication.
It turns out that there is a one-to-one correspondence between any
protocol tree solving $\KW_{f}$ and any formula computing the function
$f$. Since protocols naturally traverse the tree from root to leaf,
proving lower bounds on their size or depth is done usually in a top-down
fashion. This framework has proven to be very useful in proving formula
lower bounds in the monotone setting (see, \eg, \cite{KW90,GH92,RW92,KRW95,RM99,GP18,PR17})
and in studying the KRW conjecture (see, \eg, \cite{KRW95,EIRS01,HW93,GMWW17,DM18,KM18,M20,RMNPR20,MS21}).
Moreover,
in a recent paper~\cite{DM18}, Dinur and Meir were able to %
reprove H{\aa}stad's cubic lower bound using the framework of Karchmer
and Wigderson. As Dinur and Meir's proof showed that top-down techniques
can replicate H{\aa}stad's cubic lower bound, a natural question
(which motivated this project) arose: 
\begin{quote}
\emph{Are top-down techniques superior to bottom-up techniques?} 
\end{quote}
Towards that, we focused on a candidate problem: prove a cubic lower
bound for an explicit function in $\ACz$.\footnote{Recall that $\ACz$ is the class of functions computed by constant
depth polynomial size circuits composed of AND and OR gates of unbounded
fan-in, with variables or their negation at the leaves.} Based on the 
work of Dinur and Meir~\cite{DM18}, we suspected that
such a lower bound could be achieved using top-down techniques. We
were also \emph{certain} that the problem cannot be solved using the
random restriction technique. Indeed, in order to prove a lower bound
on a function~$f$ using random restrictions, one should argue that
$f$~remains hard under a random restriction, however, it is well-known
that functions in~$\ACz$ trivialize under $p$-random restrictions
\cite{A83,FSS84,Y85,H86}. Based on this intuition, surely random
restrictions cannot show that a function in $\ACz$ requires cubic
size. Our intuition turned out to be false.

\subsection{Our results}

In this %
article,  %
we construct an explicit function in $\ACz$ which requiresformulas of size $n^{3-o(1)}$. Surprisingly, our proof
is conducted via the bottom-up technique of random projections, which
is a generalization of random restrictions (more details below).
\begin{theorem}
\label{thm:main}There exists a family of Boolean functions $h_{n}\colon\B^{n}\to\B$
for $n\in\N$ such that 
\begin{enumerate}
\item $h_{n}$ can be computed by uniform depth-$4$ unbounded fan-in formulas
of size $O(n^{3})$. 
\item The formula size of $h_{n}$ is at least $n^{3-o(1)}$.
\end{enumerate}
\end{theorem}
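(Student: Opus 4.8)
The plan is to transplant Andreev's argument from $p$-random restrictions to a family of random \emph{projections} designed hand-in-hand with $h_n$. The reason we must change the random operation is the one sketched in the introduction: $h_n$ will sit in $\ACz$, and $\ACz$ functions collapse to constants under ordinary $p$-random restrictions, so the step ``the restricted formula still computes a hard function'' would fail. We instead build $h_n$ so that it retains structure under our projections while, by the shrinkage theorem we establish, every De~Morgan formula still loses a quadratic factor. Concretely, set $k=\Theta(\log n)$ and $m=\Theta(n/\log n)$, take the outer function to be the universal (addressing) function $\addr_k$ on $2^k$ ``program'' bits $z$ together with $k$ selector inputs, and let $g\colon\B^m\to\B$ be an $\ACz$ gadget --- for concreteness a surjectivity-style function $\surj$ on $m$ variables --- chosen so that (P1)~$g$ is computed by a constant-depth unbounded fan-in formula of size $O(m^2)$, and (P2)~there is a distribution over projections of $g$'s variables to literals and constants under which $g$ becomes a single literal, with enough residual randomness (in which variable survives and in the fixed constants) to drive the shrinkage theorem. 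Define
\[
  h_n(z,x^{(1)},\dots,x^{(k)}) \;=\; \addr_k\!\bigl(z,\; g(x^{(1)}),\dots,g(x^{(k)})\bigr),
\]
so $h_n$ has $n=\Theta(2^k+km)$ input bits.

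\emph{Part~1 (the $O(n^3)$ upper bound).} Write $\addr_k$ as its canonical DNF, of size $2^k\cdot k$, and substitute into each selector literal the constant-depth formula for $g$ or $\neg g$ provided by (P1). Merging adjacent gates of the same type yields a uniform bounded-depth --- one checks, depth $4$ --- formula of size $O(2^k\cdot k\cdot m^2)=O(n^3)$. This is the routine direction; the only things to verify are the gate-type bookkeeping that keeps the depth at $4$ and the uniformity of the construction.

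\emph{Part~2 (the $n^{3-o(1)}$ lower bound).} Let $F$ be any De~Morgan formula for $h_n$, and let $\mathcal{R}_p$ be the projection distribution from (P2), normalized so that each of the $km$ block variables survives with probability $\Theta(p)$; by (P2) we may take $p=\Theta(1/m)=n^{-1+o(1)}$, so that every $\rs$ in the support leaves $z$ untouched and collapses each block to a literal of a single variable, with distinct variables for distinct blocks. The shrinkage theorem gives
\[
  \E_{\rs\sim\mathcal{R}_p}\bigl[L(F|_{\rs})\bigr] \;\le\; O(p^2)\cdot L(F) + O(1),
\]
where lower-order factors have been absorbed into the $o(1)$; here $L(\cdot)$ denotes De~Morgan formula size. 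On the other hand, for \emph{every} $\rs$ in the support, $h_n|_{\rs}$ is $\addr_k$ applied to $z$ and $k$ literals of distinct variables, so further restricting $z$ to the truth table of the hardest Boolean function $f$ on $k$ bits turns it into $f$ up to renaming and negating inputs; hence $L(F|_{\rs})\ge L(h_n|_{\rs})\ge L(f)=\widetilde{\Omega}(2^k)=\widetilde{\Omega}(n)$. Combining the two displays, $\widetilde{\Omega}(n)\le O(p^2)\cdot L(F)+O(1)$, and therefore $L(F)\ge \widetilde{\Omega}(n)/p^2=\widetilde{\Omega}(n)\cdot n^{2-o(1)}=n^{3-o(1)}$.

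\textbf{Where the difficulty lies.} Everything above is bookkeeping once the shrinkage theorem is available, so the real work --- and the main obstacle --- is proving shrinkage under the non-standard distribution $\mathcal{R}_p$. H{\aa}stad's argument (even the balanced-formula version that we build on) is tied to the product structure of $p$-random restrictions: each variable is alive independently with probability $p$. Our $\mathcal{R}_p$ is heavily correlated --- it fixes whole program and address coordinates and lets only one coordinate per block survive, drawn from a gadget-specific family. The task is to (i)~distil the right abstract hypotheses on a random-projection family --- a spread / min-entropy condition on which coordinate survives, near-independence across blocks, and genericity of the surviving literal --- under which the shrinkage recursion over the formula tree still closes with exponent $2$; (ii)~re-run that recursion under these hypotheses; and (iii)~check that the concrete $\surj$-adapted family $\mathcal{R}_p$ satisfies them. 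A secondary difficulty is engineering $g$ so that all its requirements hold at once: membership in $\ACz$ with an $O(m^2)$-size formula, collapse to a single literal under a sufficiently random part of $\mathcal{R}_p$, and compatibility of that collapse with the outer addressing layer --- it is this triple constraint that pins down the choice of $g$.
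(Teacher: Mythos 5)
Your proposal follows essentially the same route as the paper: replace Parity in the Andreev construction with a surjectivity-style $\ACz$ gadget, prove the lower bound via shrinkage under random projections tailored to that gadget (so each copy of $g$ collapses to a distinct literal while the outer addressing structure is preserved), and finish with the usual ``hardest function on $k$ bits'' argument together with Andreev's parameter bookkeeping. The abstraction you flag as the real obstacle in step~(i)---a min-entropy/spread condition on which coordinate survives, independence across blocks, genericity of the surviving literal---is exactly what the paper distils into its notions of fixing and hiding projections and the min-entropy Khrapchenko (soft-adversary) bound, and the recursion you describe in step~(ii) is precisely the paper's shrinkage theorem for fixing projections together with the hiding-to-fixing reduction.
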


Prior to our work, the best formula size lower bounds
for  %
an explicit
function in $\ACz$ were only quadratic \cite{N66,CKK12,JuknaBook,BM12}.

Our hard function is a variant of the Andreev function. More specifically,
recall that the Andreev function is based on the composition $f\d g$,
where $f$ is a maximally hard function and $g$ is the Parity function.
Since Parity is not in $\ACz$, we cannot take $g$ to be the Parity
function in our construction. Instead, our hard function is obtained
by replacing the Parity function with the Surjectivity function of Beame and Machmouchi~\cite{BM12}.

As in the case of the Andreev function, we establish the hardness
of our function by proving an appropriate special case of the KRW
conjecture. To this end, we introduce a generalization of the unweighted
adversary method~\cite{A02}, called the \emph{soft-adversary bound}, which we
denote~$\Am(g)$ (see~\expref{Section}{sec:adversary}). We prove the KRW
conjecture for the special case in which the outer function~$f$
is any function, and $g$~is a function whose formula complexity
is bounded tightly by the soft-adversary bound. We then obtain \expref{Theorem}{thm:main}
by applying this version of the KRW conjecture to the case where $g$~is
the Surjectivity function. We note that our KRW result holds in particular
for functions~$g$ with a large Khrapchenko bound, which in turn
implies the known lower bounds in the cases where $g$ is the Parity
function~\cite{H98} and the Majority function~\cite{GTN19}.

Our proof of the special case of the KRW conjecture follows the methodology
of H{\aa}stad~\cite{H98},   %
who proved the special case in which
$g$ is Parity on $m$ variables. H{\aa}stad proved that formulas
shrink by a factor of (roughly) $p^{2}$ under $p$-random restrictions.
Choosing $p=1/m$ shrinks a formula for $f\d g$ by a factor of roughly
$m^{2}$, which coincides with the formula complexity of $g$. On
the other hand, on average each copy of $g$ simplifies to a single
input variable, and so $f\d g$ simplifies to $f$. This shows that
$L(f\d g)\gtrsim L(f)\cdot L(g)$.

Our main technical contribution is a new shrinkage theorem that works
in a far wider range of scenarios than just $p$-random restrictions:
A \emph{random projection }is\emph{ }a generalization of a random
restriction in which each of the input variables~$x_{1},\ldots,x_{n}$
may either be fixed to a constant or be replaced with a literal from
the set $y_{1},\ldots,y_{m},\overline{y_{1}},\ldots,\overline{y_{m}}$
where $y_{1},\ldots,y_{m}$ are new variables. Given a function $g$
with soft-adversary bound $\Am(g)$, we construct a random projection
which, on the one hand, shrinks formulas by a factor of
$\Am(g)$, and on the other hand, simplifies $f\d g$ to $f$. We
thus show that $L(f\d g)\gtrsim L(f)\cdot\Am(g)$, and in particular,
if $\Am(g)\approx L(g)$, then $L(f\d g)\gtrsim L(f)\cdot L(g)$,
just as in H{\aa}stad's proof. Our random projections are tailored
specifically to the structure of the function $f\d g$, ensuring that
$f\d g$ simplifies to $f$ under projection. This enables us to overcome
the aforementioned difficulty. In contrast, $p$-random restrictions
that do not respect the structure of $f\d g$ would likely result
in a restricted function that is much simpler than $f$ and in fact
would be a constant function with high probability.

Our shrinkage theorem applies more generally to two types of random
projections, which we call \emph{fixing projections} and \emph{hiding
projections}. Roughly speaking, fixing projections are random projections in which substituting
a constant for one of the variables $y_{1},\ldots,y_{m}$ results
in a projection that is much more probable. Hiding projections are
random projections in which substituting a constant for a variable~$y_{j}$
hides which of the input variables $x_{1},\ldots,x_{n}$ were mapped
to $\{y_{j},\overline{y}_{j}\}$ before the substitution. We note
that our shrinkage theorem for fixing projections captures H{\aa}stad's
result for $p$-random restrictions as a special case.

The proof of our shrinkage theorem is based on H{\aa}stad's proof~\cite{H98},
but also simplifies it. In particular, we take the simpler argument
that H{\aa}stad uses for the special case of completely balanced
trees, and adapt it to the general case. As such, our proof avoids
a complicated case analysis, at the cost of slightly worse bounds.
Using our bounds, it is nevertheless easy to obtain the $n^{3-o(1)}$
lower bound for the Andreev function. Therefore, one can see the specialization
of our shrinkage result to $p$-random restrictions as an exposition
of H{\aa}stad's cubic lower bound.

\subsubsection{An example: our techniques when specialized to $f\protect\d\protect\majority_{m}$}   %

To illustrate our choice of random projections, we present its instantiation
to the special case of $f\d g$, where $f\colon\B^{k}\to\B$ is non-constant
and $g=\majority_{m}$ for some odd integer $m$. In this case, the
input variables to $f\d g$ are composed of $k$ disjoint blocks,
$B_{1},\ldots,B_{k}$, each containing $m$ variables. We use the
random projection that for each block $B_{i}=\{x_{m(i-1)+1},\ldots,x_{mi}\}$,
picks one variable in the block $B_{i}$ uniformly at random, projects
this variable to the new variable $y_{i}$, and fixes the rest of
the variables in the block in a balanced way so that the number of
zeros and ones in the block is equal (\ie, we have exactly $(m-1)/2$
zeros and $(m-1)/2$ ones). It is not hard to see that under this
choice, $f\d g$ simplifies to $f$. On the other hand, we show that
this choice of random projections shrinks the formula complexity by
a factor of $\approx1/m^{2}$. Combining the two together, we get
that $L(f\d\majority_{m})\gtrsim L(f)\cdot m^{2}$. Note that in this
distribution of random projections, the different coordinates are
not independent of one another, and this feature allows us to maintain
structure.

\subsection{Related work}

Our technique of using tailor-made random projections was inspired
by the celebrated result of Rossman, Servedio, and Tan
\cite{HRST17}   %
that proved an average-case depth hierarchy. In fact, the idea to
use tailor-made random restrictions goes back to H{\aa}stad's thesis
\cite[Chapter~6.2]{H87}. Similar to our case, in
\cite{H87,HRST17},  %
$p$-random restrictions are too crude to separate depth $d$ from
depth $d+1$ circuits. Given a circuit $C$ of depth $d+1$, the main
challenge is to construct a distribution of random restrictions or
projections (tailored to the circuit $C$) that on the one hand maintains
structure for $C$, but on the other hand simplify any depth $d$
circuit $C'$.

\subsection{Paper outline}   %

The paper starts with brief preliminaries in~\expref{Section}{sec:prel}. Then,
in \expref{Section}{sec:Main-results}, we define the notions of fixing and hiding
projections, state the corresponding shrinkage theorems, and sketch
how they can be used to prove a cubic formula lower bound for $\ACz$.
We prove the shrinkage theorems for fixing and hiding projections in
\expref{Sections}{sec:shrinkage-fixing} 
\expref{and}{sec:shrinkage-hiding}, %
 respectively.
In \expref{Section}{sec:join} we provide a brief interlude on the join operation
for projections. The quantum adversary bound, the Khrapchenko bound,
and their relation to hiding projections are discussed in \expref{Section}{sec:hiding-from-complexity}.
Finally, \expref{Section}{sec:applications} contains a proof of \expref{Theorem}{thm:main},
as a corollary of a more general result which is a special case of
the KRW~conjecture. In the same section we also rederive the cubic
lower bound on Andreev's function, and the cubic lower bound on the
Majority-based variant considered in~\cite{GTN19}.

\section{\label{sec:prel}Preliminaries}

Throughout the paper, we use bold letters to denote random variables.
For any $n\in\N$, we denote by $\left[n\right]$ the set $\left\{ 1,\ldots,n\right\} $.
Given a bit~$\sigma\in\B$, we denote its negation by $\overline{\sigma}$.
We assume familiarity with the basic definitions of communication
complexity (see, \eg, \cite{KN_book}). All logarithms in this paper
are base~$2$. 
\begin{definition}
A \emph{(De~Morgan) formula (with bounded fan-in)} is a binary tree,
whose leaves are labeled with literals from the set $\left\{ x_{1},\overline{x}_{1},\ldots,x_{n},\overline{x}_{n}\right\} $,
and whose internal vertices are labeled as AND ($\wedge$) or OR ($\vee$)
gates. The \emph{size} of a formula~$\phi$, denoted $\s(\phi)$,
is the number of leaves in the tree. The \emph{depth} of the formula
is the depth of the tree. A \emph{formula with unbounded fan-in} is
defined similarly, but every internal vertex in the tree can have
any number of children. Unless stated explicitly otherwise, whenever
we say ``formula'' we refer to a formula with bounded fan-in. 
\end{definition}

\begin{definition}
A formula $\phi$ computes a Boolean function~$f\colon\B^{n}\to\B$
in the natural way. The \emph{formula complexity} of a Boolean function~$f\colon\B^{n}\to\B$,
denoted $L(f)$, is the size of the smallest formula that computes~$f$.
The \emph{depth complexity} of~$f$, denoted~$D(f)$, is the smallest
depth of a formula that computes $f$. For convenience, we define
the size and depth of the constant functions to be zero. 
\end{definition}

A basic property of formula complexity is that it is subadditive: 
\begin{fact}
\label{fact:subadditivity}For every two functions $f_{1},f_{2}\colon\B^{n}\to\B$
we have $L(f_{1}\wedge f_{2})\le L(f_{1})+L(f_{2})$ and $L(f_{1}\vee f_{2})\le L(f_{1})+L(f_{2})$. 
\end{fact}

The following theorem shows that every small formula can be ``balanced''
to obtain a shallow formula. 
\begin{theorem}[Formula balancing, \cite{BB94}, following \cite{S71,B74}]
\label{thm:size-depth-tradeoff}For every $s \in \N$ and $\alpha>0$, the following
holds: For every formula~$\phi$ of size~$s$, there exists an equivalent
formula~$\phi'$ of depth at most~$O(2^{\frac{1}{\alpha}}\cdot\log s)$
and size at most~$s^{1+\alpha}$ (where the constant inside the big-O
notation does not depend on the choice of~$\alpha$).
\end{theorem}

\begin{notation}
With a slight abuse of notation, we will often identify a formula~$\phi$
with the function it computes. In particular, the notation $L(\phi)$
denotes \emph{the formula complexity of the function} computed by~$\phi$,
and \emph{not the size of~$\phi$} (which is denoted by $\s(\phi)$). 
\end{notation}

\begin{notation}
Given a Boolean variable $z$, we denote by $z^{0}$ and $z^{1}$
the literals $z$ and $\overline{z}$, respectively. In other words,
$z^{b}=z\oplus b$. 
\end{notation}

\begin{notation}
Given a literal $\ell$, we define $\var(\ell)$ to be the underlying
variable, that is, $\var(z)=\var(\overline{z})=z$. 
\end{notation}

\begin{notation}
\label{notation:partition-to-rectangles}Let $\Pi$ be a deterministic communication
protocol that takes inputs from $\cA\times\cB$, and recall that the
leaves of the protocol induce a partition of $\cA\times\cB$ to combinatorial
rectangles. For every leaf~$\ell$ of~$\Pi$, we denote by $\cA_{\ell}\times\cB_{\ell}$
the combinatorial rectangle that is associated with~$\ell$. 
\end{notation}

We use the framework of Karchmer--Wigerson relations~\cite{KW90},
which relates the complexity of~$f$ to the complexity of a related
communication problem~$\KW_{f}$. 
\begin{definition}[\cite{KW90}]
Let $f\colon\B^{n}\to\B$ be a Boolean function. The \emph{Karchmer--Wigderson
relation} of~$f$, denoted $\KW_{f}$, is the following communication
problem: The inputs of Alice and Bob are strings $a\in f^{-1}(1)$
and $b\in f^{-1}(0)$, respectively, and their goal is to find a coordinate
$i\in\left[n\right]$ such that $a_{i}\ne b_{i}$. Note that such
a coordinate must exist since $f^{-1}(1)\cap f^{-1}(0)=\emptyset$
and hence $a\ne b$. 
\end{definition}

\begin{theorem}[\cite{KW90}, see also \cite{R90}]
\label{thm:KW-connection}Let $f\colon\B^{n}\to\B$. The communication
complexity of~$\KW_{f}$ is equal to~$D(f)$, and the minimal number
of leaves in a protocol that solves~$\KW_{f}$ is $L(f)$. 
\end{theorem}

We use the following two standard inequalities.
\begin{fact}[the AM--GM inequality]
If $x,y\ge 0$ then $\sqrt{x\cdot y}\le\frac{x+y}{2}$.
\end{fact}

\begin{fact}[special case of Cauchy-Schwarz inequality]
\label{fact:special-cauchy-schwarz} For every $t$~non-negative real
numbers $x_{1},\ldots,x_{t}$ we have $\sqrt{x_{1}}+\ldots+\sqrt{x_{t}}\le\sqrt{t}\cdot\sqrt{x_{1}+\ldots+x_{t}}$.   \qedhere
\end{fact}

\section{\label{sec:Main-results}Main results}

In this section we define the notions of fixing and hiding projections
and state their associated shrinkage theorems
(see \expref{Sections}{subsec:Fixing-projections}
\expref{and}{subsec:Hiding-projections}, %
respectively). We then sketch the proof of our cubic formula lower
bounds for~$\ACz$ assuming those theorems (see \expref{Section}{subsec:application-sketch}).
We start by defining projections and the relevant notation.
\begin{definition}
Let $x_{1},\ldots,x_{n}$ and $y_{1},\ldots,y_{m}$ be Boolean variables.
A \emph{projection}~$\pi$ \emph{from $x_{1},\ldots,x_{n}$ to $y_{1},\ldots,y_{m}$}
is a function from the set $\left\{ x_{1},\ldots,x_{n}\right\} $
to the set $\left\{ 0,1,y_{1},\overline{y_{1}},\ldots,y_{m},\overline{y_{m}}\right\} $.
Given such a projection~$\pi$ and a Boolean function~$f\colon\B^{n}\to\B$
over the variables $x_{1},\ldots,x_{n}$, we denote by $f|_{\pi}\colon\B^{m}\to\B$
the function obtained from~$f$ by substituting $\pi(x_{i})$ for
$x_{i}$ for each $i\in\left[n\right]$ in the natural way. Unless
stated explicitly otherwise, all projections in this section are from
\emph{$x_{1},\ldots,x_{n}$} to $y_{1},\ldots,y_{m}$, and all functions
from~$\B^{n}$ to $\B$ are over the variables $x_{1},\ldots,x_{n}$.
A \emph{random projection} is a distribution over projections. 
\end{definition}

\begin{notation}
Let $\pi$ be a projection. For every $j\in[m]$ and bit $\sigma\in\B$,
we denote by $\pi_{y_{j}\gets\sigma}$ the projection that is obtained
from $\pi$ by substituting $\sigma$ for~$y_{j}$. 
\end{notation}

\begin{notation}
With a slight abuse of notation, if a projection~$\pi$ maps all
the variables $x_{1},\ldots,x_{n}$ to constants in~$\B$, we will
sometimes treat it as a binary string in~$\B^{n}$. 
\end{notation}

\subsection{\label{subsec:Fixing-projections}Fixing projections}

Intuitively, a $q$-fixing projection is a random projection~$\p$
in which for every variable~$x_{i}$, the probability that $\p$
maps a variable~$x_{i}$ to a literal~$y_{j}^{\sigma}$ is much
smaller than the probability that $\p$~\emph{fixes~}$y_{j}$ to
a constant, \emph{regardless of the values that~$\pi$ assigns to
the other variables}. This property is essentially the minimal property
that is required in order to carry out the argument of H{\aa}stad
\cite{H98}. Formally, we define $q$-fixing projections as follows. 
\begin{definition}
\label{def:fixing-projections}Let $0\le q_{0},q_{1}\le1$. We say that
a random projection $\p$ is a \emph{$(q_{0},q_{1})$-fixing projection}
if for every projection~$\pi$, every bit $\sigma\in\B$, and every
variable~$x_{i}$, we have 
\begin{equation}
\Pr\!\left[\p(x_{i})\notin\B\text{ and }\p_{\var(\p(x_{i}))\gets\sigma}=\pi\right]\le q_{\sigma}\cdot\Pr\!\left[\p=\pi\right].\label{eq:fixing-projection}
\end{equation}
For shorthand, we say that $\p$ is a $q$-fixing projection, for
$q=\sqrt{q_{0}q_{1}}$. 
\end{definition}

If needed, one can consider without loss of generality only variables
$x_{i}$ such that $\pi(x_{i})\in\{0,1\}$, as otherwise \expref{Equation}{eq:fixing-projection}
holds trivially, with the left-hand side equaling zero.
\begin{example}
In order to get intuition for the definition of fixing projections,
let us examine how this definition applies to random restrictions.
In our terms, a restriction is a projection from $x_{1},\ldots,x_{n}$
to $x_{1},\ldots,x_{n}$ that maps every variable $x_{i}$ either
to itself or to $\B$. Suppose that $\rs$ is any distribution over
restrictions, and that $\rho$ is some fixed restriction. In this
case, the condition of being $q$-fixing can be rewritten as follows:
\[
\Pr\!\left[\rs(x_{i})=x_{i}\text{ and }\rs_{x_{i}\gets\sigma}=\rho\right]\le q_{\sigma}\cdot\Pr\!\left[\rs=\rho\right].
\]
Denote by $\rs',\rho'$ the restrictions obtained from $\rs,\rho$
by truncating $x_{i}$ (\ie, $\rs'=\rs|_{\left\{ x_{1},\ldots,x_{n}\right\} -\left\{ x_{i}\right\} }$).
Using this notation, we can rewrite the foregoing equation as 
\[
\Pr\!\left[\rs(x_{i})=x_{i}\text{ and }\rs'=\rho'\text{ and }\rs_{x_{i}\gets\sigma}(x_{i})=\rho(x_{i})\right]\le q_{\sigma}\cdot\Pr\!\left[\rs(x_{i})=\rho(x_{i})\text{ and }\rs'=\rho'\right].
\]
Now, observe that it is always the case $\rs_{x_{i}\gets\sigma}(x_{i})=\sigma$,
and therefore the probability on the left-hand side is non-zero only
if $\rho(x_{i})=\sigma$. Hence, we can restrict ourselves to the
latter case, and the foregoing equation can be rewritten again as
\[
\Pr\!\left[\rs(x_{i})=x_{i}\text{ and }\rs'=\rho'\right]\le q_{\sigma}\cdot\Pr\!\left[\rs(x_{i})=\sigma\text{ and }\rs'=\rho'\right].
\]
Finally, if we divide both sides by $\Pr\!\left[\rs'=\rho'\right]$,
we obtain the following intuitive condition: 
\[
\Pr\!\left[\rs(x_{i})=x_{i}\mid\rs'=\rho'\right]\le q_{\sigma}\cdot\Pr\!\left[\rs(x_{i})=\sigma\mid\rs'=\rho'\right].
\]
This condition informally says the following: $\rs$~is a fixing
projection if the probability of leaving $x_{i}$ unfixed is at most
$q_{\sigma}$ times the probability of fixing it to $\sigma$, and
this holds regardless of what the restriction assigns to the other
variables.

In particular, it is now easy to see that the classic random restrictions
are fixing projections. Recall that a $p$\emph{-random restriction}
fixes each variable independently with probability $1-p$ to a random
bit. Due to the independence of the different variables, the foregoing
condition simplifies to 
\[
\Pr\!\left[\rs(x_{i})=x_{i}\right]\le q_{\sigma}\cdot\Pr\!\left[\rs(x_{i})=\sigma\right],
\]
and it is easy to see that this condition is satisfied for $q_{0}=q_{1}=\frac{2p}{1-p}$. 
\end{example}

\noindent We prove the following shrinkage theorem for $q$-fixing
projections, which is analogous to the shrinkage theorem of~\cite{H98}
for random restrictions in the case of \emph{balanced} formulas. 
\begin{theorem}[Shrinkage under fixing projections]
\label{thm:shrinkage-fixing}Let $\phi$ be a formula of size~$s$ and
depth~$d$, and let $\p$ be a $q$-fixing projection. Then 
\[
\E\left[L(\phi|_{\p})\right]=O\!\left(q^{2}\cdot d^{2}\cdot s+q\cdot\sqrt{s}\right).
\]
\end{theorem}

\noindent We prove \expref{Theorem}{thm:shrinkage-fixing} in \expref{Section}{sec:shrinkage-fixing}.
It is instructive to compare our shrinkage theorem to the shrinkage
theorem by H{\aa}stad~\cite{H98} for \emph{unbalanced} formulas:
\begin{theorem}[\cite{H98}]
\label{thm:shrinkage-hastad}Let $\phi$ be a formula of size~$s$,
and let $\rs$ be a $p$-random restriction. Then
\[
\E\left[L(\phi|_{\rs})\right]=O\!\left(p^{2}\cdot\left(1+\log^{3/2}\left(\min\left\{ \frac{1}{p},s\right\} \right)\right)\cdot s+p\cdot\sqrt{s}\right).
\]
\end{theorem}

Our \expref{Theorem}{thm:shrinkage-fixing} has somewhat worse parameters compared
to~\expref{Theorem}{thm:shrinkage-hastad}: specifically, the factor of $d^{2}$
does not appear in \expref{Theorem}{thm:shrinkage-hastad}. The reason is that the
proof of~\cite{H98} uses a fairly complicated case analysis in order
to avoid losing that factor, and we chose to skip this analysis in
order to obtain a simpler proof. We did not check if the factor of~$d^{2}$
in our result can be avoided by using a similar case analysis. By
applying formula balancing (\expref{Theorem}{thm:size-depth-tradeoff}) to our shrinkage
theorem, we can obtain the following result, which is independent
of the depth of the formula. 
\begin{corollary}
\label{cor:shrinkage-without-depth-fixing}Let $f\colon\B^{n}\to\B$ be
a function with formula complexity~$s$, and let $\p$ be a $q$-fixing
projection. Then 
\[
\E\left[L(f|_{\p})\right]=q^{2}\cdot s^{1+O\bigl(\frac{1}{\sqrt{\log s}}\bigr)}+q\cdot s^{1/2+O\bigl(\frac{1}{\sqrt{\log s}}\bigr)}.
\]
\end{corollary}

\begin{proof}
By assumption, there exists a formula~$\phi$ of size~$s$ that
computes~$f$. We balance the formula~$\phi$ by applying 
\expref{Theorem}{thm:size-depth-tradeoff}
with $\alpha=\frac{1}{\sqrt{\log s}}$, and obtain a new formula~$\phi'$
that computes~$f$ and has size~$s^{1+\frac{1}{\sqrt{\log s}}}$
and depth $O(2^{\sqrt{\log s}}\cdot\log s)=s^{O(\frac{1}{\sqrt{\log s}})}$.
The required result now follows by applying \expref{Theorem}{thm:shrinkage-fixing}
to~$\phi'$.
\end{proof}

\subsection{\label{subsec:Hiding-projections}Hiding projections}

Intuitively, a hiding projection is a random projection $\p$ in which,
when given $\p_{y_{j}\gets\sigma}$, it is hard to tell in which locations
the variable $y_{j}$ appears in $\p$. Formally, we define $q$-hiding
projections as follows. 
\begin{definition}
\label{def:hiding-projections}Let $0\leq q_{0},q_{1}\leq1$. We say that
a random projection $\p$ is a \emph{$(q_{0},q_{1})$-hiding projection}
if for every projection $\pi$, every bit $\sigma\in\B$, and all
variables $x_{i},y_{j}$, we have 
\[
\Pr\!\left[\p(x_{i})\in\left\{ y_{j},\overline{y_{j}}\right\} \;\middle|\;\p_{y_{j}\gets\sigma}=\pi\right]\leq q_{\sigma},
\]
whenever the event conditioned on has positive probability. For shorthand,
we say that $\p$ is a $q$-hiding projection, for $q=\sqrt{q_{0}q_{1}}$. 
\end{definition}

\noindent To illustrate the definition, consider the following natural
random restriction: given $n$~variables $x_{1},\ldots,x_{n}$, the
restriction chooses a set of $m$~variables uniformly at random,
and fixes all the other variables to random bits. This restriction
is not captured by the notion of $p$-random restrictions or by fixing
projections, but as we demonstrate next, it can be implemented by
hiding projections. We start with the simple case of~$m=1$, and
then consider the general case. 
\begin{example}
\label{ex:random-edge}In order to implement the case of~$m=1$, consider
the random projection $\p$ from $x_{1},\ldots,x_{n}$ to $y$ that
is defined as follows: the projection~$\p$ chooses an index~$\i\in\left[n\right]$
and a bit $\btau\in\B$ uniformly at random, sets $\p(x_{\i})=y^{\btau}$,
and sets $\p(x_{i'})$ to a random bit for all $i'\in\left[n\right]\setminus\left\{ \i\right\} $.
It is clear that $\p$ is essentially equivalent to the random restriction
described above for~$m=1$. We claim that $\p$ is a $\frac{1}{n}$-hiding
projection. To see this, observe that for every bit~$\sigma\in\B$,
the projection $\p_{y\gets\sigma}$ is a uniformly distributed string
in~$\B^{n}$, and moreover, this is true conditioned on any possible
value of~$\i$. In particular, the random variable~$\i$ is independent
of~$\p_{y\gets\sigma}$. Therefore, for every projection~$\pi\in\B^{n}$
and index~$i\in\left[n\right]$ we have 
\[
\Pr\!\left[\p(x_{i})\in\left\{ y,\overline{y}\right\} \;\middle|\;\p_{y\gets\sigma}=\pi\right]=\Pr\!\left[\i=i\;\middle|\;\p_{y\gets\sigma}=\pi\right]=\Pr\!\left[\i=i\right]=\frac{1}{n},
\]
so $\p$ satisfies the definition of a $(q_{0},q_{1})$-hiding projection
with $q_{0}=q_{1}=\frac{1}{n}$. Intuitively, given $\p_{y\gets\sigma}$,
one cannot guess the original location of~$y$ in~$\p$ better than
a random guess.
\end{example}

\begin{example}
\label{ex:random-m-variables-alive}We turn to consider the case of a
general~$m\in\N$. Let $\p$ be the random projection from $x_{1},\ldots,x_{n}$
to $y_{1},\ldots,y_{m}$ that is defined as follows: the projection~$\p$
chooses $m$~distinct indices $\i_{1},\ldots,\i_{m}\in\left[n\right]$
and $m$~bits $\btau_{1},\ldots,\btau_{m}\in\B$ uniformly at random,
sets $\p(x_{\i_{j}})=y_{j}^{\btau_{j}}$ for every $j\in\left[m\right]$,
and sets all the other variables~$x_{i}$ to random bits. It is clear
that $\p$ is essentially equivalent to the random projection described
above. We show that $\p$ is a $\frac{1}{n-m+1}$-hiding projection.
To this end, we should show that for every~$i\in\left[n\right]$,
$j\in\left[m\right]$, and $\sigma\in\B$ we have 
\[
\Pr\!\left[\p(x_{i})\in\left\{ y_{j},\overline{y_{j}}\right\} \;\middle|\;\p_{y_{j}\gets\sigma}=\pi\right]\le\frac{1}{n-m+1}.
\]
For simplicity of notation, we focus on the case where~$j=m$. Observe
that $\p_{y_{m}\gets\sigma}$ reveals the values of $\i_{1},\ldots,\i_{m-1}$,
and the projection of $\p_{y_{m}\gets\sigma}$ on the remaining $n-m+1$
indices is uniform over $\{0,1\}^{n-m+1}$. Moreover, the latter assertion
remains true conditioned on any possible value of~$\i_{m}$ (which
must be different from the known values of $\i_{1},\ldots,\i_{m-1}$).
Therefore given $\i_{1},\ldots,\i_{m-1}$, the random variable $\i_{m}$
(ranging over all indices other than $\i_{1},\ldots,\i_{m-1}$) is
independent of the projection of $\p_{y_{m}\gets\sigma}$ on the $n-m+1$
indices other than $\i_{1},\ldots,\i_{m-1}$. It follows that for
every projection~$\pi$ in the support of~$\p_{y_{m}\gets\sigma}$
and every index~$i\in\left[n\right]$, if $\p(x_{i_{1}})=y_{1}^{\btau_{1}},\ldots,\p(x_{i_{m-1}})=y_{m-1}^{\btau_{m-1}}$
then: 
\begin{align*}
\Pr\!\left[\p(x_{i})\in\left\{ y_{m},\overline{y_{m}}\right\} \;\middle|\;\p_{y_{m}\gets\sigma}=\pi\right] & =\Pr\!\left[\i_{m}=i\;\middle|\;\p_{y_{m}\gets\sigma}=\pi\right]\\
 & =\Pr\!\left[\i_{m}=i\;\middle|\;\i_{1}=i_{1},\ldots,\i_{m-1}=i_{m-1}\right]\le\frac{1}{n-m+1},
\end{align*}
as required.
\end{example}

In \expref{Section}{sec:shrinkage-hiding}, we prove the following shrinkage theorem
for hiding projections. 
\begin{theorem}[Shrinkage under hiding projections]
\label{thm:shrinkage-hiding}Let $\phi$ be a formula of size $s$ and
depth $d$, and let $\p$ be a $q$-hiding projection. Then 
\[
\E\left[L(\phi|_{\p})\right]=O\!\left(m^{4}\cdot q^{2}\cdot d^{2}\cdot s+m^{2}\cdot q\cdot\sqrt{s}\right).
\]
\end{theorem}

Applying formula balancing, we can obtain an analog of 
\expref{Corollary}{cor:shrinkage-without-depth-fixing},
with an identical proof. 
\begin{corollary}
\label{cor:shrinkage-without-depth-hiding} Let $f\colon\B^{n}\to\B$
be a function with formula complexity~$s$, and let $\p$ be a $q$-hiding
projection. Then 
\[
\E\left[L(f|_{\p})\right]=m^{4}\cdot q^{2}\cdot s^{1+O\bigl(\frac{1}{\sqrt{\log s}}\bigr)}+m^{2}\cdot q\cdot s^{\frac{1}{2}+O\bigl(\frac{1}{\sqrt{\log s}}\bigr)}.
\]
\end{corollary}

\begin{remark}
Note that \expref{Theorem}{thm:shrinkage-hiding} 
loses a factor of~$m^{4}$ compared
to \expref{Theorem}{thm:shrinkage-fixing}. While 
this loss might be large in general,
it will be of no importance in our applications. The factor $m^{4}$
can be improved to $m^{2}$ in our actual applications, as discussed
in %
\expref{Section}{subsec:hiding-to-fixing}.  %

The following example shows that the loss of a factor of at least
$m^{2}$ is necessary for \expref{Theorem}{thm:shrinkage-hiding}. 
Let $f=\parity_{n}$
be the Parity function over~$x_{1},\ldots,x_{n}$, and note that
the formula complexity of~$f$ is~$s=\Theta(n^{2})$. Let $\p$
be the random projection from \expref{Example}{ex:random-m-variables-alive}, and
recall that it is a $q$-hiding projection for~$q=\frac{1}{n-m+1}$.
Then, $f|_{\p}$ is the Parity function over $m$~bits, and therefore
its formula complexity is~$\Theta(m^{2})$. It follows that when
$m\leq n/2$, 
\[
\E\left[L(f|_{\p})\right]=\Theta(m^{2})=\Theta(m^{2}\cdot q^{2}\cdot s).
\]
\end{remark}

\subsection{\label{subsec:application-sketch}Sketch: Cubic formula lower bound
for \texorpdfstring{$\protect\ACz$}{AC0}}

We now sketch how our shrinkage theorems can be used to prove the
cubic formula lower bound for $\ACz$. We start by sketching how our
shrinkage theorems can be used to reprove a quadratic formula lower
bound for the surjectivity function~$\surj$ (originally due to~\cite{BM12}).
Then, we sketch a lower bound for compositions of the form~$f\d\surj$,
where $f$~is an arbitrary function. Finally, we combine the latter
bound with an idea of Andreev~\cite{A87} to obtain a cubic lower
bound for an explicit function in~$\ACz$. We note that the first
step is the key idea of the proof, whereas the two latter steps are
relatively standard. A formal proof following this sketch can be found
in \expref{Section}{sec:applications}.

\subsubsection{Lower bound for the surjectivity function}

The surjectivity function~$\surj_{n}\colon\B^{n}\to\B$ \cite{BM12} takes
as input a list of $r$~numbers in~$\left[S\right]$ for some~$S\in\N$,
encoded in binary, and outputs~$1$ if and only if all the numbers
in~$\left[S\right]$ appear in the list. For our purposes, we 
write %
$S'=S-2$ and set $r=\frac{3}{2}\cdot S'+2$. We sketch a proof that
$L(\surj_{n})\approx\Omega(S^{2})$. Since the input length of~$\surj_{n}$,
when measured in bits, is $n=O(S\log S)$, this implies that $L(\surj_{n})\approx\Omega(\frac{n^{2}}{\log^{2}n})$.

To this end, we construct a random projection~$\p$ for $\surj_{n}$
that maps the input variables $x_{1},\ldots,x_{n}$ to a single variable~$y$
as follows: We choose two distinct numbers~$\i,\j\in\left[S\right]$ at random,
and then partition the remaining $S-2$ numbers at random to two sets
$\A$ and $\bB$ of size $\frac{1}{2}(S-2)=\frac{1}{2}S'$ each. Then,
we create a list of $r=\frac{3}{2}\cdot S'+2$ numbers in~$\left[S\right]$
as follows:
\begin{enumerate}
\item Initially, all the places in the list are vacant.

\item We put each of the $\frac{1}{2}S'+1$~numbers in~$\A\cup\left\{ \j\right\} $
at a single random vacant place in the list.
\item We put each of the $\frac{1}{2}S'$~numbers in~$\bB$ in two random vacant
places in the list (for a total of $S'$ places).
\item Note that so far we filled $\frac{3}{2}S'+1=r-1$ vacant places in the list.
The remaining vacant place in the list is left empty.
\end{enumerate}
We set~$\p$ such that it fixes all the inputs variables $x_{1},\ldots,x_{n}$
to constants according to the list we chose, except for the variables
corresponding to the empty place in the list. Then, the $\lceil \log S \rceil$~binary input variables
that correspond to the empty place are mapped to $\left\{ 0,1,y,\overline{y}\right\} $
such that if we substitute $y=1$ we get the number~$\i$ and if
we substitute $y=0$ we get the number~$\j$. Observe that if $y=1$,
then the numbers in~$\A\cup\left\{ \i,\j\right\} $ appear in the
list exactly once and the numbers in~$\bB$ appear exactly twice.
On the other hand, if $y=0$, then the numbers in~$\A$ appear in
the list exactly once, the numbers in~$\bB\cup\left\{ \j\right\} $
appear in the list exactly twice, and $\i$ does not appear at all.

Observe that $\surj_{n}|_{\p}=y$. Indeed, if $y=1$ then the foregoing
list of numbers contains all the numbers in~$\left[S\right]$ and
thus $\surj_{n}$ takes the value~$1$, and if $y=0$ then the list
of numbers does not contain~$\i$ and thus $\surj_{n}$ takes the
value~$0$. We claim that $\p$ is an $O(\frac{1}{S})$-hiding projection.
Intuitively, if we substitute a value~$\sigma\in\B$ in~$y$, then
given the resulting projection~$\p_{y\gets\sigma}$ it is difficult
to tell which input variables were mapped to~$y$ in the original
projection~$\p$. The reason is that in order to tell that, one has
to guess the empty place in the original list. To this end, one has
to guess~$\i$ among the $\approx\frac{1}{2}S'$ numbers that appear
exactly once in the list (if $\sigma=1$) or to guess~$\j$ among
the $\approx\frac{1}{2}S'$ numbers that appear exactly twice in the
list (if~$\sigma=0$).

A bit more formally, observe that $\p$ maps an input variable~$x_{h}$
to~$\left\{ y,\overline{y}\right\} $ only if $x_{h}$~corresponds
to the empty place in the list of~$\p$, and that $\p_{y\gets0}$
and $\p_{y\gets1}$ represent lists of $r$~numbers in~$\left[S\right]$.
Hence, to %
find an upper bound on  %
the probability that 
$\p(x_{h})\in\left\{ y,\overline{y}\right\} $
conditioned on the choice of~$\p_{y\gets\sigma}$ (for $\sigma\in\B$),
we %
need an upper bound on  %
the probability that a place in the list represented
by $\p_{y\gets\sigma}$ was the empty place in the list of~$\p$.
We consider two cases:
\begin{itemize}
\item If $\sigma=1$, then the empty place in the list of~$\p$ contains~$\i$
in the list of~$\p_{y\gets1}$. From the perspective of someone who
only sees $\p_{y\gets1}$, the number~$\i$ could be any number among
the $\frac{1}{2}S'+2$ numbers that appear exactly once in the list
of~$\p_{y\gets1}$. Therefore, conditioned on~$\p_{y\gets1}$, the
empty place in the list of~$\p$ is uniformly distributed among $\frac{1}{2}S'+2$
places in the list of~$\p_{y\gets1}$. Hence, any place in the list
of $\p_{y\gets1}$ has probability of at most~$\frac{1}{\frac{1}{2}S'+2}=O\left(\frac{1}{S}\right)$
to be the empty place in~$\p$.
\item On the other hand, if $\sigma=0$, %
then the empty place in the list
of~$\p$ contains~$\j$ in the list of~$\p_{y\gets0}$. From the
perspective of someone who only sees~$\p_{y\gets0}$, the number~$\j$
could be any number among the $\frac{1}{2}S'+1$ numbers that appear
exactly twice in the list of~$\p_{y\gets0}$. Therefore, conditioned
on~$\p_{y\gets0}$, the empty place in the list of~$\p$ is uniformly
distributed among $2\cdot\left(\frac{1}{2}S'+1\right)$ places in
the list of~$\p_{y\gets0}$. Hence, any place in the list has probability
of at most~$\frac{1}{2\cdot\left(\frac{1}{2}S'+1\right)}=O\left(\frac{1}{S}\right)$
to be the empty place in~$\p$.
\end{itemize}
We showed that $\p$ is an $O(\frac{1}{S})$-hiding projection, and
therefore by our shrinkage theorem for hiding projections (specifically,
\expref{Corollary}{cor:shrinkage-without-depth-hiding}) we get that
\[
\E\left[L(\surj_{n}|_{\p})\right]\lesssim\frac{1}{S^{2}}\cdot L(\surj).
\]
(We ignored the power of $\frac{1}{\sqrt{\log L(\surj_{n})}}$ from
\expref{Corollary}{cor:shrinkage-without-depth-hiding} in order to simplify the exposition.)
On the other hand, as explained above, we have $\surj|_{\pi}=y$,
and therefore $L(\surj_{n}|_{\p})=1$. It follows that $L(\surj_{n})\gtrsim S^{2}$, as
required.

We note that the above proof is a special case of a more general phenonmenon. Specifically, in \expref{Section}{sec:adversary},
we show that such a hiding projection can be constructed for every function that has a large (unweighted) adversary bound. As \cite{BM12} showed, the surjectivity function has such a large adversary bound, and therefore we can construct a hiding projection for it.

\subsubsection{Lower bound for \texorpdfstring{$f\protect\d\protect\surj_{n}$}{f o surj\_n}}

Let $f\colon\B^{k}\to\B$ be any function, and let $\surj_{n}$ be defined
as in the previous section. The block-composition of $f$ and~$\surj_{n}$,
denoted $f\d\surj_{n}$, is the function that takes $k$~inputs $x_{1},\ldots,x_{k}\in\B^{n}$
for $\surj_{n}$ and outputs
\[
(\fs_{n})(x_{1},\ldots,x_{k})=f\left(\surj(x_{1}),\ldots,\surj(x_{k})\right)
\]
We sketch a proof that $L(f\d\surj_{n})\approx\Omega\left(L(f)\cdot\frac{n^{2}}{\log^{2}n}\right)$.
Denote by $x_{i,1},\ldots,x_{i,n}$ the boolean variables of the input
$x_{i}$ for each $i\in\left[k\right]$. Let $\p$ be the $O(\frac{1}{S})$-hiding
projection from the previous section, and let $\p^{k}$ be random
projection from $x_{1,1},\ldots,x_{k,n}$ to $y_{1},\ldots,y_{k}$
that is obtained by joining $k$~independent copies of~$\p$. In
\expref{Section}{sec:join} we show that the join operation maintains the hiding
property, and therefore $\p^{k}$ is an $O(\frac{1}{S})$-hiding projection.
Now, it is not hard to see that $\fs_{n}|_{\p^{k}}=f$ and therefore
$L(\fs_{n}|_{\p^{k}})=L(f)$. On the other hand, by our shrinkage
theorem for hiding projections (specifically, 
\expref{Corollary}{cor:shrinkage-without-depth-hiding})
we get that
\[
\E\left[L(\fs_{n}|_{\p^{k}})\right]\lesssim\frac{1}{S^{2}}\cdot L(\fs_{n}).
\]
(Again, we ignored the power of $\frac{1}{\sqrt{\log L(\fs_{n})}}$
from \expref{Corollary}{cor:shrinkage-without-depth-hiding} in order to simplify the
exposition.) It follows that
\[
L(f)\lesssim\frac{1}{S^{2}}\cdot L(\fs_{n}),
\]
and therefore 
\[
L(f\d\surj_{n})\approx\Omega\left(L(f)\cdot S^{2}\right)=\tilde{\Omega}\left(L(f)\cdot n^{2}\right),
\]
as required.

\subsubsection{Cubic lower bound for \texorpdfstring{$\protect\ACz$}{AC0}}

Let $n=2^{k}\cdot(k+1)$ and let $h_{n}\colon\B^{n}\to\B$ be the function
that takes as input the truth table of a function~$f\colon\B^{k}\to\B$
and $k$~inputs $x_{1},\ldots,x_{k}$ for~$\surj_{2^{k}}$, and
outputs
\[
F(f,x_{1},\ldots,x_{k})=(f\d\surj)(x_{1},\ldots,.x_{k})
\]
As mentioned above, the function~$h_{n}$ is a variant of the Andreev
function in which the parity function is replaced with~$\surj$.
It is not hard to show that $h_{n}$ is in~$\ACz$ (see \expref{Section}{subsec:AC0-bounds}
for details). We sketch a proof that $L(h_{n})\approx\Omega(n^{3-o(1)})$.

To see that $L(h_{n})\approx\Omega(n^{3-o(1)})$, recall that by a
standard counting argument, there exists a function $f\colon\B^{k}\to\B$
such that $L(f)=\Theta\left(\frac{2^{k}}{\log k}\right)$. We now
hardwire~$f$ as an input to~$h_{n}$, which turns the function~$h_{n}$
to the function~$\fs_{2^{k}}$. By the lower bound we have for $\fs_{2^{k}}$,
it follows that
\[
L(F)\ge L(\fs)\approx\tilde{\Omega}\left(L(f)\cdot2^{2k}\right)=\tilde{\Omega}\left(\frac{2^{k}}{\log k}\cdot2^{2k}\right)=\Omega(n^{3-o(1)}),
\]
as required.
\begin{remark}
Note that in the proof of this result, we did not use the shrinkage
theorem for fixing projections --- indeed, we only used the shrinkage
theorem for hiding projections. However, the proof of the shrinkage
theorem for hiding projections itself relies on the shrinkage theorem for fixing
projections.
\end{remark}

\section{\label{sec:shrinkage-fixing}Proof of shrinkage theorem for fixing
projections}

In this section, we prove our the shrinkage theorem for fixing projections.
Our proof is based on the ideas of~\cite{H98}, but the presentation
is different. We start by recalling the definition of fixing projections
and the theorem's statement.
\begin{repdefinition}{\expref{Definition}{def:fixing-projections}}
Let $0\le q_{0},q_{1}\le1$. We say that a random projection $\p$
is a \emph{$(q_{0},q_{1})$-fixing projection} if for every projection~$\pi$,
every bit $\sigma\in\B$, and every variable~$x_{i}$, we have
\[
\Pr\!\left[\p(x_{i})\notin\B\text{ and }\p_{\var(\p(x_{i}))\gets\sigma}=\pi\right]\le q_{\sigma}\cdot\Pr\!\left[\p=\pi\right].
\]
For shorthand, we say that $\p$ is a $q$-fixing projection, for
$q=\sqrt{q_{0}q_{1}}$. 
\end{repdefinition}

\begin{restated}{\expref{Theorem}{thm:shrinkage-fixing}}[Shrinkage under fixing projections]
Let $\phi$ be a formula of size~$s$ and depth~$d$, and let
$\p$ be a $q$-fixing projection. Then 
\[
\E\left[L(\phi|_{\p})\right]=O\!\left(q^{2}\cdot d^{2}\cdot s+q\cdot\sqrt{s}\right).
\]
\end{restated}

Fix a formula $\phi$ of size~$s$ and depth~$d$, and let $\p$
be a $q$-fixing projection. We would like to
get an upper bound on  %
the expectation
of~$L(\phi|_{\p})$. As in~\cite{H98}, we start by 
giving an upper bound %
the probability that the projection~$\p$ shrinks a formula to size~$1$.
Specifically, we prove the following lemma in~\expref{Section}{subsec:shrinkage-to-single-literal}. 
\begin{lemma}
\label{lemma:shrinkage-to-single-literal}Let $f\colon\B^{n}\to\B$ be a
Boolean function, and let $\p$ be a $q$-fixing projection. Then,
\[
\Pr\!\left[L(f|_{\p})=1\right]\le q\cdot\sqrt{L(f)}.
\]
\end{lemma}

Next, we show that to %
find and upper bound on %
the expectation of~$L(\phi|_{\p})$,
it suffices to %
give an upper bound on %
the probability that the projection~$\p$
shrinks two formulas to size~$1$ simultaneously. In order to state
this claim formally, we introduce some notation. 
\begin{notation}
Let $g$ be a gate of~$\phi$. We denote the depth of~$g$ in~$\phi$
by $\dpt_{\phi}(g)$, and omit~$\phi$ if it is clear from context
(the root has depth~$0$). If $g$ is an internal gate, we denote
the %
subformulas  %
that are rooted in its left and right children by
$\lc(g)$ and $\rc(g)$, respectively. Here, by ``internal gate'',
we refer to a gate which is an internal node of the tree, \ie, either
an AND or an OR gate.
\end{notation}

We prove the following lemma, which says that in order to upper-bound
$\E\left[L(\phi|_{\p})\right]$ it suffices to upper-bound, for every
internal gate~$g$, the probability that $\lc(g)$ and $\rc(g)$
shrink to size~$1$ under~$\p$. 
\begin{lemma}
\label{lemma:reduction-to-size-2}For every projection~$\pi$ it holds
that   %
\[
L(\phi|_{\pi})\le\sum_{\text{\rm internal gate \ensuremath{g} of \ensuremath{\phi}}}(\dpt(g)+2)\cdot1_{\left\{ L(\lc(g)|_{\pi})=1\text{\rm  and }L(\rc(g)|_{\pi})=1\right\} }+1_{L(\phi|_{\pi})=1}.
\]
\end{lemma}

We would like to use \expref{Lemmas}{lemma:shrinkage-to-single-literal}
\expref{and}{lemma:reduction-to-size-2}
to prove the shrinkage theorem. As a warm-up, let us make the simplifying
assumption that for every two functions $f_{1},f_{2}\colon\B^{n}\to\B$,
the events $L(f_{1}|_{\p})=1$ and $L(f_{2}|_{\p})=1$ are independent.
If this were true, we could have
given an upper bound on %
$\E\left[L(\phi|_{\p})\right]$
as follows: 
\begin{align*}
\E\left[L(\phi|_{\p})\right] & \le\sum_{\text{int.\ gate \ensuremath{g} of \ensuremath{\phi}}}(\dpt(g)+2)\cdot\E\left[1_{\left\{ L(\lc(g)|_{\p})=1\text{ and }L(\rc(g)|_{\p})=1\right\} }\right] 
 +\E\left[1_{L(\phi|_{\pi})=1}\right]\\
 & \qquad\qquad\qquad\qquad\qquad\qquad\qquad\qquad\qquad
   \qquad\qquad\qquad\qquad\qquad\qquad\ \
\text{(\expref{Lemma}{lemma:reduction-to-size-2})}\\        %
 & \le(d+2)\cdot\sum_{\text{int.\ gate \ensuremath{g} of \ensuremath{\phi}}}\Pr\!\left[L(\lc(g)|_{\p})=1\text{ and }L(\rc(g)|_{\p})=1\right] 
 +\E\left[1_{L(\phi|_{\pi})=1}\right]\\
 & \qquad\qquad\qquad\qquad\qquad\qquad\qquad\qquad\qquad
   \qquad\qquad\qquad\qquad\qquad\quad\ 
\text{(\ensuremath{\phi} is of depth\,\ensuremath{d})}\\ %
 & =(d+2)\cdot\sum_{\text{int.\ gate \ensuremath{g} of \ensuremath{\phi}}}\Pr\!\left[L(\lc(g)|_{\p})=1\right]\cdot\Pr\!\left[L(\rc(g)|_{\p})=1\right] 
  +\E\left[1_{L(\phi|_{\pi})=1}\right]\\
& \qquad\qquad\qquad\qquad\qquad\qquad\qquad\qquad\qquad
  \qquad\qquad\qquad\quad\ 
\text{(simplifying assumption)}\\          %
  & \le(d+2)\cdot\sum_{\text{int.\ gate \ensuremath{g} of \ensuremath{\phi}}}q^{2}\cdot\sqrt{L(\lc(g))\cdot L(\rc(g))}+q\cdot\sqrt{s}
\qquad\qquad\quad\ \
  \text{(\expref{Lemma}{lemma:shrinkage-to-single-literal})}\\  %
& \le q^{2}\cdot(d+2)\cdot\sum_{\text{int.\ gate \ensuremath{g} of \ensuremath{\phi}}}\left(L(\lc(g))+L(\rc(g))\right)+q\cdot\sqrt{s}
  \quad\ \    \text{(AM--GM inequality)}\\              %
 & \le q^{2}\cdot(d+2)\cdot\sum_{\text{int.\ gate \ensuremath{g} of \ensuremath{\phi}}}\left(\s(\lc(g))+\s(\rc(g))\right)+q\cdot\sqrt{s}\\
 & =q^{2}\cdot(d+2)\cdot\sum_{\text{int.\ gate \ensuremath{g} of \ensuremath{\phi}}}\s(g)+q\cdot\sqrt{s}.
\end{align*}  
The last sum counts every leaf~$\ell$ of~$\phi$ once for each
internal ancestor of~$\ell$, so the last expression is equal to
\begin{align*}
 & q^{2}\cdot(d+2)\cdot\sum_{\text{leaf \ensuremath{\ell} of \ensuremath{\phi}}}\dpt(\ell)+q\cdot\sqrt{s}\\
 & \qquad\le q^{2}\cdot(d+2)\cdot\sum_{\text{leaf \ensuremath{\ell} of \ensuremath{\phi}}}d+q\cdot\sqrt{s}\\
 & \qquad=q^{2}\cdot(d+2)\cdot d\cdot s+q\cdot\sqrt{s}\\
 & \qquad=O\!\left(q^{2}\cdot d^{2}\cdot s+q\cdot\sqrt{s}\right),
\end{align*}
which is the bound we wanted. However, the above calculation only
works under our simplifying assumption, which is false: the events
$L(f_{1}|_{\p})=1$ and $L(f_{2}|_{\p})=1$ will often be dependent.
In particular, in order for the foregoing calculation to work, we
need the following inequality to hold: 
\[
\Pr\!\left[L(f_{2}|_{\p})=1\mid L(f_{1}|_{\p})=1\right]\le q\cdot\sqrt{L(f_{2})}.
\]
This inequality holds under our simplifying assumption by \expref{Lemma}{lemma:shrinkage-to-single-literal},
but may not hold in general. Nevertheless, we prove the following
similar statement in \expref{Section}{subsec:shrinkage-to-two-literals}. 
\begin{lemma}
\label{lemma:shrinkage-with-conditioning}Let $\p$ be a $q$-fixing projection.
Let $f_{1},f_{2}\colon\B^{n}\to\B$, let $\sigma,\tau\in\B$, and
let $y_{j}$ be a variable. Then, 
\[
\Pr\!\left[L(f_{2}|_{\p_{y_{j}\gets\sigma}})=1\;\middle|\;f_{1}|_{\p}=y_{j}^{\tau}\right]\le q\cdot\sqrt{L(f_{2})}.
\]
\end{lemma}

Intuitively, \expref{Lemma}{lemma:shrinkage-with-conditioning} breaks the dependency
between the events $L(f_{1}|_{\p})=1$ and $L(f_{2}|_{\p})=1$ by
fixing in~$f_{2}$ the single literal to which $f_{1}$ has shrunk.
We would now like to use \expref{Lemma}{lemma:shrinkage-with-conditioning} to prove
the theorem. To this end, we prove an appropriate variant of \expref{Lemma}{lemma:reduction-to-size-2},
which allows using the projection $\pi_{y_{j}\gets\sigma}$ rather
than $\pi$ in the second function. This variant is motivated by the
following ``one-variable simplification rules'' of \cite{H98},
which are easy to verify. 
\begin{fact}[one-variable simplification rules]
\label{fact:simplification-rules}Let $h\colon\B^{m}\to\B$ be a function
over the variables $y_{1},\ldots,y_{m}$, and let $\sigma\in\B$.
We denote by $h_{y_{j}\gets\sigma}$ the function obtained from $h$
by setting~$y_{j}$ to the bit~$\sigma$. Then: 
\begin{itemize}
\item The function $y_{j}^{\sigma}\vee h$ is equal to the function $y_{j}^{\sigma}\vee h_{y_{j}\gets\sigma}$. 
\item The function $y_{j}^{\sigma}\wedge h$ is equal to the function $y_{j}^{\sigma}\wedge h_{y_{j}\gets\overline{\sigma}}$. 
\end{itemize}
\end{fact}

In order to use the simplification rules, we define, for every internal
gate~$g$ of~$\phi$ and projection~$\pi$, an event $\cE_{g,\pi}$
as follows: if $g$ is an OR gate, then $\cE_{g,\pi}$ is the event
that there exists some literal~$y_{j}^{\sigma}$ (for $\sigma\in\B$)
such that $\lc(g)|_{\pi}=y_{j}^{\sigma}$ and $L(\rc(g)|_{\pi_{y_{j}\gets\sigma}})=1$.
If $g$ is an AND gate, then $\cE_{g,\pi}$ is defined similarly,
except that we replace $\pi_{y_{j}\gets\sigma}$ with $\pi_{y_{j}\gets\overline{\sigma}}$.
We have the following lemma, which is proved in \expref{Section}{subsec:reduction-to-size-2-with-substitution}. 
\begin{lemma}
\label{lemma:reduction-to-size-2-with-substitution}For every projection~$\pi$
we have 
\[
L(\phi|_{\pi})\le\sum_{\text{internal gate \ensuremath{g} of \ensuremath{\phi}}}(\dpt(g)+2)\cdot1_{\cE_{g,\pi}}+1_{L(\phi|_{\pi})=1}.
\]
\end{lemma}

We can now use the following corollary of \expref{Lemma}{lemma:shrinkage-with-conditioning}
to replace our simplifying assumption. 
\begin{corollary}
\label{cor:shrinkage-to-two-literals}For every internal gate~$g$ of~$\phi$
we have 
\[
\Pr\!\left[\cE_{g,\p}\right]\le q^{2}\cdot\sqrt{L(\lc(g))\cdot L(\rc(g))}.
\]
\end{corollary}

\begin{proof}
Let $g$ be an internal gate of~$\phi$. We prove the corollary for
the case where $g$ is an OR gate, and the proof for the case that
$g$ is an AND gate is similar. We have 
\begin{align*}
\Pr\!\left[\cE_{g,\p}\right] & =\Pr\!\left[\exists\,\text{literal }y_{j}^{\sigma}\colon\lc(g)|_{\p}=y_{j}^{\sigma}\text{ and }L(\rc(g)|_{\p_{y_{j}\gets\sigma}})=1\right]\\
 & =\sum_{\text{literal }y_{j}^{\sigma}}\Pr\!\left[\lc(g)|_{\p}=y_{j}^{\sigma}\text{ and }L(\rc(g)|_{\p_{y_{j}\gets\sigma}})=1\right]\\
 & =\sum_{\text{literal }y_{j}^{\sigma}}\Pr\!\left[\left.L(\rc(g)|_{\p_{y_{j}\gets\sigma}})=1\right|\lc(g)|_{\p}=y_{j}^{\sigma}\right]\cdot\Pr\!\left[\lc(g)|_{\p}=y_{j}^{\sigma}\right]\\
 & \le q\cdot\sqrt{L(\rc(g))}\cdot\sum_{\text{literal }y_{j}^{\sigma}}\Pr\!\left[\lc(g)|_{\p}=y_{j}^{\sigma}\right] & \text{(\expref{Lemma}{lemma:shrinkage-with-conditioning})}\\
 & =q\cdot\sqrt{L(\rc(g))}\cdot\Pr\!\left[L(\lc(g)|_{\p})=1\right]\\
 & \le q^{2}\cdot\sqrt{L(\lc(g))\cdot L(\rc(g))}, & \text{(\expref{Lemma}{lemma:shrinkage-to-single-literal})}
\end{align*}
as required. 
\end{proof}
The shrinkage theorem now follows using the same calculation as above,
replacing \expref{Lemma}{lemma:reduction-to-size-2} with \expref{Lemma}{lemma:reduction-to-size-2-with-substitution}
and the simplifying assumption with \expref{Corollary}{cor:shrinkage-to-two-literals}:
\begin{align*}
\E\left[L(\phi|_{\p})\right] & \le(d+2)\cdot\sum_{\text{internal gate \ensuremath{g} of \ensuremath{\phi}}}\Pr\!\left[\cE_{g,\p}\right]+q\cdot\sqrt{s} & \ \text{(\expref{Lemma}{lemma:reduction-to-size-2-with-substitution})}\\
 & \le q^{2}\cdot(d+2)\cdot\sum_{\text{internal gate \ensuremath{g} of \ensuremath{\phi}}}\sqrt{L(\lc(g))\cdot L(\rc(g))}+q\cdot\sqrt{s} & \ \text{(\expref{Corollary}{cor:shrinkage-to-two-literals})}\\
& \le q^{2}\cdot(d+2)\cdot\sum_{\text{internal gate \ensuremath{g} of \ensuremath{\phi}}}\left(L(\lc(g))+L(\rc(g))\right)+q\cdot\sqrt{s} &
\ \text{(AM--GM ineq.)}\\   %
 & =q^{2}\cdot(d+2)\cdot\sum_{\text{internal gate \ensuremath{g} of \ensuremath{\phi}}}\s(g)+q\cdot\sqrt{s}\\
 & \le O(q^{2}\cdot d^{2}\cdot s+q\cdot\sqrt{s}).
\end{align*}

\noindent In the remainder of this section, we prove
\expref{Lemmas}{lemma:shrinkage-to-single-literal},
\expref{and}{lemma:shrinkage-with-conditioning},
\expref{and}{lemma:reduction-to-size-2-with-substitution}.
\begin{remark}
In this paper, we do not prove \expref{Lemma}{lemma:reduction-to-size-2}, since we
do not actually need it to prove our main results in \expref{Section}{sec:applications}.
However, this lemma can be
established using the proof of \expref{Lemma}{lemma:reduction-to-size-2-with-substitution},
with some minor changes. 
\end{remark}

\subsection{\label{subsec:shrinkage-to-single-literal}Proof of \expref{Lemma}{lemma:shrinkage-to-single-literal}}

We begin with proving \expref{Lemma}{lemma:shrinkage-to-single-literal}, restated
next.
\begin{restated}{\expref{Lemma}{lemma:shrinkage-to-single-literal}}
Let $f\colon\B^{n}\to\B$ be a Boolean function, and let $\p$ be
a $q$-fixing projection. Then, 
\[
\Pr\!\left[L(f|_{\p})=1\right]\le q\cdot\sqrt{L(f)}.
\]
\end{restated}

Let $f\colon\B^{n}\to\B$, and let $\cE$ be the set of projections~$\pi$
such that $L(f|_{\pi})=1$. We prove that the probability that $\p\in\cE$
is at most $q\cdot\sqrt{L(f)}$. Our proof follows closely the proof
of \cite[Lemma 4.1]{H98}.

Let $\Pi$ be a protocol that solves~$\KW_{f}$ and has $L(f)$ leaves
(such a protocol exists by \expref{Theorem}{thm:KW-connection}).
Let $\cA$ and $\cB$
be the sets of projections~$\pi$ for which $f|_{\pi}$ is the constants
$1$ and~$0$, respectively. We extend the protocol~$\Pi$ to take
inputs from~$\cA\times\cB$ as follows: when Alice and Bob are given
as inputs the projections $\pi^{A}\in\cA$ and~$\pi^{B}\in\cB$,
respectively, they construct strings $a,b\in\B^{n}$ from $\pi^{A},\pi^{B}$
by substituting~$0$ for all the variables $y_{1},\ldots,y_{m}$,
and invoke~$\Pi$ on the inputs $a$ and~$b$. Observe that $a$
and~$b$ are indeed legal inputs for~$\Pi$ (since $f(a)=1$ and
$f(b)=0$). Moreover, recall that the protocol~$\Pi$ induces a partition
of~$\cA\times\cB$ to combinatorial rectangles, and that we denote
the rectangle of the leaf~$\ell$ by $\cA_{\ell}\times\cB_{\ell}$
(see \expref{Notation}{notation:partition-to-rectangles}).

Our proof strategy is the following: We associate with every projection~$\pi\in\cE$
a leaf of~$\Pi$, denoted $\lp(\pi)$. We consider the two disjoint
events $\cE^{+},\cE^{-}$ that correspond to the event that $f|_{\pi}$
is a single positive literal or a single negative literal, respectively,
and show that for every leaf~$\ell$ we have 
\begin{align}
\Pr\!\left[\p\in\cE^{+}\text{ and }\lp(\p)=\ell\right] & \le q\cdot\sqrt{\Pr\!\left[\p\in\cA_{\ell}\right]\cdot\Pr\!\left[\p\in\cB_{\ell}\right]}\label{shrinkage-to-single-literal-leaf-bound-positive}\\
\Pr\!\left[\p\in\cE^{-}\text{ and }\lp(\p)=\ell\right] & \le q\cdot\sqrt{\Pr\!\left[\p\in\cA_{\ell}\right]\cdot\Pr\!\left[\p\in\cB_{\ell}\right]}.\label{shrinkage-to-single-literal-leaf-bound-negative}
\end{align}
Together, the two inequalities imply that 
\[
\Pr\!\left[\p\in\cE\text{ and }\lp(\p)=\ell\right]\le2q\cdot\sqrt{\Pr\!\left[\p\in\cA_{\ell}\right]\cdot\Pr\!\left[\p\in\cB_{\ell}\right]}.
\]
The desired bound on $\Pr\!\left[\p\in\cE\right]$ will follow by
summing the latter bound over all the leaves~$\ell$ of~$\Pi$.

We start by explaining how to associate a leaf with every projection~$\pi\in\cE^{+}$.
Let $\pi\in\cE^{+}$. Then, it must be the case that $f|_{\pi}=y_{j}$
for some $j\in\left[m\right]$. We define the projections $\pi^{1}=\pi_{y_{j}\gets1}$
and $\pi^{0}=\pi_{y_{j}\gets0}$, and observe that $\pi^{1}\in\cA$
and $\pi^{0}\in\cB$. We now define $\lp(\pi)$ to be the leaf to
which~$\Pi$ arrives when invoked on inputs $\pi^{1}$ and $\pi^{0}$.
Observe that the output of~$\Pi$ at~$\lp(\pi)$ must be a variable~$x_{i}$
that satisfies $\pi(x_{i})\in\left\{ y_{j},\overline{y}_{j}\right\} $,
and thus $\pi_{\var(\pi(x_{i}))\gets1}=\pi^{1}$.

Next, fix a leaf~$\ell$. We prove that $\Pr\!\left[\p\in\cE^{+}\text{ and }\lp(\p)=\ell\right]\le q_{1}\cdot\Pr\!\left[\p\in\cA_{\ell}\right]$.
Let $x_{i}$ be the output of the protocol~$\Pi$ at~$\ell$. Then,
\begin{align*}
\Pr\!\left[\p\in\cE^{+}\text{ and }\lp(\p)=\ell\right] & \le\Pr\!\left[\p^{1}\in\cA_{\ell}\text{ and }\p(x_{i})\notin\B\text{ and }\p_{\var(\p(x_{i}))\gets1}=\p^{1}\right]\\
 & \le\Pr\!\left[\p(x_{i})\notin\B\text{ and }\p_{\var(\p(x_{i}))\gets1}\in\cA_{\ell}\right]\\
 & =\sum_{\pi\in\cA_{\ell}}\Pr\!\left[\p(x_{i})\notin\B\text{ and }\p_{\var(\p(x_{i}))\gets1}=\pi\right]\\
 & \le q_{1}\cdot\sum_{\pi\in\cA_{\ell}}\Pr\!\left[\p=\pi\right]\qquad\qquad\qquad\text{(since \ensuremath{\p} is \ensuremath{(q_{0},q_{1})}-fixing)}\\
 & =q_{1}\cdot\Pr\!\left[\p\in\cA_{\ell}\right].
\end{align*}
Similarly, it can be proved that $\Pr\!\left[\p\in\cE^{+}\text{ and }\lp(\p)=\ell\right]\le q_{0}\cdot\Pr\!\left[\p\in\cB_{\ell}\right]$.
Together, the two bounds imply that 
\[
\Pr\!\left[\p\in\cE^{+}\text{ and }\lp(\p)=\ell\right]\le\sqrt{q_{1}\Pr\!\left[\p\in\cA_{\ell}\right]\cdot q_{0}\Pr\!\left[\p\in\cB_{\ell}\right]}=q\cdot\sqrt{\Pr\!\left[\p\in\cA_{\ell}\right]\cdot\Pr\!\left[\p\in\cB_{\ell}\right]}
\]
for every leaf~$\ell$ of~$\Pi$. We define $\lp(\pi)$ for projections
$\pi\in\cE^{-}$ in an analogous way, and then a similar argument
shows that 
\[
\Pr\!\left[\p\in\cE^{-}\text{ and }\lp(\p)=\ell\right]\le q\cdot\sqrt{\Pr\!\left[\p\in\cA_{\ell}\right]\cdot\Pr\!\left[\p\in\cB_{\ell}\right]}.
\]
It follows that 
\[
\Pr\!\left[\p\in\cE\text{ and }\lp(\p)=\ell\right]\le2q\cdot\sqrt{\Pr\!\left[\p\in\cA_{\ell}\right]\cdot\Pr\!\left[\p\in\cB_{\ell}\right]}.
\]
Finally, let $\cL$ denote the set of leaves of~$\Pi$. It holds
that 
\begin{align*}
\Pr\!\left[\p\in\cE\right] & =\sum_{\ell\in\cL}\Pr\!\left[\p\in\cE\text{ and }\lp(\p)=\ell\right]\\
 & \le2q\cdot\sum_{\ell\in\cL}\sqrt{\Pr\!\left[\p\in\cA_{\ell}\right]\cdot\Pr\!\left[\p\in\cB_{\ell}\right]}\\
 & \le2q\cdot\sqrt{\left|\cL\right|}\cdot\sqrt{\sum_{\ell\in\cL}\Pr\!\left[\p\in\cA_{\ell}\right]\cdot\Pr\!\left[\p\in\cB_{\ell}\right]} & \text{(Cauchy-Schwarz -- see~\expref{Fact}{fact:special-cauchy-schwarz})}\\
 & =2q\cdot\sqrt{L(f)}\cdot\sqrt{\sum_{\ell\in\cL}\Pr\!\left[\p\in\cA_{\ell}\right]\cdot\Pr\!\left[\p\in\cB_{\ell}\right]}.
\end{align*}
We conclude the proof by showing that $\sum_{\ell\in\cL}\Pr\!\left[\p\in\cA_{\ell}\right]\cdot\Pr\!\left[\p\in\cB_{\ell}\right]\le\frac{1}{4}$.
To this end, let $\p^{A},\p^{B}$ be two independent random variables
that are distributed identically to~$\p$. Then, we have 
\begin{align*}  %
\sum_{\ell\in\cL}\Pr\!\left[\p\in\cA_{\ell}\right]\cdot\Pr\!\left[\p\in\cB_{\ell}\right] & =\sum_{\ell\in\cL}\Pr\!\left[\p^{A}\in\cA_{\ell}\right]\cdot\Pr\!\left[\p^{B}\in\cB_{\ell}\right]\\
 & =\sum_{\ell\in\cL}\Pr\!\left[(\p^{A},\p^{B})\in\cA_{\ell}\times\cB_{\ell}\right] %
\qquad\quad\ \ \ \text{(\ensuremath{\p^{A},\p^{B}} are independent)}\\  %
& =\Pr\!\left[(\p^{A},\p^{B})\in\cA\times\cB\right]
\qquad   %
 \text{(\ensuremath{\cA_{\ell}\times\cB_{\ell}} are a partition of\,\ensuremath{\cA\times\cB})}\\    %
 & =\Pr[\p^{A}\in\cA]\cdot\Pr[\p^{B}\in\cB] %
\qquad\qquad\quad \text{(\ensuremath{\p^{A},\p^{B}} are independent)}\\  %
 & =\Pr[\p\in\cA]\cdot\Pr[\p\in\cB]\\
 & \le\Pr[\p\in\cA]\cdot\left(1-\Pr[\p\in\cA]\right) 
\le 1/4.  %
\qquad\qquad \text{(\ensuremath{\cA},\ensuremath{\cB} are disjoint)}\\   %
\end{align*}

\subsection{\label{subsec:shrinkage-to-two-literals}Proof of \expref{Lemma}{lemma:shrinkage-with-conditioning}}

We turn to proving \expref{Lemma}{lemma:shrinkage-with-conditioning}, 
restated next.
\begin{restated}{\expref{Lemma}{lemma:shrinkage-with-conditioning}}
Let $\p$ be a $q$-fixing projection. Let $f_{1},f_{2}\colon\B^{n}\to\B$,
let $\sigma,\tau\in\B$, and let $y_{j}$ be a variable. Then, 
\begin{equation}
\Pr\!\left[L(f_{2}|_{\p_{y_{j}\gets\sigma}})=1\;\middle|\;f_{1}|_{\p}=y_{j}^{\tau}\right]\le q\cdot\sqrt{L(f_{2})}.\label{eq:shrinkage-with-conditioning}
\end{equation}
\end{restated}

Let $\p$ be a $(q_{0},q_{1})$-fixing random projection, and let
$q=\sqrt{q_{0}q_{1}}$. Let $f_{1},f_{2}\colon\B^{n}\to\B$, let $\sigma,\tau\in\B$,
and let $y_{j}$ be a variable. We prove \expref{Equation}{eq:shrinkage-with-conditioning}.
For simplicity, we focus on the case that $f_{1}|_{\p}=y_{j}$, and
the case that $f_{1}|_{\p}=\overline{y_{j}}$ can be dealt with similarly.
The crux of the proof is to show that the random projection $\p_{y_{j}\gets\sigma}$
is essentially a $(q_{0},q_{1})$-fixing projection even when conditioned
on the event~$f_{1}|_{\p}=y_{j}$, and therefore \expref{Equation}{eq:shrinkage-with-conditioning}
is implied immediately by \expref{Lemma}{lemma:shrinkage-to-single-literal}.

In order to carry out this argument, we first establish some notation.
Let $\bI^{+}=\p^{-1}(y_{j})$ and $\bI^{-}=\p^{-1}(\overline{y_{j}})$,
and denote by $\p'$ the projection obtained from~$\p$ by restricting
its domain to~$\left\{ x_{1},\ldots,x_{n}\right\} \backslash(\bI^{+}\cup\bI^{-})$.
We denote by $f_{2,\bI^{+},\bI^{-}}$ the function over $\left\{ x_{1},\ldots,x_{n}\right\} \backslash(\bI^{+}\cup\bI^{-})$
that is obtained from~$f_{2}$ by hard-wiring $\sigma$ and $\overline{\sigma}$
to the variables in~$\bI^{+}$ and~$\bI^{-}$, respectively. Observe
that $f_{2}|_{\p_{y_{j}\gets\sigma}}=f_{2,\bI^{+},\bI^{-}}|_{\p'}$,
so it suffices to prove that for every two disjoint sets $I^{+},I^{-}\subseteq\left\{ x_{1},\ldots,x_{n}\right\} $
we have 
\begin{equation}
\Pr\!\left[L(f_{2,\bI^{+},\bI^{-}}|_{\p'})=1\;\middle|\;f_{1}|_{\p}=y_{j},\bI^{+}=I^{+},\bI^{-}=I^{-}\right]\le q\cdot\sqrt{L(f_{2})}.\label{eq:shrinkage-to-two-literals-conditioned-on-I}
\end{equation}
Let $I^{+},I^{-}\subseteq\left\{ x_{1},\ldots,x_{n}\right\} $ be
disjoint sets, and let $\cI$ be the event that $\bI^{+}=I^{+}$ and
$\bI^{-}=I^{-}$. For convenience, let $K=\left\{ x_{1},\ldots,x_{n}\right\} \backslash(I^{+}\cup I^{-})$
and $Y=\left\{ y_{1},\ldots,y_{m}\right\} \backslash\left\{ y_{j}\right\} $,
so $\p'$ is a random projection from $K$ to~$Y$ when conditioned
on $\cI$. To prove \expref{Equation}{eq:shrinkage-to-two-literals-conditioned-on-I},
it suffices to prove that $\p'$ is a $(q_{0},q_{1})$-fixing projection
when conditioned on the events~$\cI$ and~$f_{1}|_{\p}=y_{j}$,
and then the inequality will follow from \expref{Lemma}{lemma:shrinkage-to-single-literal}.
We first prove that $\p'$ is a $(q_{0},q_{1})$-fixing projection
when conditioning only on the event~$\cI$ (and not on $f_{1}|_{\p}=y_{j}$). 
\begin{proposition}
Conditioned on the event $\cI$, the projection $\p'$ is a $(q_{0},q_{1})$-fixing
projection. 
\end{proposition}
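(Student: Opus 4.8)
The plan is to derive the conditional $(q_0,q_1)$-fixing property of $\p'$ directly from the (unconditional) $(q_0,q_1)$-fixing property of $\p$, using the correspondence between projections on $K$ and their canonical extensions to all of $\{x_1,\ldots,x_n\}$. Fix a projection $\pi'$ from $K$ to $Y$, a bit $\sigma\in\B$, and a variable $x_i\in K$; after multiplying through by $\Pr[\cI]$ (the case $\Pr[\cI]=0$ being vacuous) it suffices to show
\[
\Pr\!\left[\p'(x_i)\notin\B \text{ and } \p'_{\var(\p'(x_i))\gets\sigma}=\pi' \text{ and } \cI\right]\le q_\sigma\cdot\Pr\!\left[\p'=\pi' \text{ and } \cI\right].
\]
Let $\pi$ be the \emph{extension} of $\pi'$: the projection on $\{x_1,\ldots,x_n\}$ that agrees with $\pi'$ on $K$, sends every variable of $I^+$ to $y_j$, and every variable of $I^-$ to $\overline{y_j}$. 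Since $\pi'$ never outputs $y_j$ or $\overline{y_j}$, one checks that $\{\p'=\pi'\}\cap\cI=\{\p=\pi\}$: the event $\{\p=\pi\}$ forces $\p^{-1}(y_j)=I^+$, $\p^{-1}(\overline{y_j})=I^-$, and $\p|_K=\pi'$, while conversely $\cI$ freezes $\p$ on $I^+\cup I^-$ and $\p'=\pi'$ freezes it on $K$. In particular the right-hand side equals $q_\sigma\cdot\Pr[\p=\pi]$.

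The core step is the event inclusion
\[
\{\p'(x_i)\notin\B,\ \p'_{\var(\p'(x_i))\gets\sigma}=\pi',\ \cI\}\ \subseteq\ \{\p(x_i)\notin\B,\ \p_{\var(\p(x_i))\gets\sigma}=\pi\}.
\]
On the left-hand event $\cI$ holds, so $\p(x_i)=\p'(x_i)$ and, because $x_i\notin I^+\cup I^-$, $\p(x_i)\notin\{y_j,\overline{y_j}\}$; together with $\p'(x_i)\notin\B$ this gives $\p(x_i)=y_k^{\tau}$ for some $y_k\in Y$, so $\var(\p(x_i))=\var(\p'(x_i))=y_k\ne y_j$. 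Substituting $y_k\gets\sigma$ commutes with restricting a projection to $K$, so $\p_{y_k\gets\sigma}$ agrees on $K$ with $\p'_{y_k\gets\sigma}=\pi'$; and since $y_k\ne y_j$, it still sends $I^+$ to $y_j$ and $I^-$ to $\overline{y_j}$. Hence $\p_{y_k\gets\sigma}=\pi$, which is the inclusion. Applying the $(q_0,q_1)$-fixing hypothesis for $\p$ with the projection $\pi$, the bit $\sigma$, and the variable $x_i$ then bounds the probability of the left-hand event by $q_\sigma\cdot\Pr[\p=\pi]$, as required.

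I expect the only delicate bookkeeping to be around the random domain of $\p'$ and the routine fact that substitution commutes with restriction: one must check that $\var(\p'(x_i))$ always lies in $Y$, so that the substituted projection is again a projection from $K$ to $Y$, and that conditioning on $\cI$ genuinely pins down $\p$ exactly on $I^+\cup I^-$ and nowhere else. None of this presents a real obstacle; once the extension $\pi$ and the event inclusion are set up, the proposition is immediate from the unconditional fixing hypothesis.
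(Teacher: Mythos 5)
Your proof is correct and takes essentially the same approach as the paper: rewrite the conditional probability by multiplying through by $\Pr[\cI]$, recognize that the relevant events for $\p'$ (conditioned on $\cI$) translate into events for $\p$ involving the unique extension $\pi$ of $\pi'$ (with $I^+\mapsto y_j$, $I^-\mapsto\overline{y_j}$), apply the $(q_0,q_1)$-fixing hypothesis for $\p$, and translate back. The paper phrases the middle step as a sum over projections $\pi$ with $\pi|_K=\pi'$, $\pi^{-1}(y_j)=I^+$, $\pi^{-1}(\overline{y_j})=I^-$; you simply observe that this sum has a single term, which is the extension you construct explicitly.
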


\begin{proof}
We prove that $\p'$ satisfies the definition of a fixing projection.
Let $\pi'$ be a projection from $K$ to $Y$, and let $x_{i}\in K$.
Let $\sigma\in\B$. We have 
\begin{align*}
 & \Pr\!\left[\p'(x_{i})\notin\B\text{ and }\p'_{\var(\p'(x_{i}))\gets\sigma}=\pi'\;\middle|\;\cI\right]\\
 & =\Pr\!\left[\p(x_{i})\notin\B\text{ and }\p_{\var(\p(x_{i}))\gets\sigma}|_{K}=\pi'\text{ and }\cI\right]/\Pr\!\left[\cI\right]\\
 & =\sum_{\substack{\pi\colon\pi|_{K}=\pi',\pi^{-1}(y_{j})=I^{+},\\
\pi^{-1}(\overline{y_{j}})=I^{-}
}
}\Pr\!\left[\p(x_{i})\notin\B\text{ and }\p|_{\var(\p(x_{i}))\gets\sigma}=\pi\right]/\Pr\!\left[\cI\right]\\
 & \le\sum_{\substack{\pi\colon\pi|_{K}=\pi',\pi^{-1}(y_{j})=I^{+},\\
\pi^{-1}(\overline{y_{j}})=I^{-}
}
}q_{\sigma}\cdot\Pr[\p=\pi]/\Pr\!\left[\cI\right] & \text{(since \ensuremath{\p} is \ensuremath{(q_{0},q_{1})}-fixing)}\\
 & =q_{\sigma}\cdot\Pr\!\left[\p|_{K}=\pi'\text{ and }\cI\right]/\Pr\!\left[\cI\right]\\
 & =q_{\sigma}\cdot\Pr\!\left[\p'=\pi'\mid\cI\right],
\end{align*}
as required. 
\end{proof}
We now prove that $\p'$ remains a $(q_{0},q_{1})$-fixing projection
when conditioning on $f_{1}|_{\p}=y_{j}$ in addition to $\cI$. The
crucial observation is that the event $f_{1}|_{\p}=y_{j}$ is essentially
a \emph{filter}, defined next. 
\begin{definition}
A set of projections $\cE$ from $x_{1},\ldots,x_{n}$ to $y_{1},\ldots,y_{m}$
is a \emph{filter} if it is closed under assignment to variables,
\ie, if for every $\pi\in\cE$, every variable~$y_{j}$, and every
bit~$\tau\in\B$, we have $\pi_{y_{j}\gets\tau}\in\cE$. 
\end{definition}

It turns out that the property of a projection being a $(q_{0},q_{1})$-fixing
projection is preserved when conditioning on filters. Formally: 
\begin{proposition}
\label{prop:conditioning-on-a-filter}Let $\cE$ be a filter and let $\p^{*}$
be a $(q_{0},q_{1})$-fixing projection. Then, $\p^{*}|\cE$ is a
$(q_{0},q_{1})$-fixing projection. 
\end{proposition}

\begin{proof}
Let $\pi^{*}$ be a projection, and let $x_{i}$ be a variable. Let
$\sigma\in\B$. We would like to prove that 
\[
\Pr\!\left[\p^{*}(x_{i})\notin\B\text{ and }\p_{\var(\p^{*}(x_{i}))\gets\sigma}^{*}=\pi^{*}\;\middle|\;\p^{*}\in\cE\right]\le q_{\sigma}\cdot\Pr\!\left[\p^{*}=\pi^{*}\mid\p^{*}\in\cE\right].
\]
If $\pi^{*}\notin\cE$, then both sides of the equation are equal
to zero: this is obvious for the right-hand side, and holds for the
left-hand side since if there is a projection~$\pi^{0}\in\cE$ and
a variable $y_{j}$ such that $\pi_{y_{j}\gets\sigma}^{0}=\pi^{*}$
then it must be the case that $\pi^{*}\in\cE$ by the definition of
a filter. Thus, we may assume that $\pi^{*}\in\cE$. Now, it holds
that 
\begin{align*}
 & \Pr\!\left[\p^{*}(x_{i})\notin\B\text{ and }\p_{\var(\p^{*}(x_{i}))\gets\sigma}^{*}=\pi^{*}\;\middle|\;\p^{*}\in\cE\right]\\
 & \le\Pr\!\left[\p^{*}(x_{i})\notin\B\text{ and }\p_{\var(\p^{*}(x_{i}))\gets\sigma}^{*}=\pi^{*}\right]/\Pr\!\left[\p^{*}\in\cE\right]\\
 & \le q_{\sigma}\cdot\Pr\!\left[\p^{*}=\pi^{*}\right]/\Pr\!\left[\p^{*}\in\cE\right] & \text{(\ensuremath{\p^{*}} is \ensuremath{(q_{0},q_{1})}-fixing)}\\
 & =q_{\sigma}\cdot\Pr\!\left[\left.\p^{*}=\pi^{*}\right|\p^{*}\in\cE\right], & \text{(\ensuremath{\pi^{*}\in\cE})}
\end{align*}
as required. 
\end{proof}
Consider the event $f_{1}|_{\p}=y_{j}$. Viewed as a set of projections
from $x_{1},\ldots,x_{n}$ to $y_{1},\ldots,y_{m}$, this event is
not a filter, since it is not closed under assignments to~$y_{j}$.
However, this event is closed under assignments \emph{to all variables
except~$y_{j}$}: when $f_{1}|_{\p}=y_{j}$, the equality continues
to hold even if the variables in~$Y$ are fixed to constants. Moreover,
observe that conditioned on $\cI$, the event $f_{1}|_{\p}=y_{j}$
depends only on the values that $\p$ assigns to~$K$. Thus, we can
view the event $f_{1}|_{\p}=y_{j}$ as a set of projections from $K$
to $Y$, and taking this view, \emph{this event is a filter}. Since
$\p'$ is a $(q_{0},q_{1})$-fixing projection from $K$ to $\left\{ y_{1},\ldots,y_{m}\right\} \backslash\left\{ y_{j}\right\} $
when conditioned on $\cI$, we conclude that it is a $(q_{0},q_{1})$-fixing
projection when conditioned on both $\cI$ and $f_{1}|_{\p}=y_{j}$.
It follows by \expref{Lemma}{lemma:shrinkage-to-single-literal} that 
\[
\Pr\!\left[L(f_{2,I^{+},I^{-}}|_{\p'})=1\;\middle|\;f_{1}|_{\p}=y_{j},\bI^{+}=I^{+},\bI^{-}=I^{-}\right]\le q\cdot\sqrt{L(f_{2,I^{+},I^{-}})}\le q\cdot\sqrt{L(f_{2})},
\]
as required.

\subsection{\label{subsec:reduction-to-size-2-with-substitution}Proof of \expref{Lemma}{lemma:reduction-to-size-2-with-substitution}}

Finally, we prove \expref{Lemma}{lemma:reduction-to-size-2-with-substitution},
restated next.
\begin{restated}{\expref{Lemma}{lemma:reduction-to-size-2-with-substitution}}
For every projection~$\pi$ we have 
\begin{equation}
L(\phi|_{\pi})\le\sum_{\text{internal gate \ensuremath{g} of \ensuremath{\phi}}}(\dpt(g)+2)\cdot1_{\cE_{g,\pi}}+1_{L(\phi|_{\pi})=1}.\label{eq:reduction-to-size-2-with-substitution}
\end{equation}
\end{restated}

Let $\pi$ be a projection. We prove that $\pi$ satisfies \expref{Equation}{eq:reduction-to-size-2-with-substitution}.
Recall that $\cE_{g,\pi}$ is the event that there exists some literal
$y_{j}^{\sigma}$ such that $\lc(g)|_{\pi}=y_{j}^{\sigma}$ and
\begin{itemize}
\item $L(\rc(g)|_{\pi_{y_{j}\gets\sigma}})=1$ if $g$ is an OR gate, or 
\item $L(\rc(g)|_{\pi_{y_{j}\gets\overline{\sigma}}})=1$ if $g$ is an
AND gate. 
\end{itemize}
We prove \expref{Equation}{eq:reduction-to-size-2-with-substitution} by induction.
If $\phi$ consists of a single leaf, then the upper bound clearly
holds. Otherwise, the root of~$\phi$ is an internal gate. Without
loss of generality, assume that the root is an OR gate. We denote
the %
subformulas  %
rooted at the left and right children of the root
by $\phi_{\ell}$ and $\phi_{r}$, respectively. We consider several
cases: 
\begin{itemize}
\item If $L(\phi|_{\pi})=1$, then the upper bound clearly holds. 
\item Suppose that $L(\phi_{\ell}|_{\pi})\ge2$. By the subadditivity of
formula complexity (\expref{Fact}{fact:subadditivity}), we have 
\[
L(\phi|_{\pi})\le L(\phi_{\ell}|_{\pi})+L(\phi_{r}|_{\pi}).
\]
By the induction hypothesis, we have 
\begin{align*}
L(\phi_{\ell}|_{\pi}) & \le\sum_{\text{internal gate }g\text{ of }\phi_{\ell}}(\dpt_{\phi_{\ell}}(g)+2)\cdot1_{\cE_{g,\pi}}+1_{L(\phi_{\ell}|_{\pi})=1}\\
 & =\sum_{\text{internal gate }g\text{ of }\phi_{\ell}}(\dpt_{\phi_{\ell}}(g)+2)\cdot1_{\cE_{g,\pi}} & (L(\phi_{\ell}|_{\pi})\ge2)\\
 & =\sum_{\text{internal gate }g\text{ of }\phi_{\ell}}(\dpt_{\phi}(g)+1)\cdot1_{\cE_{g,\pi}},
\end{align*}
where the equality holds since $\dpt_{\phi_{\ell}}(g)=\dpt_{\phi}(g)-1$
for every gate~$g$ of $\phi_{\ell}$. Since $L(\phi_{\ell}|_{\pi})\ge2$,
at least one of the terms in the last sum must be non-zero, so it
holds that 
\[
\sum_{\text{internal gate }g\text{ of }\phi_{\ell}}(\dpt_{\phi}(g)+1)\cdot1_{\cE_{g,\pi}}\le\sum_{\text{internal gate }g\text{ of }\phi_{\ell}}(\dpt_{\phi}(g)+2)\cdot1_{\cE_{g,\pi}}-1.
\]
Next, by the induction hypothesis we have 
\begin{align*}
L(\phi_{r}|_{\pi}) & \le\sum_{\text{internal gate }g\text{ of }\phi_{r}}(\dpt_{\phi_{r}}(g)+2)\cdot1_{\cE_{g,\pi}}+1_{L(\phi_{r}|_{\pi})=1}\\
 & \le\sum_{\text{internal gate }g\text{ of }\phi_{r}}(\dpt_{\phi}(g)+2)\cdot1_{\cE_{g,\pi}}+1.
\end{align*}
By combining the two bounds, we get that 
\begin{align*}
L(\phi|_{\pi}) & \le L(\phi_{\ell}|_{\pi})+L(\phi_{r}|_{\pi})\\
 & \le\sum_{\text{internal gate }g\text{ of }\phi_{\ell}}(\dpt_{\phi}(g)+2)\cdot1_{\cE_{g,\pi}}-1\\
 & \qquad+\sum_{\text{internal gate }g\text{ of }\phi_{r}}(\dpt_{\phi}(g)+2)\cdot1_{\cE_{g,\pi}}+1\\
 & \le\sum_{\text{internal gate }g\text{ of }\phi}(\dpt(g)+2)\cdot1_{\cE_{g,\pi}}\\
 & \le\sum_{\text{internal gate }g\text{ of }\phi}(\dpt(g)+2)\cdot1_{\cE_{g,\pi}}+1_{L(\phi|_{\pi})=1},
\end{align*}
as required. 
\item If $L(\phi_{r}|_{\pi})\ge2$, then we use the same argument of the
previous case by exchanging $\phi_{\ell}$ and $\phi_{r}$. 
\item Suppose that $L(\phi|_{\pi})\ge2$, $L(\phi_{\ell}|_{\pi})\le1$ and
$L(\phi_{r}|_{\pi})\le1$. Then, it must be the case that $L(\phi|_{\pi})=2$
and also that $L(\phi_{\ell}|_{\pi})=1$ and $L(\phi_{r}|_{\pi})=1$
(or otherwise $L(\phi|_{\pi})=1$). In particular, $\phi_{\ell}|_{\pi}$~is
equal to some literal $y_{j}^{\sigma}$. It follows that $\phi|_{\pi}=y_{j}^{\sigma}\vee\phi_{r}|_{\pi}$,
and by the one-variable simplification rules 
(\expref{Fact}{fact:simplification-rules}),
this function is equal to 
\[
y_{j}^{\sigma}\vee\left(\phi_{r}|_{\pi}\right)_{y_{j}\gets\sigma}=y_{j}^{\sigma}\vee\phi_{r}|_{\pi_{y_{j}\gets\sigma}}.
\]
Thus, it must be the case that $L(\phi_{r}|_{\pi_{y_{j}\gets\sigma}})=1$
(since $L(\phi|_{\pi})=2$). It follows that if we let $g$ be the
root of~$\phi$, then the event $\cE_{g,\pi}$ occurs and so 
\[
(\dpt(g)+2)\cdot1_{\cE_{g,\pi}}=2=L(\phi|_{\pi}),
\]
so the desired upper bound holds. 
\end{itemize}
We proved that the upper bound holds in each of the possible cases,
so the required result follows.

\section{\label{sec:shrinkage-hiding}Proof of shrinkage theorem for hiding
projections}

In this section we prove our the shrinkage theorem for hiding projections.
We start by recalling the definition of hiding projections and their
shrinkage theorem.
\begin{repdefinition}{\expref{Definition}{def:hiding-projections}}
Let $0\leq q_{0},q_{1}\leq1$. We say that a random projection $\p$
is a \emph{$(q_{0},q_{1})$-hiding projection} if for every projection
$\pi$, every bit $\sigma\in\B$, and all variables $x_{i},y_{j}$,
we have 
\[
\Pr\!\left[\p(x_{i})\in\left\{ y_{j},\overline{y_{j}}\right\} \;\middle|\;\p_{y_{j}\gets\sigma}=\pi\right]\leq q_{\sigma},
\]
whenever the event conditioned on has positive probability. For shorthand,
we say that $\p$ is a $q$-hiding projection, for $q=\sqrt{q_{0}q_{1}}$. 
\end{repdefinition}

\begin{restated}{\expref{Theorem}{thm:shrinkage-hiding}}[Shrinkage under hiding projections]
Let $\phi$ be a formula of size $s$ and depth $d$, and let $\p$
be a $q$-hiding projection. Then 
\[
\E\left[L(\phi|_{\p})\right]=O\!\left(m^{4}\cdot q^{2}\cdot d^{2}\cdot s+m^{2}\cdot q\cdot\sqrt{s}\right).
\]
\end{restated}

The crux of the proof of \expref{Theorem}{thm:shrinkage-hiding} is that hiding projections
can be converted into fixing projections. This is captured by the
following result. 
\begin{lemma}
\label{lemma:hiding-to-fixing}Let $\p$ be a $q$-hiding projection from
$x_{1},\ldots,x_{n}$ to $y_{1},\ldots,y_{m}$. Then, there exists
a $(4m^{2}\cdot q)$-fixing projection $\p'$ from $x_{1},\ldots,x_{n}$
to~$y_{1},\ldots,y_{m}$ and an event~$\cH$ of probability at least~$\frac{1}{2}$
such that $\p'=\p$ conditioned on~$\cH$. Furthermore, the event
$\cH$ is independent of $\p$.
\end{lemma}

We prove \expref{Lemma}{lemma:hiding-to-fixing} in \expref{Section}{subsec:hiding-to-fixing}.
By combining \expref{Lemma}{lemma:hiding-to-fixing} with our shrinkage theorem for
fixing projections, we obtain the following shrinkage theorem for
hiding projections. 
\begin{restated}{\expref{Theorem}{thm:shrinkage-hiding}}
Let $\phi$ be a formula of size $s$ and depth $d$, and let $\p$
be a $q$-hiding projection. Then 
\[
\E\left[L(\phi|_{\p})\right]=O\!\left(m^{4}\cdot q^{2}\cdot d^{2}\cdot s+m^{2}\cdot q\cdot\sqrt{s}\right).
\]
\end{restated}

\begin{proof}
Let $\p'$ and~$\cH$ be the $(4m^{2}\cdot q)$-fixing projection
and event that are obtained from~$\p$ by \expref{Lemma}{lemma:hiding-to-fixing}.
Since $\Pr\left[\cH\right]\geq\frac{1}{2}$, we have 
\[
\E\left[L(\phi|_{\p'})\right]\geq\Pr\!\left[\cH\right]\cdot\E\left[L(\phi|_{\p'})\;\middle|\;\cH\right]\geq\frac{1}{2}\cdot\E\left[L(\phi|_{\p})\;\middle|\;\cH\right]=\frac{1}{2}\cdot\E\left[L(\phi|_{\p})\right],
\]
where the second inequality holds since conditioned on~$\cH$ it
holds that $\p'=\p$, and the last equality holds since the event~$\cH$
is independent of~$\p$. On the other hand, by applying 
\expref{Theorem}{thm:shrinkage-fixing}
to~$\p'$, we obtain that 
\[
\E\left[L(\phi|_{\p'})\right]=O(m^{4}\cdot q^{2}\cdot d^{2}\cdot s+m^{2}\cdot q\cdot\sqrt{s}).
\]
The theorem follows by combining the two bounds. 
\end{proof}

\subsection{\label{subsec:hiding-to-fixing}Proof of \expref{Lemma}{lemma:hiding-to-fixing}}

We use the following straightforward generalization of the property
of hiding projections. 
\begin{claim}
\label{claim:generalized-hiding}Let $\p$ be a $(q_{0},q_{1})$-hiding
projection, and let $\cbE$ be a random set of projections that is
independent of~$\p$. Then, for every $\sigma\in\B$, we have
\[
\Pr\!\left[\p(x_{i})\in\left\{ y_{j},\overline{y_{j}}\right\} \;\middle|\;\p_{y_{j}\gets\sigma}\in\cbE\right]\le q_{\sigma}.
\]
\end{claim}
\begin{proof}
Let $\p$ and $\cbE$ be as in the claim, and let $\sigma\in\B$.
We have 
\begin{align*}
 & \Pr\!\left[\p(x_{i})\in\left\{ y_{j},\overline{y_{j}}\right\} \;\middle|\;\p_{y_{j}\gets\sigma}\in\cbE\right]\\
 & =\sum_{\pi}\Pr\!\left[\p(x_{i})\in\left\{ y_{j},\overline{y_{j}}\right\} \;\middle|\;\p_{y_{j}\gets\sigma}=\pi\text{~and~}\pi\in\cbE\right]\cdot\Pr\!\left[\p_{y_{j}\gets\sigma}=\pi\;\middle|\;\p_{y_{j}\gets\sigma}\in\cbE\right]\\
 & =\sum_{\pi}\Pr\!\left[\p(x_{i})\in\left\{ y_{j},\overline{y_{j}}\right\} \;\middle|\;\p_{y_{j}\gets\sigma}=\pi\right]\cdot\Pr\!\left[\p_{y_{j}\gets\sigma}=\pi\;\middle|\;\p_{y_{j}\gets\sigma}\in\cbE\right]\tag{\ensuremath{\p} and \ensuremath{\cbE} are independent}\\
 & \le\sum_{\pi}q_{\sigma}\cdot\Pr\!\left[\p_{y_{j}\gets\sigma}=\pi\;\middle|\;\p_{y_{j}\gets\sigma}\in\cbE\right]\tag{\ensuremath{\p} is \ensuremath{(q_{0},q_{1})}-hiding}\\
 & =q_{\sigma}.\qedhere
\end{align*}
\end{proof}

We turn to proving \expref{Lemma}{lemma:hiding-to-fixing},
restated next.
\begin{restated}{\expref{Lemma}{lemma:hiding-to-fixing}}
Let $\p$ be a $q$-hiding projection from $x_{1},\ldots,x_{n}$
to $y_{1},\ldots,y_{m}$. Then, there exists a $(4m^{2}\cdot q)$-fixing
projection $\p'$ from $x_{1},\ldots,x_{n}$ to~$y_{1},\ldots,y_{m}$
and an event~$\cH$ of probability at least~$\frac{1}{2}$ such
that $\p'=\p$ conditioned on~$\cH$. Furthermore, the event $\cH$
is independent of $\p$.
\end{restated}

Suppose that $\p$ is a $(q_{0},q_{1})$-hiding projection from $x_{1},\ldots,x_{n}$
to $y_{1},\ldots,y_{m}$. Let $\rs$ be a $\left(1-\frac{1}{2m}\right)$-random
restriction over $y_{1},\ldots,y_{m}$, \ie, $\rs$ is the random
projection from $y_{1},\ldots,y_{m}$ to $y_{1},\ldots,y_{m}$ that
assigns each $y_{j}$ independently as follows: 
\[
\rs(y_{j})=\begin{cases}
y_{j} & \text{with probability \ensuremath{1-\frac{1}{2m}}},\\
0 & \text{with probability \ensuremath{\frac{1}{4m}}},\\
1 & \text{with probability \ensuremath{\frac{1}{4m}}}.
\end{cases}
\]
For convenience, we define $\rs(0)=0$, $\rs(1)=1$, and $\rs(\overline{y_{j}})=\overline{\rs(y_{j})}$
for every $j\in\left[m\right]$. We now choose the random projection~$\p'$
to be the composition~$\rs\circ\p$, and define the event~$\cH$
to be the event that $\rs(y_{j})=y_{j}$ for every $j\in\left[m\right]$.
Observe that the event $\p'=\p$ occurs whenever~$\cH$ occurs, and
moreover 
\[
\Pr\!\left[\cH\right]=\left(1-\frac{1}{2m}\right)^{m}\ge1-\frac{m}{2m}=\frac{1}{2},
\]
as required. Moreover, the event~$\cH$ is independent of $\p$,
since $\rs$ is independent of $\p$. We show that $\p'$ is a $(4\cdot m^{2}\cdot q_{0},4\cdot m^{2}\cdot q_{1})$-fixing
projection. To this end, we should show that for every projection
$\pi'$, every bit $\sigma\in\B$, and every variable $x_{i}$, 
\begin{equation}
\Pr\!\left[\p'(x_{i})\notin\B\text{ and }\p_{\var(\p'(x_{i}))\gets\sigma}'=\pi'\right]\leq4\cdot m^{2}\cdot q_{\sigma}\cdot\Pr\!\left[\p'=\pi'\right].\label{eq:shrinkage-hiding-goal-after-summation}
\end{equation}
This is implied by the following inequality, which we will prove for
all $j\in\left[m\right]$: 
\begin{equation}
\Pr\!\left[\p'(x_{i})\in\left\{ y_{j},\overline{y_{j}}\right\} \text{ and }\p_{y_{j}\gets\sigma}'=\pi'\right]\leq4\cdot m\cdot q_{\sigma}\cdot\Pr\!\left[\p'=\pi'\right].\label{eq:shrinkage-hiding-goal}
\end{equation}
Let $j\in\left[m\right]$ and let $\rs_{-j}$ be the random projection
obtained by restricting~$\rs$ to the domain~$\left\{ y_{1},\ldots,y_{m}\right\} \setminus\left\{ y_{j}\right\} $.
First, observe that if $\rs(y_{j})=\sigma$ then $\p'=\rs\circ\p=\rs_{-j}\circ\p_{y_{j}\gets\sigma}$,
and therefore 
\[
\Pr\!\left[\p'=\pi'\right]\geq\Pr\!\left[\rs(y_{j})=\sigma\text{ and }\rs_{-j}\circ\p_{y_{j}\gets\sigma}=\pi'\right]=\frac{1}{4m}\cdot\Pr\!\left[\rs_{-j}\circ\p_{y_{j}\gets\sigma}=\pi'\right],
\]
where the equality holds since $\rs(y_{j})$ is independent of $\rs_{-j}$
and $\p$. In the rest of this section we will prove the following
inequality, which together with the last inequality will imply \expref{Equation}{eq:shrinkage-hiding-goal}:
\begin{align}
\Pr\!\left[\p'(x_{i})\in\left\{ y_{j},\overline{y_{j}}\right\} \text{ and }\p_{y_{j}\gets\sigma}'=\pi'\right] & \le q_{\sigma}\cdot\Pr\!\left[\rs_{-j}\circ\p_{y_{j}\gets\sigma}=\pi'\right].\label{eq:shrinkage-hiding-auxiliary-goal}
\end{align}
To this end, observe that the event $\p'(x_{i})\in\left\{ y_{j},\overline{y_{j}}\right\} $
happens if and only if $\p(x_{i})\in\left\{ y_{j},\overline{y_{j}}\right\} $
and $\rs(y_{j})=y_{j}$, and therefore 
\begin{align*}
 & \Pr\!\left[\p'(x_{i})\in\left\{ y_{j},\overline{y_{j}}\right\} \text{ and }\p_{y_{j}\gets\sigma}'=\pi'\right]\\
 & =\Pr\!\left[\p(x_{i})\in\left\{ y_{j},\overline{y_{j}}\right\} \text{ and }\rs(y_{j})=y_{j}\text{ and }\p_{y_{j}\gets\sigma}'=\pi'\right].
\end{align*}
Next, observe that if $\rs(y_{j})=y_{j}$ then $\p_{y_{j}\gets\sigma}'=\rs_{-j}\circ\p_{y_{j}\gets\sigma}$.
It follows that the latter expression is equal to 
\begin{align*}
 & \Pr\!\left[\p(x_{i})\in\left\{ y_{j},\overline{y_{j}}\right\} \text{ and }\rs(y_{j})=y_{j}\text{ and }\rs_{-j}\circ\p_{y_{j}\gets\sigma}=\pi'\right]\\
 & \le\Pr\!\left[\p(x_{i})\in\left\{ y_{j},\overline{y_{j}}\right\} \text{ and }\rs_{-j}\circ\p_{y_{j}\gets\sigma}=\pi'\right]\\
 & =\Pr\!\left[\p(x_{i})\in\left\{ y_{j},\overline{y_{j}}\right\} \;\middle|\;\rs_{-j}\circ\p_{y_{j}\gets\sigma}=\pi'\right]\cdot\Pr\!\left[\rs_{-j}\circ\p_{y_{j}\gets\sigma}=\pi'\right]\\
 & \le q_{\sigma}\cdot\Pr\!\left[\rs_{-j}\circ\p_{y_{j}\gets\sigma}=\pi'\right],
\end{align*}
where the last inequality follows by applying \expref{Claim}{claim:generalized-hiding}
with $\cbE$ being the set of projections $\pi$ that satisfy $\rs_{-j}\circ\pi=\pi'$.
This concludes the proof of \expref{Equation}{eq:shrinkage-hiding-auxiliary-goal}.
\begin{remark}
We can improve the bound $m^{2}$ in the statement of \expref{Lemma}{lemma:hiding-to-fixing}
to $mk$, where $k$ is the maximal number of variables $1,\ldots,m$
which could appear in any position of $\p$. The reason is that in
the latter case, the transition from \expref{Equation}{eq:shrinkage-hiding-goal}
to \expref{Equation}{eq:shrinkage-hiding-goal-after-summation} incurs a factor
of~$k$ rather than~$m$. This is useful, for example, since the
random projections $\p$ of \expref{Section}{subsec:KRW} have the feature that
for each $i\in[n]$ there is a unique $j\in[m]$ such that $\p(x_{i})\in\{0,1,y_{j},\overline{y_{j}}\}$,
and so for these projections $k=1$. 
\end{remark}

\section{\label{sec:join}Joining projections}

In this section, we define a \emph{join} operation on fixing and hiding
projections, and show that it preserves the corresponding properties.
This operation provides a convenient tool for constructing random
projections, and will be used in our applications.
\begin{definition}
Let $\alpha$ be a projection from $x_{1},\ldots,x_{n_{a}}$ to $y_{1},\ldots,y_{m_{a}}$,
and let $\beta$ be a projection from $w_{1},\ldots,w_{n_{b}}$ to
$z_{1},\ldots,z_{m_{b}}$. The \emph{join} $\alpha\uplus\beta$ is
the projection from $x_{1},\ldots,x_{n_{a}},w_{1},\ldots,w_{n_{b}}$
to $y_{1},\ldots,y_{m_{a}},z_{1},\ldots,z_{m_{b}}$ obtained by joining
$\alpha$ and~$\beta$ together in the obvious way. 
\end{definition}

\begin{lemma}
\label{lemma:concat-fixing}Let $\ga$ and $\gb$ be independent random
projections. If $\ga$ and~$\gb$ are $(q_{0},q_{1})$-fixing projections,
then so is $\ga\uplus\gb$. 
\end{lemma}

\begin{proof}
Let $\ga$ and~$\gb$ be $(q_{0},q_{1})$-fixing projections, and
let $\ggm=\ga\uplus\gb$. We prove that $\ggm$ is a $(q_{0},q_{1})$-fixing
projection. Let $\alpha$ be a projection from $x_{1},\ldots,x_{n_{a}}$
to $y_{1},\ldots,y_{m_{a}}$, let $\beta$ be a projection from $w_{1},\ldots,w_{n_{b}}$
to $z_{1},\ldots,z_{m_{b}}$, and let $\gamma=\alpha\uplus\beta$.
We should show that for every $\sigma\in\B$ and every input variable
$u$ (either $x_{i}$ or $w_{i}$) we have 
\[
\Pr\!\left[\ggm(u)\notin\B\text{ and }\ggm_{\var(\ggm(u))\gets\sigma}=\gamma\right]\leq q_{\sigma}\cdot\Pr\!\left[\ggm=\gamma\right].
\]
Let $\sigma\in\B$~be a bit, and let $u$~be an input variable.
Assume that $u=x_{i}$ for some $i\in\left[n_{a}\right]$ (if $u=w_{i}$
for some $i\in\left[n_{b}\right]$, the proof is similar). The above
equation is therefore equivalent to 
\[
\Pr\!\left[\ga(x_{i})\notin\B\text{ and }\ga_{\var(\ga(x_{i}))\gets\sigma}=\alpha\text{ and }\gb_{\var(\ga(x_{i}))\gets\sigma}=\beta\right]\leq q_{\sigma}\cdot\Pr\!\left[\ga=\alpha\text{ and }\gb=\beta\right].
\]
Since the ranges of $\ga$ and $\gb$ are disjoint, the variable $\var(\ga(x_{i}))$
does not appear in the range of~$\gb$, and therefore $\gb_{\var(\ga(x_{i}))\gets\sigma}=\gb$.
The independence of $\ga$ and $\gb$ shows that the above inequality
is equivalent to 
\[
\Pr\!\left[\ga(x_{i})\notin\B\text{ and }\ga_{\var(\ga(x_{i}))\gets\sigma}=\alpha\right]\cdot\Pr\!\left[\gb=\beta\right]\leq q_{\sigma}\cdot\Pr\!\left[\ga=\alpha\right]\cdot\Pr\!\left[\gb=\beta\right],
\]
which follows from $\ga$ being $(q_{0},q_{1})$-fixing. 
\end{proof}
\begin{lemma}
\label{lemma:concat-hiding}Let $\ga$ and $\gb$ be independent random
projections. If $\ga$ and $\gb$ are $(q_{0},q_{1})$-hiding projections,
then so is $\ga\uplus\gb$. 
\end{lemma}

\begin{proof}
Let $\ga$ and~$\gb$ be $(q_{0},q_{1})$-hiding projections, and
let $\ggm=\ga\uplus\gb$. We prove that $\ggm$ is a $(q_{0},q_{1})$-hiding
projection. Let $\alpha$ be a projection from $x_{1},\ldots,x_{n_{a}}$
to $y_{1},\ldots,y_{m_{a}}$, let $\beta$ be a projection from $w_{1},\ldots,w_{n_{b}}$
to $z_{1},\ldots,z_{m_{b}}$, and let $\gamma=\alpha\uplus\beta$.
We should show that for every $\sigma\in\B$, every input variable
$u$ (either $x_{i}$ or $w_{i}$), and every output variable~$v$
(either $y_{j}$ or $z_{j}$) we have 
\[
\Pr\!\left[\ggm(u)\in\left\{ v,\overline{v}\right\} \;\middle|\;\ggm_{v\gets\sigma}=\gamma\right]\leq q_{\sigma}.
\]
Let $\sigma\in\B$ be a bit, let $u$~be an input variable, and let
$v$~be an output variable. Assume that $u=x_{i}$ for some $i\in\left[n_{a}\right]$
(if $u=w_{i}$ for some $i\in\left[n_{b}\right]$, the proof is similar).
In this case, we may assume that $v=y_{j}$ for some $j\in\left[m_{a}\right]$,
since otherwise the probability on the left-hand side is~$0$. Thus,
the above equation is equivalent to 
\[
\Pr\!\left[\ga(x_{i})\in\left\{ y_{j},\overline{y_{j}}\right\} \;\middle|\;\ga_{y_{j}\gets\sigma}=\alpha\text{ and }\gb_{y_{j}\gets\sigma}=\beta\right]\leq q_{\sigma}.
\]
Since $\ga$ and $\gb$ are independent, whenever the conditioned
event has positive probability, we have 
\[
\Pr\!\left[\ga(x_{i})\in\left\{ y_{j},\overline{y_{j}}\right\} \;\middle|\;\ga_{y_{j}\gets\sigma}=\alpha\text{ and }\gb_{y_{j}\gets\sigma}=\beta\right]=\Pr\!\left[\ga(x_{i})\in\left\{ y_{j},\overline{y_{j}}\right\} \;\middle|\;\ga_{y_{i}\gets\sigma}=\alpha\right]\leq q_{\sigma},
\]
where the inequality holds since $\ga$ is $(q_{0},q_{1})$-hiding. 
\end{proof}

\section{\label{sec:hiding-from-complexity}Hiding projections from complexity
measures}

In this section we define a generalization of the unweighted quantum
adversary bound due to Ambainis~\cite{A02}, and show how it can
be used for constructing hiding projections. We also derive a relationship
between this measure to the Khrapchenko bound~\cite{K72}, and conclude
that the latter bound can be used for constructing hiding projections
as well. 

\subsection{\label{sec:adversary}The soft-adversary method}

Ambainis~\cite{A02} defined the following complexity measure for
Boolean functions, called the \emph{unweighted quantum adversary bound},
and proved that it is a lower bound on quantum query complexity (for
a definition of quantum query complexity, see~\cite{A02} or~\cite{BW02}).
\begin{definition}
\label{def:ambainis}Let $f\colon\B^{n}\to\B$ be a non-constant function.
Let $R\subseteq f^{-1}(1)\times f^{-1}(0)$, and let $A,B$ be the
projections of~$R$ into the first and second coordinates, respectively.
Let $R(a,B)=\{b\in B:(a,b)\in R\}$, let $R_{i}(a,B)=\{b\in B:(a,b)\in R,a_{i}\neq b_{i}\}$,
and define $R(A,b),R_{i}(A,b)$ analogously. The \emph{unweighted
quantum adversary bound of $f$} is 
\[
\Amb(f)=\max_{R\subseteq f^{-1}(1)\times f^{-1}(0)}\sqrt{\frac{\min_{a\in A}|R(a,B)|\cdot\min_{b\in B}|R(A,b)|}{\max_{\substack{a\in A\\
i\in[n]
}
}|R_{i}(a,B)|\cdot\max_{\substack{b\in B\\
i\in[n]
}
}|R_{i}(A,b)|}.}
\]
\end{definition}

\begin{theorem}[\cite{A02}]
The quantum query complexity of a non-constant function $f\colon\B^{n}\to\B$
is $\Omega(\Amb(f))$. 
\end{theorem}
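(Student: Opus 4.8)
The plan is to reconstruct Ambainis' adversary argument, reducing it first to the spectral form of the bound. Put the algorithm in canonical form: a $T$-query algorithm computing $f$ with error at most $1/3$ can be written so that on input $x$ its state after $t$ queries is $|\phi^t_x\rangle = O_x U_t\cdots O_x U_1|\phi^0\rangle$, where the $U_i$ are input-independent unitaries, $O_x$ is the standard phase query operator, and $|\phi^0\rangle$ is independent of~$x$. Fix a relation $R\subseteq f^{-1}(1)\times f^{-1}(0)$ attaining the maximum in the definition of $\Amb(f)$, and write $m=\min_a|R(a,B)|$, $m'=\min_b|R(A,b)|$, $\ell=\max_{a,i}|R_i(a,B)|$, $\ell'=\max_{b,i}|R_i(A,b)|$, so that $\Amb(f)=mm'/(\ell\ell')$. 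Let $\Gamma$ be the $0/1$ matrix on $A\times B$ with $\Gamma_{ab}=1$ iff $(a,b)\in R$, and let $\Gamma_i$ be $\Gamma$ with all entries for which $a_i=b_i$ set to~$0$. Since $\Gamma$ is nonnegative with every row sum $\ge m$ and every column sum $\ge m'$, plugging the all-ones vectors into the Rayleigh quotient gives $\|\Gamma\|\ge |R|/\sqrt{|A|\,|B|}\ge\sqrt{mm'}$; and since each $\Gamma_i$ is nonnegative with row sums at most $\ell$ and column sums at most $\ell'$, the estimate $\|M\|^2\le\|MM^{\top}\|_\infty$ (maximum absolute row sum of the symmetric matrix $MM^\top$) gives $\|\Gamma_i\|\le\sqrt{\ell\ell'}$. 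Consequently $\|\Gamma\|/\max_i\|\Gamma_i\|\ge\sqrt{\Amb(f)}$, so it suffices to prove $T=\Omega\!\bigl(\|\Gamma\|/\max_i\|\Gamma_i\|\bigr)$.

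For the spectral bound I would run a progress measure driven by the top singular vectors of $\Gamma$. Let $\alpha,\beta$ be unit vectors with $\Gamma\beta=\|\Gamma\|\,\alpha$ and $\Gamma^{\top}\alpha=\|\Gamma\|\,\beta$, and set
\[
W^t \;=\; \sum_{(a,b)\in R}\Gamma_{ab}\,\alpha_a\,\beta_b\,\langle\phi^t_a\mid\phi^t_b\rangle .
\]
At $t=0$ all states coincide, so $\langle\phi^0_a|\phi^0_b\rangle=1$ and $W^0=\alpha^{\top}\Gamma\beta=\|\Gamma\|$. At $t=T$, bounded error forces the accepting and rejecting final states to be nearly orthogonal: a standard fact gives $|\langle\phi^T_a|\phi^T_b\rangle|\le c$ for an absolute constant $c<1$ (one may take $c=\sqrt{1-1/9}$), whence $|W^T|\le c\,|\alpha|^{\top}\Gamma\,|\beta|\le c\,\|\Gamma\|$, where $|\alpha|,|\beta|$ denote entrywise absolute values. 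Hence $|W^0|-|W^T|\ge(1-c)\,\|\Gamma\|$.

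The crux is to show that a single query changes the progress by at most $2\max_i\|\Gamma_i\|$. The input-independent unitaries leave each inner product $\langle\phi_a|\phi_b\rangle$ unchanged, so only the query matters: writing the pre-query state as $|\psi^t_x\rangle=\sum_i|i\rangle\otimes|\xi^t_{x,i}\rangle$, split according to the queried-index register, one computes $\langle\phi^{t+1}_a|\phi^{t+1}_b\rangle-\langle\phi^t_a|\phi^t_b\rangle=\langle\psi^t_a|(O_a^{\dagger}O_b-I)|\psi^t_b\rangle=\sum_{i:\,a_i\ne b_i}\langle\xi^t_{a,i}|(O_a^{\dagger}O_b-I)|\xi^t_{b,i}\rangle$, a sum supported only on the coordinates where $a$ and $b$ differ, with each summand of magnitude at most $2\|\xi^t_{a,i}\|\,\|\xi^t_{b,i}\|$ since $\|O_a^\dagger O_b - I\|\le 2$. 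Substituting into $W^{t+1}-W^t$ and grouping by~$i$, the vector-valued Cauchy--Schwarz inequality bounds the contribution of coordinate~$i$ by $2\,\|\Gamma_i\|\sqrt{p^A_i\,p^B_i}$, where $p^A_i=\sum_a\alpha_a^2\|\xi^t_{a,i}\|^2$ and $p^B_i=\sum_b\beta_b^2\|\xi^t_{b,i}\|^2$. Crucially $\sum_i p^A_i=\sum_a\alpha_a^2\sum_i\|\xi^t_{a,i}\|^2=\sum_a\alpha_a^2=1$, and likewise $\sum_i p^B_i=1$; so a final Cauchy--Schwarz over~$i$ gives $\sum_i\sqrt{p^A_i p^B_i}\le 1$ and therefore $|W^{t+1}-W^t|\le 2\max_i\|\Gamma_i\|$. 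I expect this localization step --- the observation that a single query can only transfer weight along the difference sub-matrices $\Gamma_i$ --- to be the main obstacle, and it is essentially the only place where the quantum structure is used.

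Putting the three facts together, $(1-c)\,\|\Gamma\|\le|W^0|-|W^T|\le\sum_{t=0}^{T-1}|W^{t+1}-W^t|\le 2T\cdot\max_i\|\Gamma_i\|$, so $T\ge\tfrac{1-c}{2}\cdot\|\Gamma\|/\max_i\|\Gamma_i\|=\Omega\!\bigl(\sqrt{\Amb(f)}\bigr)$, which is exactly the claimed lower bound on quantum query complexity. (The particular constant $1/3$ is immaterial: any error bounded away from $1/2$ only changes $c$, hence only the hidden constant.)
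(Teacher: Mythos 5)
The paper cites this theorem directly from Ambainis~\cite{A02} and does not supply a proof of its own, so there is no internal argument to compare against. Your proof is a correct, self-contained derivation, and it takes the \emph{spectral adversary} route rather than Ambainis' original combinatorial argument. The pieces all check out: the all-ones test vectors give $\|\Gamma\|\ge|R|/\sqrt{|A|\,|B|}\ge\sqrt{mm'}$ (using $|R|\ge m|A|$ and $|R|\ge m'|B|$); the row-sum/column-sum estimate gives $\|\Gamma_i\|\le\sqrt{\ell\ell'}$; the progress measure starts at $\|\Gamma\|$ and ends at most $c\|\Gamma\|$ (the bound $|W^T|\le c\,|\alpha|^{\top}\Gamma|\beta|\le c\|\Gamma\|$ does not even require the Perron--Frobenius sign normalization of $\alpha,\beta$); and the per-query bound $|W^{t+1}-W^t|\le 2\max_i\|\Gamma_i\|$ is correctly obtained by the two successive applications of Cauchy--Schwarz, first within each coordinate block $i$ (bounding the bilinear form by $\|\Gamma_i\|\sqrt{p^A_i p^B_i}$) and then across $i$ using $\sum_i p^A_i=\sum_i p^B_i=1$. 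The main difference from \cite{A02} is structural: you first reduce $\Amb(f)$ to the spectral quantity $\|\Gamma\|/\max_i\|\Gamma_i\|$ and then prove the query lower bound for the spectral quantity, whereas Ambainis' original argument tracks the progress function $\sum_{(a,b)\in R}|\langle\phi^t_a|\phi^t_b\rangle|$ directly on the relation and counts the decrease combinatorially. Your route buys generality (any nonnegative adversary matrix $\Gamma$, not just the $0/1$ relation matrix, yields a lower bound) and a cleaner hybrid argument; the reduction step itself is usually attributed to Barnum, Saks, and Szegedy rather than \cite{A02}, but that is a matter of provenance, not correctness.
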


We define the following generalization of the unweighted quantum adversary
bound.
\begin{definition}
\label{def:soft-ambainis}Let $f\colon\B^{n}\to\B$ be a Boolean function.
We define the \emph{soft-adversary bound of $f$}, which we denote $\Am$,
to be the maximum of the quantity 
\[
\sqrt{\min_{\substack{a\in\supp(\a)\\
i\in[n]
}
}\frac{1}{\Pr[\a_{i}\neq\b_{i}\mid\a=a]}\cdot\min_{\substack{b\in\supp(\b)\\
i\in[n]
}
}\frac{1}{\Pr[\a_{i}\neq\b_{i}\mid\b=b]}}
\]
over all distributions $(\a,\b)$ supported on $f^{-1}(1)\times f^{-1}(0)$. 
\end{definition}

We chose to call this notion ``soft-adversary bound'' since we view
it as a variant of the unweighted adversary bound in which, instead
of choosing for each pair $(a,b)$ whether it belongs to $R$ or not
(a ``hard'' decision), we assign each pair $(a,b)$ a probability
of being in the relation $R$ (a ``soft'' decision). We now show
that this bound indeed generalizes the unweighted quantum adversary
method.
\begin{proposition}
\label{prop:Amb-Am}Let $f\colon\B^{n}\to\B$ be non-constant function.
Then $\Am(f)\geq\Amb(f)$. 
\end{proposition}

\begin{proof}
Let $R\subseteq f^{-1}(1)\times f^{-1}(0)$ be a relation that attains
$\Amb(f)$. Let $(\a,\b)$ be a uniformly distributed element of $R$.
Using $\Pr[\a_{i}\neq\b_{i}\mid\a=a]=|R_{i}(a,B)|/|R(a,B)|$ and $\Pr[\a_{i}\neq\b_{i}\mid\b=b]=|R_{i}(A,b)|/|R(A,b)|$,
it is easy to check that the quantity in the definition of $\Am(f)$
is %
at least $\Amb(f)$.  %
\end{proof}
Following \cite{SS06,LLS06}, it can be shown that the soft-adversary
bound is a lower bound on formula complexity. We have the following
result, which is proved in \expref{Section}{sec:L-Am}. %
\begin{proposition}
\label{prop:L-Am} For any non-constant function $f\colon\B^{n}\to\B$
we have $L(f)\geq\Am^{2}(f)$. 
\end{proposition}

Our motivation for introducing the soft-adversary bound is that it
can be used for constructing hiding projections. We have the following
result.
\begin{lemma}
\label{lemma:ambainis-hiding}Let $f\colon\B^{n}\to\B$ be a non-constant
Boolean function. There is a $q$-hiding projection $\p$ to a single
variable~$y$ such that $q=1/\Am(f)$ and such that $f|_{\p}$~is
a non-constant function with probability~$1$. 
\end{lemma}

\begin{proof}
Let $(\a,\b)$ be a distribution supported on $f^{-1}(1)\times f^{-1}(0)$
which attains $\Am(f)$. We construct a $q$-hiding projection~$\p$
from $x_{1},\ldots,x_{n}$ to a single variable~$y$ for $q=1/\Am(f)$,
such that $\p_{y\gets1}=\a$ and $\p_{y\gets0}=\b$. Note that this
implies in particular that $f|_{\p}$ is a non-constant function,
since $\a\in f^{-1}(1)$ and $\b\in f^{-1}(0)$. Specifically, for
every input variable~$x_{i}$ of~$f$, we define~$\p$ as follows: 
\begin{itemize}
\item If $\a_{i}=\b_{i}$ then $\p(x_{i})=\a_{i}$. 
\item If $\a_{i}=1$ and $\b_{i}=0$ then $\p(x_{i})=y$. 
\item If $\a_{i}=0$ and $\b_{i}=1$ then $\p(x_{i})=\overline{y}$. 
\end{itemize}
It is not hard to see that indeed $\p_{y\gets1}=\a$ and $\p_{y\gets0}=\b$.
We now show that $\p$ is $1/\Am(f)$-hiding. To this end, we show
that $\p$ is $(q_{0},q_{1})$-hiding for 
\[
q_{0}=\max_{\substack{b\in\supp(\b)\\
i\in[n]
}
}\Pr[\a_{i}\neq\b_{i}\mid\b=b],\qquad q_{1}=\max_{\substack{a\in\supp(\a)\\
i\in[n]
}
}\Pr[\a_{i}\neq\b_{i}\mid\a=a].
\]
Note that indeed $\sqrt{q_{0}q_{1}}=1/\Am(f)$. In order to prove
that $\p$ is $(q_{0},q_{1})$-hiding, we need to prove that 
\[
\Pr\!\left[\p(x_{i})\in\left\{ y,\overline{y}\right\} \;\middle|\;\p_{y\gets\sigma}=\pi\right]\le q_{\sigma}
\]
for every $i\in\left[n\right]$ and every $\sigma\in\B$. To this
end, observe that the event $\p(x_{i})\in\left\{ y,\overline{y}\right\} $
occurs if and only if $\a_{i}\neq\b_{i}$, and that $\p_{y\gets1},\p_{y\gets0}$
are equal to the strings $\a,\b$ respectively. It follows that
\begin{align*}
\Pr\!\left[\p(x_{i})\in\left\{ y,\overline{y}\right\} \;\middle|\;\p_{y\gets1}=\pi\right] & =\Pr\!\left[\a_{i}\ne\b_{i}\;\middle|\;\a=\pi\right]\le q_{1}\\
\Pr\!\left[\p(x_{i})\in\left\{ y,\overline{y}\right\} \;\middle|\;\p_{y\gets0}=\pi\right] & =\Pr\!\left[\a_{i}\ne\b_{i}\;\middle|\;\b=\pi\right]\le q_{0}
\end{align*}
as required.
\end{proof}
\begin{remark}
Several generalizations of Ambainis' original quantum adversary bound
have appeared in the literature. {\v{S}}palek and Szegedy~\cite{SS06}
showed that all of these methods are equivalent, and so they are known
collectively as the strong quantum adversary bound. The formulation
of the strong quantum adversary bound in Ambainis~\cite{A06} makes
it clear that it subsumes our soft version. Laplante, Lee and Szegedy~\cite{LLS06}
showed that the strong quantum adversary bound
is a lower bound on %
formula complexity, and
they  %
came up with an even stronger measure, $\mathsf{maxPI}^{2}$,
that %
is still a lower bound on  %
formula complexity, but no longer %
a lower bound on  %
quantum query complexity.

H{\o}yer, Lee and {\v{S}}palek~\cite{HLS07} generalized the strong quantum
adversary bound, coming up with a new measure known as the general
adversary bound, which %
is a lower bound both on %
quantum query complexity
and (after squaring)
on %
formula complexity. Reichardt~\cite{R09,R11,LMRSS11,R14}
showed that the general adversary bound in fact \emph{coincides} with
quantum query complexity (up to constant factors). These results are
described in a recent survey by Li and Shirley~\cite{LS20}.
\end{remark}

\begin{remark}
\label{remark:quadratic-barrier}We note that
$\Am(f) \le n$ (always).   %
In order to prove this, we show that
each factor in \expref{Definition}{def:soft-ambainis} is at most~$n$.  %
In particular, observe that it always
holds that $\a\ne\b$, and hence for every possible value~$a$ of~$\a$,
we have
\[
\sum_{i=1}^{n}\Pr[\a_{i}\neq\b_{i}\mid\a=a]\ge\Pr[\exists i\in[n]\text{ s.t. }\a_{i}\neq\b_{i}\mid\a=a]=1.
\]
By averaging, there must be a coordinate $i\in\left[n\right]$ such
that $\Pr[\a_{i}\neq\b_{i}\mid\a=a]\ge\frac{1}{n}$, and therefore
\[
\min_{\substack{a\in\supp(\a)\\
i\in[n]
}
}\frac{1}{\Pr[\a_{i}\neq\b_{i}\mid\a=a]}\le n \,. %
\]
The upper bound for the other factor in \expref{Definition}{def:soft-ambainis} 
can be proved similarly.
\end{remark}

\subsection{\label{sec:khrapchenko}Khrapchenko bound}

Khrapchenko~\cite{K72} defined the following complexity measure
for Boolean functions, and proved that it is a lower bound on formula
complexity. 
\begin{definition}
\label{def:khrapchenko}Let $f\colon\B^{n}\to\B$ be a non-constant function.
For every two sets $A\subseteq f^{-1}(1)$ and $B\subseteq f^{-1}(0)$,
let $E(A,B)$ be the set of pairs $(a,b)\in A\times B$ such that
$a$ and~$b$ differ on exactly one coordinate. The \emph{Khrapchenko
bound of~$f$} is 
\[
\K(f)=\max_{A\subseteq f^{-1}(1),B\subseteq f^{-1}(0)}\frac{\left|E(A,B)\right|^{2}}{\left|A\right|\cdot\left|B\right|}.
\]
\end{definition}

\begin{theorem}[\cite{K72}]
\label{thm:khrapchenko-formula-size}For every non-constant function
$f\colon\B^{n}\to\B$ we have $L(f)\ge\K(f)$.
\end{theorem}

Laplante, Lee, and Szegedy~\cite{LLS06} observed that the strong
adversary method generalizes the Khrapchenko bound. We show that this
holds even for the unweighted adversary bound, up to a constant factor.
Specifically, we have the following result, which is proved in \expref{Section}{sec:K-Km}.
\begin{proposition}
\label{prop:K-Amb}For any non-constant function $f\colon\B^{n}\to\B$
we have $\Amb(f)\geq\frac{\sqrt{\K(f)}}{4}$.
\end{proposition}

By combining \expref{Propositions}{prop:Amb-Am} \expref{and}{prop:K-Amb}
with \expref{Lemma}{lemma:ambainis-hiding}, it follows
that the Khrapchenko bound too can be used for constructing hiding
projections.
\begin{corollary}
\label{cor:khrapchenko-hiding}Let $f\colon\B^{n}\to\B$ be a non-constant
Boolean function. There is a $q$-hiding projection $\p$ to a single
variable~$y$ such that $q=\sqrt{4/\K(f)}$ and such that $f|_{\p}$~is
a non-constant function with probability~$1$.
\end{corollary}

In \expref{Section}{sec:min-khrapchenko}, we describe a new complexity measure,
\emph{the min-entropy Khrapchenko bound}, which generalizes the Khrapchenko
bound in the same way the soft-adversary bound generalizes the unweighted
adversary bound. As we show there, the min-entropy Khrapchenko bound
provides a simple way to prove lower bounds on the soft-adversary
bound.

\section{\label{sec:applications}Applications}

In this section, we apply our shrinkage theorems to obtain new results
regarding the KRW conjecture and the formula complexity of $\ACz$.
First, in \expref{Section}{subsec:KRW}, we prove the KRW conjecture for inner
functions for which the soft-adversary bound is tight. We use this
version of the KRW conjecture in \expref{Section}{subsec:AC0-bounds} to prove
cubic formula lower bounds for a function in $\ACz$. Finally, we
rederive some closely related known results in \expref{Section}{subsec:known-results}.

\subsection{\label{subsec:KRW}Application to the KRW conjecture}

Given two Boolean functions $f\colon\B^{m}\to\B$ and $g\colon\B^{n}\to\B$,
their (block-)\-composition is the function $f\d g\colon\left(\B^{n}\right)^{m}\to\B$
defined by 
\[
(f\d g)(x_{1},\ldots,x_{m})=f\left(g(x_{1}),\ldots,g(x_{m})\right),
\]
where $x_{1},\ldots,x_{m}\in\B^{n}$. It is easy to see that $L(f\d g)\le L(f)\cdot L(g)$.
The KRW conjecture~\cite{KRW95} asserts that this is roughly optimal,
namely, that $L(f\d g)\gtrsim L(f)\cdot L(g)$. In this section,
we prove the following related result. 
\begin{theorem}
\label{thm:composition-theorem}Let $f\colon\B^{m}\to\B$ and $g\colon\B^{n}\to\B$
be non-constant Boolean functions. Then, 
\[
L\left(f\d g\right)^{1+O\left(\frac{1}{\sqrt{\log L(f\d g)}}\right)}\geq\tfrac{1}{O(m^{4})}\cdot L(f)\cdot\Am^{2}(g).
\]
\end{theorem}

\begin{proof}
Let $\p$ be the $q$-hiding projection constructed for~$g$ in \expref{Lemma}{lemma:ambainis-hiding}
with $q=1/\Am(g)$, and recall that $g|_{\p}$ is non-constant with
probability~$1$. Let $\p_{1},\ldots,\p_{m}$ be independent copies
of~$\p$, and let $\p^{m}$ denote their $m$-fold join. Observe
that $\p^{m}$~is $q$-hiding according to \expref{Lemma}{lemma:concat-hiding}.
Applying \expref{Corollary}{cor:shrinkage-without-depth-hiding} 
and using the estimate
$x+\sqrt{x}=O(x+1)$, we see that $s=L(f\d g)$ satisfies 
\[
\E\left[L\left((f\d g)|_{\p^{m}}\right)\right]=m^{4}\cdot\frac{1}{\Am^{2}(g)}\cdot s^{1+O\bigl(\frac{1}{\sqrt{\log s}}\bigr)}+O(1).
\]
On the other hand, it is not hard to see that 
\[
\left((f\d g)|_{\p^{m}}\right)(x_{1},\ldots,x_{m})=f\left(g|_{\p_{1}}(x_{1}),\ldots,g|_{\p_{m}}(x_{m})\right),
\]
and since $g|_{\p_{1}},\ldots,g|_{\p_{m}}$ are non-constant, the
function~$f$ reduces to $(f\d g)|_{\p^{m}}$. In particular, 
\[
L(f)\le\E\left[L\left((f\d g)|_{\p^{m}}\right)\right]=m^{4}\cdot\frac{1}{\Am^{2}(g)}\cdot s^{1+O\bigl(\frac{1}{\sqrt{\log s}}\bigr)}+O(1).
\]
This means that there exists a constant $C>0$ such that for all non-constant $f$ and $g$
\[
L(f)\le m^{4}\cdot\frac{1}{\Am^{2}(g)}\cdot L(f\d g)^{1+\frac{C}{\sqrt{\log L(f\d g)}}}+C.\]
We consider two cases depending on whether $L(f)\ge 2C$ or not. 
If $L(f) \ge 2C$, then the above inequality implies
\[\frac{L(f)}{2}\le m^{4}\cdot\frac{1}{\Am^{2}(g)}\cdot L(f\d g)^{1+\frac{C}{\sqrt{\log L(f\d g)}}},\]
 and we obtain the theorem by rearranging. 

If $L(f) < 2C$, then we have $L(f\d g) \ge L(g)$ since $g$ or $\neg g$ is a sub-function of $f\circ g$ (as follows from the fact that $f$ is non-constant).
This means that in this case 
\[L(f\d g) \ge L(g) > \frac{L(f) \cdot L(g)}{2C} \ge \frac{L(f) \cdot \Am^2(g)}{2C}
\]
which is stronger than the required conclusion.
\end{proof}
A direct consequence of the theorem is the following corollary, which
is a special case of the KRW conjecture for inner functions~$g$
for which the soft-adversary bound is almost tight. 
\begin{corollary}
\label{cor:KRW-special-case}Let $f\colon\B^{m}\to\B$ and $g\colon\B^{n}\to\B$
be non-constant Boolean functions such that $\Am^{2}(g)\ge L(g)^{1-O\left(\frac{1}{\sqrt{\log L(g)}}\right)}$.
Then, 
\[
L(f\d g)\ge\tfrac{1}{O(m^{4})}\cdot L\left(f\right)^{1-O\left(\frac{1}{\sqrt{\log L(f)}}\right)}\cdot L(g)^{1-O\left(\frac{1}{\sqrt{\log L(g)}}\right)}.
\]
\end{corollary}

\begin{proof}
By substituting the assumption on~$g$ in the bound of \expref{Theorem}{thm:composition-theorem},
we obtain that 
\[
L\left(f\d g\right)^{1+O\left(\frac{1}{\sqrt{\log L(f\d g)}}\right)}\geq\tfrac{1}{O(m^{4})}\cdot L(f) \cdot\left(L(g)\right)^{1-O\left(\frac{1}{\sqrt{\log L(g)}}\right)}.
\]
Moreover, since $L(f\d g)\le L(f)\cdot L(g)$, we have 
\begin{align*}
L\left(f\d g\right)^{1+O\left(\frac{1}{\sqrt{\log L(f\d g)}}\right)} & \le L(f\d g)\cdot\left(L(f)\cdot L(g)\right)^{O\left(\frac{1}{\sqrt{\log L(f\d g)}}\right)}\\
 & \le L(f\d g)\cdot L\left(f\right)^{O\left(\frac{1}{\sqrt{\log L(f\d g)}}\right)}\cdot L\left(g\right)^{O\left(\frac{1}{\sqrt{\log L(f\d g)}}\right)}\\
 & \le L(f\d g)\cdot L\left(f\right)^{O\left(\frac{1}{\sqrt{\log L(f)}}\right)}\cdot L\left(g\right)^{O\left(\frac{1}{\sqrt{\log L(g)}}\right)}.
\end{align*}
The corollary follows by combining the two bounds. 
\end{proof}
Using the fact that $\K(g)\le\Am^{2}(g)$ (\expref{Proposition}{prop:K-Amb}), we obtain
the following immediate corollary. 
\begin{corollary}
\label{cor:KRW-original-khrapchenko}Let $f\colon\B^{m}\to\B$ and $g\colon\B^{n}\to\B$
be non-constant Boolean functions. Then, 
\[
L\left(f\d g\right)^{1+O\left(\frac{1}{\sqrt{\log L(f\d g)}}\right)}\geq\tfrac{1}{O(m^{4})}\cdot L(f) \cdot\K(g).
\]
Moreover, if $\K(g)\ge L(g)^{1-O\left(\frac{1}{\sqrt{\log L(g)}}\right)}$,
then 
\[
L(f\d g)\ge\tfrac{1}{O(m^{4})}\cdot L\left(f\right)^{1-O\left(\frac{1}{\sqrt{\log L(f)}}\right)}\cdot L(g)^{1-O\left(\frac{1}{\sqrt{\log L(g)}}\right)}.
\]
\end{corollary}

\subsection{\label{subsec:AC0-bounds}Formula lower bounds for \texorpdfstring{$\protect\ACz$}{AC0}}

In this section, we derive our second main application: cubic formula
lower bounds for $\ACz$. Formally, we have the following result.
\begin{restated}{\expref{Theorem}{thm:main}}
There exists a family of Boolean functions $h_{n}\colon\B^{n}\to\B$ for
$n\in\N$ such that 
\begin{enumerate}
\item $h_{n}$ can be computed by uniform depth-$4$ unbounded fan-in formulas
of size $O(n^{3})$. 
\item The formula size of $h_{n}$ is at least $n^{3-o(1)}$.
\end{enumerate}
\end{restated}

The function $h_{n}$ is constructed similarly to the Andreev function
\cite{A87}, with the parity function replaced with the surjectivity
function~\cite{BM12}, defined next.
\begin{definition}
\label{def:surjectivity}Let $\Sigma$ be a finite alphabet, and let $r\in\N$
be such that $r\ge\left|\Sigma\right|$. The \emph{surjectivity function}
$\surj_{\Sigma,r}\colon\Sigma^{r}\to\{0,1\}$ interprets its input
as a function from $\left[r\right]$ to $\Sigma$, and outputs whether
the function is surjective. In other words, $\surj(\sigma_{1},\ldots,\sigma_{r})=1$
if and only if every symbol in~$\Sigma$ appears in~$(\sigma_{1},\ldots,\sigma_{r})$.
\end{definition}

In order to prove \expref{Theorem}{thm:main}, we will use \expref{Theorem}{thm:composition-theorem}
with the inner function~$g$ being $\surj_{n}$. To this end, we
use the following result of~\cite{BM12}.
\begin{lemma}[\cite{BM12}]
\label{lemma:surj-amb}For every natural number $s$ such that $s\ge2$,
we have $\Amb(\surj_{\left[s\right],2s-2})=\Omega\bigl(s\bigr)$. 
\end{lemma}

Note that we can view the inputs of~$\surj_{\Sigma,r}$ as binary
strings of length~$n=r\cdot\left\lceil \log\left|\Sigma\right|\right\rceil $
by fixing some arbitrary binary encoding of the symbols in~$\Sigma$.
In what follows, we extend the definition of $\surj$ to every sufficiently
large input length~$n$ as follows: Given $n\in\N$, we choose~$s$
to be the largest number such that $(2s-2)\cdot\left\lceil \log s\right\rceil \le n$,
and define $\surj_{n}$ to be the function that interprets its input
as a string in $\left[s\right]^{2s-2}$ and computes $\surj_{\left[s\right],2s-2}$
on it. \expref{Lemma}{lemma:surj-amb} now implies that for every sufficiently large
$n\in\N$ we have $\Amb(\surj_{n})=\Omega\left(\frac{n}{\log n}\right)$.
In particular, this implies the same bound for the soft-adversary
bound, \ie, $\Am(\surj_{n})=\Omega\left(\frac{n}{\log n}\right)$.
For completeness, we provide an alternative proof for this lower bound
on $\Am(\surj_{n})$ in \expref{Section}{sec:min-khrapchenko}. We turn to prove
\expref{Theorem}{thm:main} using our special case of the KRW conjecture 
(\expref{Corollary}{cor:KRW-special-case}). 
\begin{proof}[Proof of \expref{Theorem}{thm:main}]
We would like to construct, for every sufficiently large~$n\in\N$,
a function~$h_{n}\colon\B^{n}\to\B$ that is computable by a formula
with unbounded fan-in of depth~$4$ and size~$O(\frac{n^{3}}{\log^{3}n})$
such that $L(h_{n})=\Omega(n^{3-o(1)})$. We start by constructing
the function~$h_{n}$ for input lengths~$n$ of a special form,
and then extend the construction to all sufficiently large input lengths.

First, assume that the input length~$n$ is of the form $2^{k}\cdot(k+1)$,
where $k\in\N$ is sufficiently large such that $\surj_{2^{k}}$ is
well-defined. The function~$h_{n}\colon\B^{n}\to\B$ takes two inputs:
the truth table of a function~$f\colon\B^{k}\to\B$, and $k$~strings
$x_{1},\ldots,x_{k}\in\B^{2^{k}}$. On such inputs, the function~$F$
outputs 
\[
h_{n}(f,x_{1},\ldots,x_{k})=(f\d\surj_{2^{k}})(x_{1},\ldots,x_{k})=f\left(\surj_{2^{k}}(x_{1}),\ldots,\surj_{2^{k}}(x_{k})\right).
\]
It is easy to see that the function~$h_{n}$ has input length~$n$.
We show that $F$ can be computed by a formula with unbounded fan-in
of depth~$4$ and size~$O(\frac{n^{3}}{\log^{3}n})$. We start by
constructing a formula for $\surj_{2^{k}}$. Recall that $\surj_{2^{k}}$
takes as input (the binary encoding of) a string $(\sigma_{1},\ldots,\sigma_{r})\in\left[s\right]^{r}$,
where $r,s=O(\frac{2^{k}}{k})$. For simplicity, let us assume that
every number in~$\left[s\right]$ is encoded by exactly one binary
string, and that if the binary input to $\surj_{2^{k}}$ contains
a binary string in~$\B^{\left\lceil \log s\right\rceil }$ that does
not encode any number in~$\left[s\right]$, then we do not care what
the formula outputs. Now, observe that 
\begin{equation}
\surj_{2^{k}}(\sigma_{1},\ldots,\sigma_{r})=\bigwedge_{\gamma\in\left[s\right]}\bigvee_{j=1}^{r}(\sigma_{j}=\gamma).\label{eq:surjectivity-formula}
\end{equation}
It is not hard to see that the expression on the right-hand side can
be implemented by a formula with unbounded fan-in of depth~$3$ and
size $s\cdot r\cdot\left\lceil \log s\right\rceil =O\!\left(\frac{2^{2k}}{k}\right)$.
Next, observe that 
\[
h_{n}(f,x_{1},\ldots,x_{k})=\bigvee_{y\in\B^{k}}\left[\left(f(y)=1\right)\wedge\left(\bigwedge_{i=1}^{k}\surj_{2^{k}}(x_{i})=y_{i}\right)\right].
\]
Using the foregoing formula for surjectivity, it is not hard to see
that the expression on the right-hand side can be implemented by a
formula with unbounded fan-in of depth~$4$ and size 
\[
O\!\left(2^{k}\cdot k\cdot\frac{2^{2k}}{k}\right)=O\!\left(2^{3k}\right)=O\!\left(\frac{n^{3}}{\log^{3}n}\right),
\]
as required.

Finally, we prove that $L(h_{n})=\Omega(n^{3-o(1)})$. To this end,
let us hardwire the input~$f$ to be some function from~$\B^{k}$
to $\B$ such that $L(f)=\Omega(\frac{2^{k}}{\log k})$ (such a function
exists by a well-known counting argument, see \cite[Theorem 1.23]{JuknaBook}).
After hardwiring the input~$f$, the function~$h_{n}$ becomes exactly
the function $f\d\surj_{2^{k}}$. Now, by observing that $\Am^{2}(\surj_{2^{k}})=\tilde{\Omega}\left(L(\surj_{2^{k}})\right)$
and applying \expref{Corollary}{cor:KRW-special-case}, it follows that 
\begin{align*}
L(F) & \ge\tfrac{1}{O(k^{4})}\cdot L\left(f\right)^{1-O\left(\frac{1}{\sqrt{\log L(f)}}\right)}\cdot L(\surj_{2^{k}})^{1-O\left(\frac{1}{\sqrt{\log L(\surj_{2^{k}})}}\right)}\\
 & \ge\tfrac{1}{O(k^{4})}\cdot\left(\frac{2^{k}}{\log k}\right)^{1-O\left(\frac{1}{\sqrt{k}}\right)}\cdot\left(\frac{2^{2k}}{k^{2}}\right)^{1-O\left(\frac{1}{\sqrt{k}}\right)}\\
 & \ge2^{3k-o(1)}\\
 & =n^{3-o(1)},
\end{align*}
as required.

It remains to deal with input lengths~$n$ that are not of the form
$2^{k}\cdot(k+1)$. For such input lengths~$n$, we choose $k$ to
be the largest natural number such that $2^{k}\cdot(k+1)\le n$, and
proceed as before. It can be verified that for this choice of~$k$
we have $2^{k}\cdot(k+1)=\Theta(n)$, and therefore all the
foregoing asymptotic bounds continue to hold. 
\end{proof}
\begin{remark}
We note that our methods cannot prove formula lower bounds that are
better than cubic. As explained in \expref{Remark}{remark:quadratic-barrier}, 
it holds
that %
$\Am^2(f) \le n^{2}$.   %
Thus, one cannot expect
to obtain a lower bound that is better than cubic by combining these
measures with Andreev's argument. 
\end{remark}

\subsection{\label{subsec:known-results}Reproving known formula lower bounds}

The proof of \expref{Theorem}{thm:main} combines a lower bound on $\Am(\surj_{n})$
with an upper bound on $L(\surj_{n})$. More generally, we can prove
the following result along similar lines. 
\begin{theorem}
\label{thm:andreev-general}Let $g\colon\B^{n}\to\B$ be an arbitrary
function, and let $k$ be an integer satisfying $k=\log n+O(1)$.
Let $F\colon\B^{2^{k}+nk}\to\B$ be a function on $N=\Theta(n\log n)$
variables, whose input consists of a function $f\colon\B^{k}\to\B$
and $k$ strings $x_{1},\ldots,x_{k}\in\B^{n}$, given by 
\[
F(f,x_{1},\ldots,x_{k})=(f\d g)(x_{1},\ldots,x_{k}).
\]
We call $F$ the \emph{$g$-based Andreev function.} 
\begin{enumerate}
\item If either $\Am^{2}(g)$ or $\K(g)$ are at least $\Omega(n^{2-o(1)})$
then $L(F)=\Omega(N^{3-o(1)})$. 
\item If $L(g)=O(n^{2+o(1)})$ then $L(F)=O(N^{3+o(1)})$. 
\end{enumerate}
\end{theorem}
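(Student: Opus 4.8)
The plan is to follow the proof of \Cref{ACz-formula-lower-bounds} almost verbatim: that theorem is exactly item~(1) of the present statement in the special case $g=\surj_{2^k}$, and the only inputs to its proof that depend on $\surj$ are the estimate $\Km(\surj_{2^k})=\Omega(n^{2-o(1)})$ (here replaced by the hypothesis on $\Am(g)$ together with $\Am\ge\Km$), the explicit depth-$4$ formula for $\surj$ (here replaced by the hypothesis on $L(g)$), and a $g$-independent counting bound on $L(f)$. So the proof splits into the two items, each a short transcription.

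For item~(2), write $F$ in the standard ``read the truth table'' form
\[
F(f,x_1,\ldots,x_k)\;=\;\bigvee_{y\in\B^k}\Bigl[\,f(y)\wedge\bigwedge_{i=1}^{k}\bigl(g(x_i)=y_i\bigr)\Bigr],
\]
where, for a fixed $y$, the predicate $f(y)$ is the single truth-table variable indexed by $y$ (so $L(f(y))=1$) and each $g(x_i)=y_i$ is either $g(x_i)$ or its negation, hence has formula complexity at most $L(g)$. By subadditivity of formula complexity (\Cref{subadditivity}), each inner conjunction has complexity at most $k\cdot L(g)$, each of the $2^k$ disjuncts at most $1+k\cdot L(g)$, and thus $L(F)\le 2^k\,(1+k\cdot L(g))$. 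Substituting $2^k=\Theta(n)$, $k=\Theta(\log n)$, and $L(g)=O(n^{2+o(1)})$ gives $L(F)=O(n^{3+o(1)})$, and since $n\le N=\Theta(n\log n)$ this is $O(N^{3+o(1)})$.

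For item~(1), pick $f\colon\B^k\to\B$ with $L(f)=\Omega(2^k/\log k)$ (such an $f$ exists by the usual counting argument, \cite[Theorem 1.23]{JuknaBook}), and hard-wire its truth table into the first block of $F$; since fixing inputs cannot increase formula complexity, $L(F)\ge L(f\d g)$. Applying \Cref{composition-theorem} with outer arity $m=k=\log n+O(1)$ yields
\[
L(f\d g)^{1+O\left(\frac{1}{\sqrt{\log L(f\d g)}}\right)}\;\ge\;\tfrac{1}{O(k^4)}\cdot\bigl(L(f)-O(1)\bigr)\cdot\Am(g)\;=\;\Omega\!\left(\frac{2^k\cdot n^{2-o(1)}}{k^{4}\log k}\right)\;=\;\Omega(n^{3-o(1)}),
\]
using $2^k=\Theta(n)$, $k=\Theta(\log n)$, and $\Am(g)=\Omega(n^{2-o(1)})$. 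To strip the exponent, split on whether $L(f\d g)\le n^{10}$: if so, then $L(f\d g)^{O(1/\sqrt{\log L(f\d g)})}=2^{O(\sqrt{\log L(f\d g)})}\le 2^{O(\sqrt{\log n})}=n^{o(1)}$, so the displayed inequality already forces $L(f\d g)=\Omega(n^{3-o(1)})$; and if $L(f\d g)>n^{10}$, then $L(f\d g)$ exceeds $n^{3-o(1)}$ outright. Either way $L(F)\ge L(f\d g)=\Omega(n^{3-o(1)})$, and since $N=\Theta(n\log n)$ gives $n=N^{1-o(1)}$, we conclude $L(F)=\Omega(N^{3-o(1)})$.

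The one step that needs a little care is this last exponent-stripping: since nothing is assumed about $L(g)$ in item~(1), a priori $L(f\d g)$ could be as large as $2^{n+O(\log n)}$, so $L(f\d g)^{o(1)}$ need not be $n^{o(1)}$, and one cannot naively divide the exponent away. The case split above handles this (equivalently, one may solve $\log s+O(\sqrt{\log s})\ge\log X$ for $s=L(f\d g)$, obtaining $s\ge X\cdot 2^{-O(\sqrt{\log X})}=X^{1-o(1)}$, where $X=\Omega(n^{3-o(1)})$ forces $\log X=\Theta(\log n)$). Everything else is routine bookkeeping with $2^k=\Theta(n)$, $k=\Theta(\log n)$, and $N=\Theta(n\log n)$, exactly as in the proof of \Cref{ACz-formula-lower-bounds}.
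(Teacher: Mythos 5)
Your proof is correct and follows the route the paper intends — the paper explicitly declines to write out the proof, stating only that it mirrors the proof of \Cref{ACz-formula-lower-bounds}. Item~(2) is the same subadditivity counting as the upper bound in that proof (with $L(g)$ replacing the explicit depth-$4$ $\surj$ formula), and item~(1) is the same hardwiring-plus-\Cref{composition-theorem} argument.

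Two small remarks. First, the paper's parenthetical hint that one should use \Cref{KRW-original-khrapchenko} rather than \Cref{composition-theorem} does not match the theorem as stated: \Cref{KRW-original-khrapchenko} is phrased in terms of $\K(g)$, while the hypothesis here is on $\Am(g)$. Your choice to invoke \Cref{composition-theorem} directly (which already works with $\Am$) is the correct reading; the hint appears to be a leftover from a version of the theorem stated with $\K$, and the two are anyway comparable up to a constant factor by \Cref{K-Amb}. Second, the exponent-stripping case split on whether $L(f\d g)\le n^{10}$ is a real point that the paper's sketch glosses over: in the proof of \Cref{ACz-formula-lower-bounds} the step ``since $L(F)=O(n^3)$'' is available because the upper bound has already been proved, whereas in item~(1) of the present theorem there is no hypothesis bounding $L(g)$, so $L(f\d g)$ could a priori be superpolynomial and one cannot simply divide off the $L(f\d g)^{O(1/\sqrt{\log L(f\d g)})}$ factor. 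Your case analysis (or equivalently, solving $\log s + O(\sqrt{\log s}) \ge \log X$) handles this cleanly.
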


The proof of \expref{Theorem}{thm:andreev-general} is very similar to the proof
of \expref{Theorem}{thm:main}, and so we leave it to the reader (the only difference
is that we may use \expref{Corollary}{cor:KRW-original-khrapchenko} instead of \expref{Corollary}{cor:KRW-special-case}).
Using \expref{Theorem}{thm:andreev-general}, we can derive two known special
cases: the original Andreev function (in which $g$ is Parity), and
the variant considered in~\cite{GTN19}. In particular, using the
known facts that $\K(\parity)\ge n^{2}$ and $\K(\majority)\ge\Omega(n^{2})$~\cite{K72},
we obtain the following results. 
\begin{corollary}
\label{cor:parity}The formula complexity of the $\parity_{n}$-based
Andreev function is $\Theta(N^{3\pm o(1)})$. 
\end{corollary}

\begin{corollary}
\label{cor:majority} For $m\geq3$ odd, let $\majority_{n}\colon\B^{m}\to\B$
be the Majority function. The formula complexity of the $\majority_{n}$-based
Andreev function is $\Omega(N^{3-o(1)})$. 
\end{corollary}

The best known upper bound on the formula complexity of $\majority_{n}$
is only $O(n^{3.91})$~\cite{S16}, and so we do not expect 
\expref{Corollary}{cor:majority} to be tight.

\appendix

\section{\label{sec:L-Am}Proof of \expref{Proposition}{prop:L-Am}}

In this appendix, we prove \expref{Proposition}{prop:L-Am}, restated next. 
\begin{restated}{\expref{Proposition}{prop:L-Am}}
For any $f\colon\B^{n}\to\B$ we have $L(f)\geq\Am^{2}(f)$. 
\end{restated}

\subsubsection*{Preliminaries}

While the proof of \expref{Proposition}{prop:L-Am} is fairly simple, it uses some basic
concepts from information theory which we review next. 
\begin{definition}
\label{def:information}Let $\x,\y,\z$ be discrete random variables. 
\begin{itemize}
\item The \emph{entropy} of $\x$ is $H(\x)=\E_{x\gets\x}\left[\log\frac{1}{\Pr\!\left[\x=x\right]}\right]$. 
\item The \emph{conditional entropy} of~$\x$ given~$\y$ is $H(\x\mid\y)=\E_{y\gets\y}\left[H\left(\x\mid\y=y\right)\right]$. 
\item The \emph{mutual information} between $\x$ and $\y$ is $I(\x;\y)=H(\x)-H(\x\mid\y).$ 
\item The \emph{conditional mutual information} between $\x$ and~$\y$
given~$\z$ is 
\[
I(\x;\y\mid\z)=H(\x\mid\z)-H(\x\mid\y,\z).
\]
\end{itemize}
\end{definition}

We use the following basic facts from information theory (see \cite{CT91}
for proofs). 
\begin{fact}
\label{fact:information-properties}Let $\x,\y,\z$ be discrete random
variables with finite supports, and let $X$ denote the support of~$\x$. 
\begin{itemize}
\item It holds that $0\le H(\x)\le\log\left|X\right|$. 
\item It holds that $0\le H(\x\mid\y)\le H(\x)$, where $H(\x\mid\y)=0$
if and only if $\y$ determines~$\x$ (\ie, $\x$ is a function
of~$\y$). 
\item If $\y$ determines $\x$ conditioned on~$\z$ (\ie, $\x$ is a
function of $\y$ and~$\z$) then $H(\y\mid\z)\ge H(\x\mid\z)$. 
\item It holds that $0\le I(\x;\y)\le H(\x)$. Similarly, we have
$0\le I(\x;\y\mid\z)\le H(\x\mid\z)$, where $I(\x;\y\mid\z)=0$ if
and only if $\x$ and~$\y$ are independent conditioned on any value
of~$\z$. 
\item It holds that $I(\x;\y)=I(\y;\x)$. Similarly, $I(\x;\y\mid\z)=I(\y;\x\mid\z)$. 
\item \emph{The chain rule:} It holds that 
\[
I(\x;\y,\z\mid\w)=I(\x;\z\mid\w)+I(\x;\y\mid\z,\w).
\]
\end{itemize}
\end{fact}

Finally, we use the following result from the theory of interactive
information complexity. 
\begin{claim}[{{\cite[Fact 4.15]{BBCR13}}}]
\label{claim:internal-vs-external-information}Let $\Pi$ be a deterministic
protocol. Suppose we invoke~$\Pi$ on random inputs $\x$ and $\y$
for Alice and Bob, respectively, and let $\bl$ denote the random
leaf that $\Pi$ reaches on those inputs. Then, 
\[
I(\x,\y;\bl)\ge I(\x;\bl\mid\y)+I(\y;\bl\mid\x).
\]
\end{claim}

\begin{proof}[Proof sketch.]
Let $\boldsymbol{t}$ denote the transcript of the protocol that
is associated with~$\bl$. We prove that $I(\x,\y;\boldsymbol{t})\ge I(\x;\boldsymbol{t}\mid\y)+I(\y;\boldsymbol{t}\mid\x)$,
as this is equivalent to the claim. Suppose Alice speaks first, and
denote the (random) bit she sends by $\boldsymbol{t}_{1}$. We show
that $I(\x,\y;\boldsymbol{t}_{1})\ge I(\x;\boldsymbol{t}_{1}\mid\y)+I(\y;\boldsymbol{t}_{1}\mid\x)$.
Using the chain rule, we can write 
\[
I(\x,\y;\boldsymbol{t}_{1})=I(\y;\boldsymbol{t}_{1})+I(\x;\boldsymbol{t}_{1}\mid\y)\geq I(\x;\boldsymbol{t}_{1}\mid\y)=I(\x;\boldsymbol{t}_{1}\mid\y)+I(\y;\boldsymbol{t}_{1}\mid\x),
\]
where the last equality follows since $I(\y;\boldsymbol{t}_{1}\mid\x)=0$,
as Alice's message $\boldsymbol{t}_{1}$ is independent of $\y$ given
her input $\x$. Proceeding by induction on the coordinates of $\boldsymbol{t}$
using the chain rule finishes the proof. 
\end{proof}

\subsubsection*{The proof of \expref{Proposition}{prop:L-Am}}

Our proof of \expref{Proposition}{prop:L-Am} generalizes similar arguments in~\cite{KW90,GMWW17}.
Let $\Pi$ be a protocol that solves~$\KW_{f}$. For every leaf~$\ell$
of~$\Pi$, we denote by $i_{\ell}$ the output of the protocol at
leaf~$\ell$, so we have $a_{i_{\ell}}\ne b_{i_{\ell}}$ any
pair of inputs~$(a,b)$ that reach the leaf~$\ell$. We prove that
$L(\Pi)\ge\Am^{2}(f)$. Let $(\a,\b)$ be a distribution for~$f$
that attains~$\Am(f)$, and let 
\[
q_{0}=\max_{\substack{b\in\supp(\b)\\
i\in[n]
}
}\Pr[\a_{i}\neq\b_{i}\mid\b=b],\qquad q_{1}=\max_{\substack{a\in\supp(\a)\\
i\in[n]
}
}\Pr[\a_{i}\neq\b_{i}\mid\a=a].
\]
Let $\bl$~be the leaf that $\Pi$~reaches on input~$(\a,\b)$.
By \expref{Fact}{fact:information-properties}, we have 
\[
I(\bl;\a,\b)\le H(\bl)\le\log\left|L(\Pi)\right|.
\]
On the other hand, \expref{Claim}{claim:internal-vs-external-information} implies
that 
\[
I(\bl;\a,\b)\ge I(\bl;\a\mid\b)+I(\bl;\b\mid\a).
\]
Next, observe that $\a$ and $\b$ together determine~$\bl$, and
that the event $\bl=\ell$ implies that $\a_{i_{\ell}}\ne\b_{i_{\ell}}$.
It follows that 
\begin{align*}
I(\bl;\a\mid\b) & =H(\bl\mid\b)-H(\bl\mid\a,\b) & \text{(by definition)}\\
 & =H(\bl\mid\b) & \text{(\ensuremath{\a} and \ensuremath{\b} determine~\ensuremath{\bl})}\\
 & =\E_{\ell\gets\bl,b\gets\b}\left[\left.\log\frac{1}{\Pr\!\left[\bl=\ell\mid\b=b\right]}\right|\b=b\right] & \text{(\expref{Definition}{def:information})}\\
 & \ge\E_{\ell\gets\bl,b\gets\b}\left[\left.\log\frac{1}{\Pr\!\left[\a_{i_{\ell}}\ne\b_{i_{\ell}}\mid\b=b\right]}\right|\b=b\right]\\
 & \ge\E_{\ell\gets\bl,b\gets\b}\left[\left.\log\frac{1}{q_{0}}\right|\b=b\right] & \text{(Definition of \ensuremath{q_{0}})}\\
 & =\log\frac{1}{q_{0}}
\end{align*}
Similarly, it can be shown that $I(\bl;\b\mid\a)\ge\log\frac{1}{q_{1}}$.
By combining the foregoing equations, it follows that 
\[
\log\left|L(\Pi)\right|\ge I(\bl;\a\mid\b)+I(\bl;\b\mid\a)\ge\log\frac{1}{q_{0}}+\log\frac{1}{q_{1}},
\]
and thus 
\[
\left|L(\Pi)\right|\ge\frac{1}{q_{0}}\cdot\frac{1}{q_{1}}=\Am^{2}(f),
\]
as required.
\begin{remark}
We note that \expref{Proposition}{prop:L-Am} can also be proved by observing that the
soft-adversary bound is a special case of the weighted adversary bound~\cite{A06},
which was shown to %
be a lower bound on %
formula complexity by \cite{LLS06}.
Alternatively, one could prove \expref{Proposition}{prop:L-Am} (up to a constant factor)
by combining \expref{Lemma}{lemma:ambainis-hiding} along with 
\expref{Lemma}{lemma:hiding-to-fixing}
and \expref{Lemma}{lemma:shrinkage-to-single-literal}.
\end{remark}

\section{\label{sec:min-khrapchenko}The min-entropy Khrapchenko bound}

In this appendix, we describe a generalization of the Khrapchenko
bound that provides a simple %
lower bound technique for %
the soft-adversary
bound. We then show how this generalization can be used to give an
alternative proof for the lower bound on the soft-adversary bound
of the surjectivity function. We start by recalling the original Khrapchenko
bound.
\begin{repdefinition}{\expref{Definition}{def:khrapchenko}}
Let $f\colon\B^{n}\to\B$ be a non-constant function. For every two
sets $A\subseteq f^{-1}(1)$ and $B\subseteq f^{-1}(0)$, let $E(A,B)$
be the set of pairs $(a,b)\in A\times B$ such that $a$ and~$b$
differ on exactly one coordinate. The \emph{Khrapchenko bound of~$f$}
is 
\[
\K(f)=\max_{A\subseteq f^{-1}(1),B\subseteq f^{-1}(0)}\frac{\left|E(A,B)\right|^{2}}{\left|A\right|\cdot\left|B\right|}.
\]
\end{repdefinition}

The Khrapchenko measure can be viewed as follows: Consider the subgraph
of the Hamming cube that consists only of the cut between $A$ and~$B$.
Then, the measure $\frac{\left|E(A,B)\right|^{2}}{\left|A\right|\cdot\left|B\right|}$
is the product of the average degree of a vertex in~$A$ (which is
$\frac{\left|E(A,B)\right|}{\left|A\right|}$) and the average degree
of a vertex in~$B$ (which is $\frac{\left|E(A,B)\right|}{\left|B\right|}$).
Note that the average degree of~$A$ can also be described as the
average, over all strings~$a\in A$, of the number of coordinates~$i\in\left[n\right]$
such that if we flip the $i$-th bit of~$a$ we get a string in~$B$.
We generalize the Khrapchenko measure as follows: 
\begin{itemize}
\item Whereas the Khrapchenko bound maximizes over all \emph{cuts} $(A,B)$
of the Hamming cube with $A\subseteq f^{-1}(1)$ and $B\subseteq f^{-1}(0)$,
we are maximizing over all \emph{distributions} over edges $(\a,\b)$
of the Hamming cube, where $\a\in f^{-1}(1)$ and $\b\in f^{-1}(0)$. 
\item Whereas the Khrapchenko bound considers the average number of coordinates~$i$
as described above, we consider the min-entropy of the coordinate~$\i$
on which $\a,\b$ differ.
\item Whereas the Khrapchenko bound considers functions whose inputs are
binary strings, we consider inputs that are strings over arbitrary
finite alphabets. 
\end{itemize}
Our generalization uses the following notion: The \emph{conditional
min-entropy} of a random variable~$\x$ conditioned on a random variable~$\y$
is $\Hm(\x\mid\y)=\min_{x,y}\log\frac{1}{\Pr[\x=x\mid\y=y]}.$ We
can now define our generalization formally: 
\begin{definition}
\label{def:min-entropy-khrapchenko}Let $\Sigma$ be a finite alphabet,
and let $f\colon\Sigma^{n}\to\{0,1\}$ be a non-constant Boolean function.
We say that a distribution $(\a,\b)$ on $f^{-1}(1)\times f^{-1}(0)$
is a \emph{Khrapchenko distribution for~$f$} if $\a$ and $\b$
always differ on a unique coordinate $\i\in\left[n\right]$. We define
the \emph{min-entropy Khrapchenko bound of~$f$,} which we denote~$\Km(f)$,
to be the maximum of the quantity 
\[
2^{\Hm(\i\mid\a)+\Hm(\i\mid\b)}
\]
over all Khrapchenko distributions $(\a,\b)$ for~$f$. 
\end{definition}

As noted above, the min-entropy Khrapchenko bound can be used to lower
bound the the soft-adversary bound. In order to define the adversary
bound of a function $f\colon\Sigma^{n}\to\B$ over a non-boolean alphabet~$\Sigma$,
we view~$f$ as if its input is a binary string in~$\B^{n\cdot\left\lceil \log\left|\Sigma\right|\right\rceil }$
that encodes a string in~$\Sigma^{n}$ via some fixed encoding (the
choice of the encoding does not matter for what follows).
\begin{lemma}
\label{lemma:Khr-Amb}For every non-constant function~$f\colon \Sigma^{n}\to\B$
we have $\Am(f)\geq\sqrt{\Km(f)}$. 
\end{lemma}

\begin{proof}
Let $(\a,\b)$ be a Khrapchenko distribution that attains $\Km(f)$.
Let $\i\in[n]$ be the unique index at which $\a,\b$ differ, and
let $\a_{i,j},\b_{i,j}$ be the $j$-th bits of the binary encoding
of $\a_{i},\b_{i}\in\Sigma$. For any $a\in\supp(\a)$ and $(i,j)\in[n]\times[\lceil\log|\Sigma|\rceil]$,
we have 
\[
\Pr[\a_{i,j}\neq\b_{i,j}\mid\a=a]\leq\Pr[\a_{i}\neq\b_{i}\mid\a=a]=\Pr[\i=i\mid\a=a]\leq2^{-\Hm(\i\mid\a)}.
\]
It follows that the first factor in the definition of $\Am$ is at
least $2^{\Hm(\i\mid\a)}$. Similarly, the second factor is at least
$2^{\Hm(\i\mid\b)}$, and so the entire expression is at least $\sqrt{\Km(f)}$. 
\end{proof}
Similarly to the original Khrapchenko bound, the min-entropy Khrapcehnko
bound too gives a lower bound on the formula complexity~$L(f)$,
which can be proved by combining \expref{Proposition}{prop:L-Am} 
and \expref{Lemma}{lemma:Khr-Amb}. Again,
in order to define the formula complexity of a function $f\colon\Sigma^{n}\to\B$
over a non-boolean alphabet, we view~$f$ as if its input is a binary
string in~$\B^{n\cdot\left\lceil \log\left|\Sigma\right|\right\rceil }$. 
\begin{corollary}
\label{cor:min-entropy-khrapchenko-lower-bound}Let $\Sigma$ be a finite
alphabet, and let $f\colon\Sigma^{n}\to\{0,1\}$ be a non-constant
Boolean function. Then $L(f)\geq\Km(f)$. 
\end{corollary}

\noindent By combining \expref{Lemmas}{lemma:ambainis-hiding}
\expref{and}{lemma:Khr-Amb}, it also follows
that the min-entropy Khrapchenko bound too can be used for constructing
hiding projections.
\begin{corollary}
\label{cor:min-khrapchenko-hiding}Let $\Sigma$ be a finite alphabet,
let $f\colon\Sigma^{n}\to\B$ be a non-constant Boolean function.
There is a $q$-hiding projection $\p$ to a single variable~$y$
such that $q=\sqrt{1/\Km(f)}$ and such that $f|_{\p}$~is a non-constant
function with probability~$1$.
\end{corollary}

\subsection{Lower bound on the surjectivity function}  %

In order to demonstrate the usefulness of the min-entropy Khrapchenko
bound, we use it to prove a lower bound on the surjectivity function
from \expref{Section}{subsec:AC0-bounds}. This implies a similar bound on the
soft-adversary bound of the surjectivity function, and provides an
alternative way for proving our main theorem.
\begin{repdefinition}{\expref{Definition}{def:surjectivity}}
Let $\Sigma$ be a finite alphabet, and let $r\in\N$ be such that
$r\ge\left|\Sigma\right|$. The \emph{surjectivity function} $\surj_{\Sigma,r}\colon\Sigma^{r}\to\{0,1\}$
interprets its input as a function from $\left[r\right]$ to $\Sigma$,
and outputs whether the function is surjective. In other words, $\surj(\sigma_{1},\ldots,\sigma_{r})=1$
if and only if every symbol in~$\Sigma$ appears in~$(\sigma_{1},\ldots,\sigma_{r})$.
\end{repdefinition}

\begin{lemma}
\label{lemma:surj-khrap}For every $s\in\N$, we have $\Km(\surj_{\left[2s+1\right],3s+1})=\Omega\bigl(s^{2}\bigr)$. 
\end{lemma}

\begin{proof}
Let $s\in\N$, let $\Sigma=\left[2s+1\right]$, and let $\surj=\surj_{\Sigma,3s+1}$.
We define a Khrapchenko distribution~$(\a,\b)$ for~$\surj$ as
follows. The input~$\b\in\surj^{-1}(0)$ is a uniformly distributed
string in~$\Sigma^{3s+1}$ in which $s+1$~of the symbols in~$\Sigma$
appear exactly twice, $s-1$~of the symbols in~$\Sigma$ appear
exactly once, and the remaining symbol in~$\Sigma$ does not appear
at all. The string~$\a$ is sampled by choosing uniformly at random
a coordinate~$\i$ such that $\b_{\i}$ is one of the symbols that
appear twice in~$\b$, and replacing $\b_{\i}$ with the unique symbol
in~$\Sigma$ that does not appear in~$\b$.

We turn to bound $\Hm(\i\mid\b)$ and $\Hm(\i\mid\a)$. First, observe
that conditioned on any choice of~$\b$, the coordinate~$\i$ is
uniformly distributed among $2s+2$ coordinates, and therefore 
\[
\Hm(\i\mid\b)=\log(2s+2).
\]
In order to bound $\Hm(\i\mid\a)$, observe that $\a$ is distributed
like a uniformly distributed string in~$\Sigma^{3s+1}$ in which
$s+1$~of the symbols in~$\Sigma$ appear exactly once, and the
remaining $s$~symbols in~$\Sigma$ appear exactly twice. Conditioned
on any choice of~$\a$, the coordinate~$\i$ is uniformly distributed
over the $s+1$ coordinates of symbols that appear exactly once. It
follows that 
\[
\Hm(\i\mid\a)\ge\log(s+1).
\]
We conclude that
\[
\Km(\surj_{n})\ge2^{\Hm(\i\mid\a)+\Hm(\i\mid\b)}\ge(s+1)\cdot(2s+2)=\Omega\!\left(s^{2}\right).\qedhere
\]
\end{proof}
\expref{Lemma}{lemma:surj-khrap} implies, via 
\expref{Corollary}{cor:min-entropy-khrapchenko-lower-bound}
a quadratic lower bound on the formula complexity of~$\surj$. The
same result also follows from the lower bound on the adversary bound
of~$\surj$ due to~\cite{BM12} (\expref{Lemma}{lemma:surj-amb}).

\subsubsection{Comparison to the Khrapchenko bound}  %

We now further compare the min-entropy 
Khrap\-chen\-ko %
bound to the original
Khrapchenko bound. Observe that when $\Sigma=\B$, the min-entropy
$\Hm(\i\mid\a)$ is exactly the min-entropy of a random neighbor of~$\a$.
In particular, when $(\a,\b)$ is the uniform distribution over a
set of edges, the min-entropy $\Hm(\i\mid\a)$ is the logarithm of
the minimal degree of a vertex~$a$. Moreover, if the latter set
of edges also induces a regular graph, then the measure $2^{\Hm(\i\mid\a)+\Hm(\i\mid\b)}$
coincides exactly with the original measure~$\frac{\left|E(A,B)\right|^{2}}{\left|A\right|\cdot\left|B\right|}$
of Khrapchenko. More generally, when $\Sigma=\B$, the bound $\Km(f)$
is within a constant factor of the original Khrapchenko bound, as
we show in \expref{Section}{sec:K-Km}. 
\begin{proposition}
\label{prop:K-Km}For any non-constant function $f\colon\B^{n}\to\B$ it
holds that 
\[
\tfrac{\K(f)}{4}\le\Km(f)\le\K(f).
\]
 
\end{proposition}

Unfortunately, when $\Sigma$ is a larger alphabet, the connection
between $\Km(f)$ and the measure $\frac{\left|E(A,B)\right|^{2}}{\left|A\right|\left|B\right|}$
is not so clean. Specifically, the min-entropy $\Hm(\i\mid\a)$ has
no clear connection to the degree of~$\a$, since the vertex~$\a$
may have multiple neighbors that correspond to the same coordinate~$\i$.

\subsubsection{Previous work on the Khrapchenko bound}  %

Several versions of the Khrapchenko bound appeared in the literature:
Zwick~\cite{Z91} generalized the Khrapchenko bound such that different
input coordinates can be given different weights, and Koutsoupias
\cite{K93} gave a spectral generalization of the bound. The paper
of H{\aa}stad~\cite{H98} observed that his analogue of 
\expref{Lemma}{lemma:shrinkage-to-single-literal}
can be viewed as a generalization of the Khrapchenko bound. Ganor,
Komargodski, and Raz~\cite{GKR12} considered a variant of the Khrapchenko
bound in which the edges of the Boolean hypercube are replaced with
random walks on the noisy hypercube. Of particular relevance is a
paper of Laplante, Lee, and Szegedy~\cite{LLS06} that defined a
complexity measure that is very similar to our min-entropy Khrapchenko
bound, except that the entropy is replaced with Kolmogorov complexity.

\subsubsection{An entropy Khrapchenko bound}  %

It is possible to generalize the complexity measure $\Km$ by replacing
the min-entropy in \expref{Definition}{def:min-entropy-khrapchenko} 
with Shannon entropy.
Such a measure would still %
give a lower bound on %
formula complexity --- specifically,
the proof of \expref{Corollary}{cor:min-entropy-khrapchenko-lower-bound} 
would go through
without a change. However, we do not know how to use such a measure
for constructing hiding projections as in 
\expref{Corollary}{cor:min-khrapchenko-hiding}.
We note that it is easy to prove that such a measure is an upper bound
on~$\Km$.

\section{\label{sec:K-Km}Proof of \expref{Propositions}{prop:K-Km}
  \expref{and}{prop:K-Amb}}

In this appendix we prove \expref{Propositions}{prop:K-Km}
  \expref{and}{prop:K-Amb}, restated next. 
\begin{restated}{\expref{Proposition}{prop:K-Km}}
For any $f\colon\B^{n}\to\B$ we have $\tfrac{\K(f)}{4}\le\Km(f)\le\K(f)$. 
\end{restated}

\begin{restated}{\expref{Proposition}{prop:K-Amb}}
For any $f\colon\B^{n}\to\B$ we have $\sqrt{\tfrac{\K(f)}{4}}\le\Amb(f)$. 
\end{restated}

We relate $\K(f)$ to $\Km(f)$ using an auxiliary measure $\Kmin(f)$:
\[
\Kmin(f)=\max_{A\subseteq f^{-1}(1),B\subseteq f^{-1}(0)}\left(\min_{a\in A}{|E(a,B)|}\cdot\min_{b\in B}{|E(A,b)|}\right),
\]
where $E(a,B)=E(\{a\},B)$ and similarly $E(A,b)=E(A,\{b\})$.

We first show that $\Kmin(f)$ and $\K(f)$ are equal up to constants.
\begin{claim}
\label{claim:K-Kmin} For any $f\colon\Sigma^{n}\to\B$ we have $\tfrac{\K(f)}{4}\le\Kmin(f)\le\K(f)$. 
\end{claim}

\begin{proof}
The inequality $\Kmin(f)\le\K(f)$ is simple since $\min_{a\in A}{|E(a,B)|}\le|E(A,B)|/|A|$
and similarly $\min_{b\in B}{|E(A,b)|}\le|E(A,B)|/|B|$.

The other direction is more subtle. For ease of notation, for any
sets $A\subseteq f^{-1}(1)$ and $B\subseteq f^{-1}(0)$ we 
write %
$\K(A,B)=\frac{|E(A,B)|^{2}}{|A||B|}$. Thus, 
\begin{equation}
\K(f)=\max_{A\subseteq f^{-1}(1),B\subseteq f^{-1}(0)}\K(A,B).\label{eq:K(AB)}
\end{equation}
Assume that $A$ and $B$ are sets that maximize $\K(A,B)$ in \expref{Equation}{eq:K(AB)}.
We show that 
\begin{align}
\forall{a\in A}\colon & \quad|E(a,B)|\ge\tfrac{|E(A,B)|}{2|A|},\label{eq:min deg A}\\
\forall{b\in B}\colon & \quad|E(A,b)|\ge\tfrac{|E(A,B)|}{2|B|}.\label{eq:min deg B}
\end{align}
In words, the min-degree is at least half the average degree. Before
showing why \expref{Equations}{eq:min deg A}
\expref{and}{eq:min deg B} hold, we show that they
imply the statement of the claim. Indeed, 
\[
\Kmin(f)\ge\left(\min_{a\in A}{|E(a,B)|}\cdot\min_{b\in B}{|E(A,b)|}\right)\ge\tfrac{|E(A,B)|}{2|A|}\cdot\tfrac{|E(A,B)|}{2|B|}=\tfrac{\K(A,B)}{4}=\tfrac{\K(f)}{4}.
\]

It remains to show that \expref{Equations}{eq:min deg A}
\expref{and}{eq:min deg B} hold. We
focus on \expref{Equation}{eq:min deg A} due to symmetry. Assume by contradiction
that there exists an $a\in A$ for which 
\[
|E(a,B)|<\tfrac{|E(A,B)|}{2|A|}.
\]
It must be the case that $|A|>1$, as otherwise $A$ contains only
one element and $|E(a,B)|=|E(A,B)|/|A|$ for this element. Consider
now the set $A'=A\setminus\{a\}$ --- which is non-empty by the above
discussion. We claim that $\K(A',B)>\K(A,B)$, contradicting the choice
of $A,B$. Indeed, 
\begin{align*}
\K(A',B) & =\frac{|E(A',B)|^{2}}{|A'||B|}\\
 & =\frac{(|E(A,B)|-|E(a,B)|)^{2}}{(|A|-1)|B|}\\
 & >\frac{(|E(A,B)|-\frac{|E(A,B)|}{2|A|})^{2}}{(|A|-1)|B|} & \text{(By assumption on \ensuremath{a})}\\
 & =\frac{|E(A,B)|^{2}\cdot(1-\frac{1}{2|A|})^{2}}{|A||B|\cdot(1-\frac{1}{|A|})}\\
 & =\K(A,B)\cdot\frac{(1-\frac{1}{2|A|})^{2}}{(1-\frac{1}{|A|})}\\
 & >\K(A,B). & \qedhere
\end{align*}
\end{proof}
It turns out that over the binary alphabet, $\Kmin(f)$ and $\Km(f)$
coincide. 
\begin{claim}
\label{claim:Km-Kmin} For any $f\colon\B^{n}\to\B$ we have $\Km(f)=\Kmin(f)$. 
\end{claim}

\begin{proof}
We first show that $\Km(f)\ge\Kmin(f)$. Let $A,B$ be sets that maximize
the expression $\left(\min_{a\in A}{|E(a,B)|}\cdot\min_{b\in B}{|E(A,b)|}\right)$.
We take $(\a,\b)$ to be a uniformly distributed pair in $E(A,B)$.
It is not hard to see that 
\[
2^{\Hm(\i\mid\a)}=2^{\Hm(\b\mid\a)}=\min_{a\in A}{|E(a,B)|},
\]
and similarly $2^{\Hm(\i\mid\b)}=\min_{b\in B}{|E(A,b)|}$. We thus
get $\Km(f)\ge\Kmin(f)$.

Next, we show the other direction, $\Kmin(f)\ge\Km(f)$. Let $(\a,\b)$
be a random variable distributed according to a Khrapchenko distribution
for $f$ that attains the maximum of $2^{\Hm(\a\mid\b)+\Hm(\b\mid\a)}$
over all such distributions. Let $A:=\supp(\a)$ and $B:=\supp(\b)$
be the supports of $\a$ and $\b$, respectively. By definition of
$\Hm$ we have $2^{\Hm(\a\mid\b)}=\frac{1}{\max_{a,b}\Pr[\a=a\mid\b=b]}$.
Rearranging, we get that for any $b$ in the support of $\b$ it holds
that 
\[
\Pr[\a=a\mid\b=b]\le1/2^{\Hm(\a|\b)}.
\]
In particular, since theses probabilities sum to $1$, it must be
the case that there are at least $2^{\Hm(\a\mid\b)}$ neighbors of
$b$ in $A$ --- \ie, $|E(A,b)|\ge2^{\Hm(\a\mid\b)}$ for all $b\in B$.
Similarly, $|E(a,B)|\ge2^{\Hm(\b\mid\a)}$ for all $a\in A$. The
two sets $A$ and $B$ show that $\Kmin(f)\ge2^{\Hm(\a\mid\b)}\cdot2^{\Hm(\b\mid\a)}=\Km(f)$. 
\end{proof}
\expref{Proposition}{prop:K-Km} follows by combining these two claims. 
\expref{Proposition}{prop:K-Amb} follows
from \expref{Claim}{claim:Km-Kmin} using the following simple observation.
\begin{claim}
\label{claim:Amb-Kmin} For any $f\colon\B^{n}\to\B$ we have $\Amb(f)\geq\sqrt{\Kmin(f)}$. 
\end{claim}

\begin{proof}
Let $A,B$ be sets that attain $\Kmin(f)$. Define $R$ to be the
set of pairs $(a,b)\in A\times B$ that differ at a single coordinate.
Thus $|R(a,B)|=|E(a,B)|$ and $|R(A,b)|=|E(A,b)|$. Moreover, $|R_{i}(a,B)|\leq1$
since there is a unique $b$ that differs from $a$ only on the $i$-th
coordinate. Similarly, $|R_{i}(A,b)|\leq1$. The claim now immediately
follows by comparing the definitions of $\Kmin(f)$ and $\Amb(f)$. 
\end{proof}

\subsection*{Acknowledgements}

A.\,T.\ would like to thank Igor Carboni Oliveira for bringing the question
of proving formula size lower bounds for $\ACz$ to his attention.
We are also grateful to Robin Kothari for posing this open question
on ``Theoretical Computer Science Stack Exchange'' \cite{K11},
and to Kaveh Ghasemloo and Stasys Jukna for their feedback on this
question. We would like to thank Anna G{\'{a}}l for very helpful discussions,
and Gregory Rosenthal for helpful comments on the manuscript. Finally,
we are grateful to anonymous referees for comments that improved the
presentation of this work.

This work was partly carried out while the authors were visiting the
Simons Institute for the Theory of Computing in association with the
DIMACS/Simons Collaboration on Lower Bounds in Computational Complexity,
which is conducted with support from the National Science Foundation.

\nocite{conf-version}

\bibliographystyle{tocplain}
\bibliography{bibstrings,v019a007,bibtail}

\begin{tocauthors}
\begin{tocinfo}[filmus]
 Yuval Filmus\\
 Associate Professor\\
 Technion --- Israel Institute of Technology\\
 The Henry and Marilyn Taub Faculty of Computer Science\\
 Faculty of Mathematics\\
 Haifa 3200003, Israel\\
 yuvalfi\tocat{}cs\tocdot{}technion\tocdot{}ac\tocdot{}il \\   %
 \url{https://yuvalfilmus.cs.technion.ac.il/}      %
\end{tocinfo}

\begin{tocinfo}[meir]
 Or Meir\\
 Associate Professor\\
 Department of Computer Science\\
 University of Haifa\\
 Haifa 3303220, Israel\\
 ormeir\tocat{}cs\tocdot{}haifa\tocdot{}ac\tocdot{}il \\   %
 \url{https://cs.haifa.ac.il/~ormeir/}      %
\end{tocinfo}

\begin{tocinfo}[tal]
 Avishay Tal\\
 Assistant Professor\\
 Department of Electrical Engineering and Computer Sciences\\
 University of California, Berkeley\\
 Berkeley, CA 94720, United States\\
 atal\tocat{}berkeley\tocdot{}edu \\ 
 \url{https://www.avishaytal.org/}
\end{tocinfo}

\end{tocauthors}

\begin{tocaboutauthors}
\begin{tocabout}[filmus] 
\textsc{Yuval Filmus} received his \phd\ from the University of Toronto under the supervision of Toni Pitassi. Subsequently, he held postdoctoral positions at the Simons Institute for the Theory of Computing and the Institute for Advanced Study.
He is interested in Boolean function analysis, complexity theory, and combinatorics.
\end{tocabout}
\begin{tocabout}[meir]
\textsc{Or Meir} received his \phd\ from the Weizmann Institute 
of  %
Science under the supervision of Oded Goldreich. Subsequently, he held postdoctoral positions at Stanford University, the Institute %
for  %
Advanced Study, and the Weizmann Institute %
of  %
Science.
He is interested in complexity theory, and in particular in circuit complexity, communication complexity, coding theory, probabilistic proof systems, and the theory of pseudorandomness.
\end{tocabout}
\begin{tocabout}[tal]
\textsc{Avishay Tal} received his M.\,Sc. from the Technion under the supervision of Amir Shpilka, and his \phd\ from the Weizmann Institute of Science under the supervision of Ran Raz. Subsequently, he held postdoctoral positions at the Institute for Advanced Study, the Simons Institute for the Theory of Computing, and Stanford University.
He is interested in Boolean function analysis, complexity theory and combinatorics, and in particular in circuit complexity, query complexity, computational learning theory, quantum complexity, and the theory of pseudorandomness.
\end{tocabout}

\end{tocaboutauthors}

\end{document}